\newcommand{\rr}{\mathbb{R}}
\newcommand{\rrplus}{\rr^+}
\newcommand{\nn}{\mathbb{N}}
\newcommand{\ep}{\epsilon}
\newcommand{\ra}{\rightarrow}
\newcommand{\hide}[1]{}
\newcommand{\expect}[1]{\mathbb{E}\left[ #1 \right]}
\newcommand{\expectsub}[2]{\mathbb{E}_{#1}\left[ #2 \right]}
\renewcommand{\bold}[1]{\textbf{#1}}
\newcommand{\parabold}[1]{
	
\medskip
	
\noindent\bold{#1}}
\DeclareMathOperator*{\argmax}{arg\,max}
\newcommand{\difft}[1]{\frac{\mathsf{d} #1}{\mathsf{d} t}}
\newtheorem{theorem}{Theorem}
\newtheorem{lemma}[theorem]{Lemma}
\newtheorem{corollary}[theorem]{Corollary}
\newtheorem{proposition}[theorem]{Proposition}
\newtheorem{definition}{Definition}
\newtheorem{obs}[theorem]{Observation}
\newcommand{\calC}{\mathcal{C}}
\newcommand{\calG}{\mathcal{G}}
\newcommand{\calH}{\mathcal{H}}
\newcommand{\calM}{\mathcal{M}}
\newcommand{\calO}{\mathcal{O}}
\newcommand{\calP}{\mathcal{P}}
\newcommand{\calT}{\mathcal{T}}
\newcommand{\calV}{\mathcal{V}}
\newcommand{\calZ}{\mathcal{Z}}
\newcommand{\bbg}{\mathbf{g}}
\newcommand{\bbh}{\mathbf{h}}
\newcommand{\bbp}{\mathbf{p}}
\newcommand{\bbq}{\mathbf{q}}
\newcommand{\bbs}{\mathbf{s}}
\newcommand{\bbu}{\mathbf{u}}
\newcommand{\bbv}{\mathbf{v}}
\newcommand{\bbx}{\mathbf{x}}
\newcommand{\bby}{\mathbf{y}}
\newcommand{\bbA}{\mathbf{A}}
\newcommand{\bbB}{\mathbf{B}}
\newcommand{\bbC}{\mathbf{C}}
\newcommand{\bbG}{\mathbf{G}}
\newcommand{\bbH}{\mathbf{H}}
\newcommand{\bbI}{\mathbf{I}}
\newcommand{\bbJ}{\mathbf{J}}
\newcommand{\bbK}{\mathbf{K}}
\newcommand{\bbL}{\mathbf{L}}
\newcommand{\bbM}{\mathbf{M}}
\newcommand{\bbP}{\mathbf{P}}
\newcommand{\bbQ}{\mathbf{Q}}
\newcommand{\bbT}{\mathbf{T}}
\newcommand{\bbU}{\mathbf{U}}
\newcommand{\bbZ}{\mathbf{Z}}
\newcommand{\inner}[2]{\left\langle ~#1 ~,~ #2~\right\rangle}
\newcommand{\trans}{^{\mathsf{T}}}
\newtheorem{theorem}{Theorem}[section]
\newtheorem{lemma}{Lemma}[section]
\newtheorem{definition}{Definition}[section]
\newtheorem{corollary}{Corollary}[section]
\begin{document}
% Title portion. Note the short title for running heads
\title{Chaos of Learning Beyond Zero-sum and Coordination\\
via Game Decompositions}
%\author{author list etc.\ needed}
\author[1]{Yun Kuen Cheung\thanks{Yun Kuen Cheung acknowledges AcRF Tier 2 grant 2016-T2-1-170 and NRF 2018 Fellowship NRF-NRFF2018-07.}}
\author[2]{Yixin Tao \thanks{Yixin Tao acknowledges NSF grant CCF-1527568 and CCF-1909538.}}
\affil[1]{Singapore University of Technology and Design}
\affil[2]{Courant Institute, NYU}
\date{}
\maketitle

\vspace*{-0.2in}

\begin{abstract}
Machine learning processes, e.g.~``learning in games'', can be viewed as non-linear dynamical systems.
In general, such systems exhibit a wide spectrum of behaviors, ranging from stability/recurrence to the undesirable phenomena of chaos (or ``butterfly effect'').
Chaos captures sensitivity of round-off errors and can severely affect predictability and reproducibility of ML systems,
but AI/ML community's understanding of it remains rudimentary. It has a lot out there that await exploration.

Recently, Cheung and Piliouras~\cite{CP2019,CP2020} employed volume-expansion argument to show that Lyapunov chaos occurs in the cumulative payoff space,
when some popular learning algorithms, including Multiplicative Weights Update (MWU), Follow-the-Regularized-Leader (FTRL) and Optimistic MWU (OMWU),
are used in several subspaces of games, e.g.~zero-sum, coordination or graphical constant-sum games.
It is natural to ask: can these results generalize to much broader families of games?
We take on a game decomposition approach and answer the question affirmatively.

Among other results, we propose a notion of ``matrix domination'' and design a linear program,
and use them to characterize bimatrix games where MWU is Lyapunov chaotic almost everywhere.
Such family of games has positive Lebesgue measure in the bimatrix game space, indicating that chaos is a substantial issue of learning in games.
For multi-player games, we present a \emph{local equivalence of volume change} between general games and graphical games,
which is used to perform volume and chaos analyses of MWU and OMWU in potential games.
\end{abstract}

% note that the abstract must come before \maketitle
\section{Introduction}

In the developments of AI/ML, understanding how selfish agents learn in competitive game-theoretic environments is of primary interest,
and this is more strongly propelled recently due to the success of Generative Adversarial Networks (GANs).
As such, Evolutionary Game Theory (EGT)~\cite{HS1998,Sandholm2010}, a decades-old area devoted to the study of adaptive (learning) behaviors of agents
in competitive environments arising from Economics, Biology and Physics, has been brought to the attention of AI/ML community.
In contrast with the typical optimization (or no-regret) approach in AI/ML,
EGT provides us a (non-linear) dynamical-systemic perspective to understand ML processes.
This perspective is particularly helpful in studying ``learning in games'', where instability is commonly observed,
but AI/ML community currently lacks of a rigorous mean to perform the relevant analyses.

The theme of this paper is \emph{chaos}, a central notion in the study of dynamical systems that captures instability and unpredictability.
%More specifically,
We seek broad families of games in which popular learning algorithms exhibit chaotic behaviors, by employing game decomposition techniques.
%In the next few paragraphs
Next, we explain what are chaos and game decompositions, and why it is important for AI/ML community to understand chaos.
%A discussion of further related work and
All missing proofs will appear in the appendix.%\vspace*{-0.1in}%Appendix~\ref{app:relatedwork}.

\parabold{Chaos.}
Chaos generally means a system becomes unpredictable in the long run; \emph{Lyapunov chaos} is one of the most popular chaos notions which captures the \emph{butterfly effect}:
when the starting point of a dynamical system is slightly perturbed, the resulting trajectories and final outcomes diverge quickly;
see Definition~\ref{def:lyapunov-chaos} for a formal definition.
Lyapunov chaos means that such system is very sensitive to round-off errors in computer simulations, and to measurement errors in real economies.
Indeed, Edward Lorenz, one of the pioneers of the modern chaos theory, started working on the topic because he found round-off errors led to devastatingly different outcomes
in his weather simulation program~\cite{Lorenz1963}.
In the context of learning-in-game, Lyapunov chaos indicates that Nash equilibrium is generally not achievable.

To see the importance for AI/ML community to understand chaos, recall that one of our primary targets is to build \emph{predictable} and \emph{reproducible} ML systems.
The soundness of a newly proposed ML system is usually supported by a theory, which is in turn supported via experimental evidences.
However, there is often a gap between theory and experiments: theory is built upon the assumption of infinite precision,
but experiments are done using \emph{finite-precision} computers where round-off occurs in every computation step.
While the round-off error per step is small, it is unclear how it can affect the final outcome of the system, let alone the cumulating effect of all the errors across multiple steps.
Moreover, how the round-off is done depends on the OS, the CPU/GPU architecture, the compiler and more, which vary from computer to computer,
so reproducing the same result across multiple computers is not as trivial as some of us had presumed.
These unpredictability and irreproducibility issues are exemplified by two quotes from Ali Rahimi's NIPS'2017 test-of-time award speech~\cite{Rahimi2017}:
\begin{quote}
\emph{``Someone on another team changed the default rounding mode of some Tensorflow internals from `truncate toward zero' to `round to even'.
Our training broke, our error rate went from less than $25\%$ error to $\sim 99.97\%$ error.''}

\emph{``If a machine learning algorithm does crazy things on a linear model, it's going to do crazy things on complex non-linear models too.''}
\end{quote}

\parabold{Game Decomposition.}
To understand how game decomposition works, we first compare the dynamical-systemic and optimization approaches. % of analyzing learning in games.
Historically, there is a trade-off of them between generality of settings and scope of results.\footnote{Game-theoretically,
%the notions of
coarse correlated equilibria (generated by time-average of no-regret algorithm)
and Nash equilibria (often fixed points of game dynamics)
might be viewed as the %distinctive
products of the two approaches.}
No-regret learning works even in general and adversarial settings, but its effectiveness is benchmarked w.r.t.~the \emph{average of history},
while it sheds little insight on the daily behaviors of the learning-in-game systems.
The dynamical-systemic approach primarily aims at understanding the daily behaviors,
but compelling results are often limited to specific families of games because non-linear dynamical systems are inherently difficult to analyze in general.

A natural approach to extend those compelling results to more general families of games is via game decomposition.
To explain how it works, suppose there is a specific family of games, denoted by $\calH$, for which some compelling results are shown.
Given a general game, we seek to decompose it into a sum of its \emph{projection} on $\calH$ and a \emph{residue} component.
If the residue is small, then it is plausible that those compelling results extend (approximately).
For instance, if the residue is small, then any Nash equilibrium of the projection is an approximate Nash equilibrium of the original game.
More generally, a game might be decomposed into three or more components which can be studied separately.
In seeking of games where learning is stable, the game decomposition approach was used in several
works~\cite{Candogan2011,Candogan2013-GEB,Candogan2013-TEAC,Letcher2019} with $\calH$ being potential games.
While we also employ game decomposition technique, our target is in the opposite end of those of the cited works above,
which is to seek broad families of games where popular learning algorithms are Lyapunov chaotic.%\vspace*{-0.08in}

%\red{\bold{In ``Our Motivations and Contributions'', we should have a forward pointer to every \emph{major} result we have. This will make the abstract and the general discussion part of the introduction occupies around 1.8 pages, then stating our major results will occupy around 1 page. This composition is fairly reasonable. I have also moved some parts of the preliminary to latter sections, so that the preliminary will be shorter too.}}
%\blue{Agree.}

\parabold{Our Contributions.}
Our starting point is the recent works of Cheung and Piliouras~\cite{CP2019,CP2020}.
They considered a classical technique in the study of dynamical systems, called volume analysis.
Volume analysis considers a set of starting points of positive volume (e.g.~a ball centred at a point).
When this set of starting points evolves according to the rule of dynamical system, it evolves to a new set with a different volume.
Intuitively, volume is a measure of the range of possible outcomes, so the larger it is, the more unpredictable the system is.
Cheung and Piliouras observed that if the set's volume increases exponentially, then its diameter increases exponentially too, which implies Lyapunov chaos.

Cheung and Piliouras employed the volume-expansion argument to show that
Multiplicative Weights Update (MWU) and Follow-the-Regularized-Leader (FTRL) algorithms in two-player zero-sum
and graphical constant-sum games are Lyapunov chaotic everywhere\footnote{By \emph{everywhere}, it means the results hold in any bounded region of the dual space,
for any sufficiently small step-size of the algorithm.
There is also a very mild requirement on \emph{non-triviality} of the game.}%the zero-sum or coordination game, which is satisfied so long as the rank of the game matrix is $3$ or more.}
in the cumulative payoff (dual) space; analogous result holds for the optimistic variant of MWU (OMWU) in two-player coordination games.
This indicates that when players repeatedly play the game by using the learning algorithms,
when the initiating condition is slightly perturbed, the cumulative payoffs exhibit a wide range of possibilities in the long run.
This implies instability in the mixed strategy (primal) space.
%Indeed, the volume-expansion argument demonstrates an unpleasant \emph{unavoidability} phenomena too, which can be briefly described as:
%for any set strictly within the interior of the mixed strategy space, MWU in two-player zero-sum game can never stay within this set forever;
%if you think of the set as the collection of \emph{good} points (with your own definition for ``good''), then its complement, i.e.~the \emph{bad} points, is unavoidable.

The volume-expansion argument crucially relies on analyzing the sign of a function $C(\cdot)$ in the dual space.
Volume-expansion with MWU/FTRL (resp.~OMWU) occurs if and only if $C(\cdot)$ is positive (resp.~negative).
Cheung and Piliouras proved that $C(\cdot)$ is always positive in zero-sum games and $C(\cdot)$ is always negative in coordination games.
%\red{\bold{Marco: I have a minor concern about this sentence. Currently, the function $C$ depends both on the games and on the algorithm. But the definition of $C$ is indeed depending on the game only.}} \blue{I prefer to not relate $C$ to a particular learning method. Maybe we say \emph{With MWU and FTRL, volume-expansion happens if and only if $C(\cdot)$ is positive and, in contrast, with OWMU, volume-expansion happens if and only if $C(\cdot)$ is negative. They also show $C(\cdot)$ is always positive in zero-sum games, and always negative in coordination games. This implies MWU and FTRL will be volume-expansion in zero-sum game and OMWU will be volume-expansion in coordination game.}}
It is natural to ask: are these chaos results isolated in the sense that they hold only due to the very specific structures of those games,
or do these chaos results generalize broadly?
In the other way around, we ask the following question:%\vspace*{-0.06in}%, which is the main theme of this paper:
\begin{center}
\emph{How does the Lyapunov chaos phenomena of learning extend beyond}\\
\emph{two-player zero-sum games, two-player coordination games}\\
\emph{and multi-player graphical constant-sum games?}%\vspace*{-0.05in}
\end{center}
We answer the above question affirmatively, both for two-player and multi-player settings.

\smallskip

\noindent \emph{Two-player normal form games.}
For the family of two-player normal-form games (bimatrix games) $\calG$, we present two new techniques to go beyond zero-sum game and coordination games.

\textendash The first technique is the well-known direct-sum decomposition $\calG = \calZ \oplus \calC$, in which every bimatrix game $(\bbA,\bbB)$ is decomposed into the sum of a zero-sum game $(\bbZ,-\bbZ)$ and a coordination game $(\bbC,\bbC)$~\cite{BasarHo1974,Kalai2009}. We show that $C_{\bbG}(\cdot)$, the function that determines if the volume is expanding or not in the game $\bbG$, is a simple sum of the zero-sum game part and the coordination part:  $C_{\bbG}(\cdot) = C_{(\bbZ,-\bbZ)}(\cdot) + C_{(\bbC,\bbC)}(\cdot)$ (Theorem~\ref{lem::game::decomposition}).
Recall %from the previous paragraph
we have discussed that $C_{(\bbZ,-\bbZ)}(\cdot)$ is always positive (as $(\bbZ,-\bbZ)$ is a zero sum game)
and $C_{(\bbC,\bbC)}(\cdot)$ is always negative (as $(\bbC,\bbC)$ is a coordination game).
Thus, if $C_{(\bbZ,-\bbZ)}(\cdot)$ is always relatively larger than $-C_{(\bbC,\bbC)}(\cdot)$, then we always have volume expansion in $\bbG$.

\textendash The second technique is the trivial matrices (see Definition~\ref{def::trivial::matrix}).  Intuitively, trivial matrices are a set of matrices which do not affect the volume changing behavior of the game:
for any bimatrix game $(\bbA, \bbB)$ and any trivial matrices $\bbT^1$ and $\bbT^2$, $C_{(\bbA, \bbB)}(\cdot) \equiv C_{(\bbA + \bbT^1, \bbB + \bbT^2)}(\cdot)$ (Theorem~\ref{lem::trivial::matrix}).
An immediate application of trivial matrices is for bimatrix potential games~\cite{potgames}. For any bimatrix potential game $\bbP$,
$C_{\bbP}(\cdot)$ is identical to $C_{(\bbC,\bbC)}(\cdot)$ for some coordination game $(\bbC,\bbC)$, which implies OMWU is Lyapunov chaotic everywhere in $\bbP$ (Observation~\ref{obs::potential::bimatrix}).

By using these two techniques, we identify two characterizations of bimatrix games where MWU and FTRL are Lyapunov chaotic almost everywhere (Theorem~\ref{thm:matrix-strict-domination} and Theorem~\ref{thm::linear::program}). % Theorem).
These new characterizations are based on our new notion of \emph{matrix domination} (see Definition~\ref{def::matrix::domination}),
and a linear program (see Eqn.~\eqref{opt::linear::program}) which is designed to prune out the trivial-matrix projection and keep the residue part minimal.
Such family of games has positive Lebesgue measure in the bimatrix game space, so it is not confined to any proper game subspace\footnote{The family of zero-sum games
and the family of coordination games are proper subspaces of the bimatrix game space. Any proper subspace has Lebesgue measure zero.}.
This provides a justification to the claim that the occurrences of chaos are not only circumstantial, but a rather substantial issue of learning in games.
%-- in particular, such games include any game in $\calN_z \oplus \calT_z \oplus \calT_c$ where the $\calN_z$ component is non-zero. \blue{[I don't like to mention this subset $\calN_z \oplus \calT_z \oplus \calT_c$ as this looks too small]}\red{\bold{[Okay. I think we discussed about rewriting this part in Skype, about not pointing out the four-part decomposition. Do you want to initiate the rewriting?]}}
%Indeed, such family of games have positive measure in $\calG$ (where $\calG$ equipped with the standard Lebesgue measure), so it is not confined to any proper game subspace.
Analogous result holds for OMWU too.%, and applies for any game in $\calN_c \oplus \calT_c \oplus \calT_z$ where the $\calN_c$ component is non-zero.

\smallskip

\noindent \emph{Multi-player normal-form games.} For the family of multi-player ($3$ or more players) games,
we first use an observation in~\cite{CP2019}, coupled with our new findings about bimatrix games discussed before,
to present a new family of graphical games in which MWU is Lyapunov chaotic almost everywhere (Theorem~\ref{thm:new-family-chaos-game}); the new family of games strictly includes all graphical constant-sum games.

To facilitate volume analyses in general normal-form games, we establish their \emph{local equivalence of volume change}
%between %general multi-player normal-form games and
with graphical games.
Precisely, we show that $C_{\bbG}(\bbp)$ for a general game $\bbG$ is the same as $C_{\bbH}(\bbp)$ for some graphical game $\bbH$;
$\bbH$ will depend on the point $\bbp$, that's why we say the equivalence is \emph{local} (Theorem~\ref{thm:equivalence}).
This provides an intuitive procedure for understanding volume changes.
Additionally, we show that the volume-changing behaviors of MWU and OMWU are opposite to each other in multi-player game (Proposition~\ref{pr:opposite-multi}).
We use these to analyze MWU and OMWU in multi-player potential games;
in particular, we show that $C_{\bbG}(\bbp)$ of a multi-player potential games is equal to $C_{\bbC}(\bbp)$ of a corresponding multi-palyer coordination game, while $C_{\bbC}(\bbp) \leq 0$ (Lemma~\ref{lem::multi::potential}).

%Due to the space limit, All proof are deferred to the appendix (in Supplementary Files).
%\red{[Yixin, anything we can say about this here?]}
%\red{\bold{We need to decide on ``$N$-person/player'' or ``multi-player''; I am fine with either of them.}} \blue{I slightly prefer multi-player.}

%\red{[Any other application of the decompositions? This can enrich the paper. Any relationship with other games (say potential games)?]}

\parabold{Further Related Work.}
%\section{Further Related Work}\label{app:relatedwork}
Volume analysis has long been a technique of interest in the study of population and game dynamics.
It was discussed in a number of famous texts; see Hofbauer and Sigmund~\cite[Section 11]{HS1998}, Fudenberg and Levine~\cite[Section 3]{Fudenberg98} and Sandholm~\cite[Chapter 9]{Sandholm2010}.
For a modern overview of online learning algorithms from Machine Learning or Economics perspectives,
which includes the discussion about MWU and its variants, no-regret learning and potential games,
we recommend the texts of Cesa-Bianchi and Lugosi~\cite{Cesa06} and Hart and Mas-Collel~\cite{HM2013}.

In the study of no-regret learning (e.g.~\cite{littlestone1994weighted,FS1995}), a vast literature concerns general or even adversarial settings,
in which the online arrivals of payoff values come with no pattern or even from an adversary.
More recently, settings where the online payoffs are more well-behaved, under the term of ``predictable sequence'' coined by Rakhlin and Sridharan~\cite{RakhlinS13}, have been studied.
These settings include game dynamics, as the online payoffs are determined by the mixed strategy choices of the players, while these choices are updated gradually and somewhat predictably.
For these settings, online learning algorithms that perform particularly well, e.g.~achieving regret bound below the canonical $\calO(\sqrt{T})$ limit,
are designed and studied~\cite{HazanKale10,Chiang12,Syrgkanis15}.
For instance, Nesterov's excessive gap technique and optimistic mirror descent are found to achieve near-optimal regret $\calO(\log T)$ in zero-sum games~\cite{DDK2015,RS2013},
and thus the empirical average of the learning sequence converges to Nash equilibrium of the game (see Freund and Schapire~\cite{FS1996} for an explanation).
OMWU (with time-varying step-sizes), and more generally optimistic variant of FTRL~\cite{RakhlinS13}, are some canonical examples of such online learning algorithms.

Recently, there is a stream of work that examines how learning algorithms behave in games or min-max optimization from a dynamical-systemic perspective.
Replicator dynamics (RD; the continuous-time analogue of MWU) and continuous-time FTRL are found to achieve optimal regret in general settings~\cite{MPP2018}.
Furthermore, RD in zero-sum games or graphical constant-sum games admits a constant of motion and preserves volume;
these two properties are used to show that such dynamical systems are near-periodic~\cite{HS1998,PS2014,MPP2018,boone2019darwin},
captured rigorously under the notion of \emph{Poincar\'{e} recurrence}~\cite{Poincare1890,barreira}.
However, when MWU, the forward Euler discretization of RD, is used in discrete-time setting in zero-sum games, the near-periodicity is destroyed totally;
indeed, the system will never visit the same point (or its tiny neighbourhood) twice, converge to the boundary of the strategy simplex, and fluctuate there irregularly~\cite{BP2018,Cheung2018}.

In contrast, (discrete-time) OMWU in zero-sum game is shown to converge to Nash equilibrium~\cite{daskalakis2018last}; yet, in the more general setting of min-max optimization,
it was found that Optimistic Gradient Descent Ascent (OGDA) can have limit points other than (local) min-max solutions~\cite{daskalakis2018limit}.

Another notion of chaos called \emph{Li-Yorke chaos} was shown to exist when a variant of MWU is used in congestion games~\cite{PPP2017}.
%When this paper is accepted, for Li-Yorke chaos, we should also add my recent work with Stafanos and Georgios.

%\red{Currently jotting down key points and papers which need to be cited in bullet form.}
%\begin{itemize}
%\item Candogan et al.~(MOR, GEB, TEAC): direct-sum decomposition potential game, harmonic game; this paper also mentions some earlier decompositions
%\item Balduzzi et al.~(ICML 2018): differentiable game (domain is $\rr^d$), potential and ``Hamiltonian'' game
%\item Both work above identify a subfamily of games where classical learning algorithms are stable
%\item Our spirit is similar, but we seek games where these algorithms are \emph{chaotic}.
%\end{itemize}

%\red{Placeholder:
%\begin{itemize}
%\item Kearns, Littman and Singh (UAI 2011) about graphical games
%%\item Approximate Nash Equilibrium
%\end{itemize}
%}

\section{Preliminary}\label{sect:prelim}

In this paper, every bold lower-case alphabet denotes a vector, every bold upper-case alphabet denotes a matrix or a game.
When we say a ``game'', we always mean a normal-form game.
%We concern only normal-form games in this paper, so by a ``game'' we always mean a normal-form game.
Given $n$, let $\Delta^n$ denote the mixed strategy space of dimension $n$, i.e.~$\{(z_1,z_2,\cdots,z_n) ~|~ \sum_{j=1}^n z_j = 1\}$.

%\paragraph{Basic Facts about Vector Spaces and Linear Mapping.}\blue{[I think vector space is not essential under the current situation]}
%In a \emph{normed vector space} $V$ equipped with inner product $\inner{\bullet}{\bullet}$, suppose $W$ is a subspace of $V$ where $\dim(W) = d$.
%We say $\{\bbw_1,\cdots,\bbw_d\}$ forms an \emph{orthogonal basis} of $W$ if none of $\bbw_j$ is the zero vector,
%and for any pair of $\bbw_j\neq \bbw_k$, $\inner{\bbw_j}{\bbw_k} = 0$.
%For any $\bbv\in V$, its projection onto $W$ can be computed as
%\begin{equation}\label{eq:proj-on-subspace}
%\sum_{j=1}^d \frac{\inner{\bbv}{\bbw_j}}{\inner{\bbw_j}{\bbw_j}} \cdot \bbw_j.
%\end{equation}
%Thus, once an orthogonal basis of $W$ is described, we can compute the projection of $\bbv$ onto $W$.
%
%We say that $V$ is a \emph{direct sum} of subspaces $W_1,W_2,\cdots,W_\ell$, written as $V = W_1 \oplus W_2 \oplus \cdots \oplus W_\ell$,
%if for any two vectors $\bbw_j\in W_j$ and $\bbw_k\in W_k$ where $j\neq k$, $\inner{\bbw_j}{\bbw_k} = 0$,
%and for any vector $\bbv \in V$, there exists a unique choice of $(\bbw_1,\cdots,\bbw_\ell)\in W_1 \times \cdots \times W_\ell$
%such that $\bbv = \bbw_1 + \cdots + \bbw_\ell$.
%
%For any linear mapping $\calL:U\ra V$, its \emph{kernel}, denoted by $\ker(\calL)$, is the set $\{ u\in U | \calL(u) = \mathbf{0} \}$;
%this is a vector subspace of $U$. If $\calL$ is surjective, then $\dim(V) = \dim(U) - \dim (\ker(\calL))$.

\parabold{Normal-Form Games.}% as a Normed Vector Space.}
We use $N$ to denote the number of players of a game. Let $S_i$ denote the strategy set of Player $i$, and $S := S_1 \times S_2\times \cdots \times S_N$.
Let $n_i = |S_i|$. $\bbs = (s_1,s_2,\cdots,s_N) \in S$ denotes a strategy profile of all players, and $u_i(\bbs)$ denotes the payoff to Player $i$ when each player picks $s_i$.
A mixed strategy profile is denoted by $\bbx = (\bbx_1,\bbx_2,\cdots,\bbx_N)$,
and $u_i$ is extended to take mixed strategies as inputs via $u_i(\bbx) = \expectsub{\bbs\sim \bbx}{u_i(\bbs)}$.
Also, we let
\begin{align}
U^{i_1 i_2 \cdots i_g}_{j_1 j_2 \cdots j_g}(\bbx) =~ & \text{the expected payoff to Player $i_1$ when: for $1\le f \le g$, Player $i_f$ picks strategy $j_f$,}\nonumber\\
& \text{while for each player $i \notin \{i_1, \cdots i_g\}$, she picks a strategy randomly following $\bbx_i$}\nonumber\\
=~& \expectsub{\bbs_{-(i_1,\cdots,i_g)}\sim \bbx_{-(i_1,\cdots,i_g)}}{u_{i_1}(s_{i_1}=j_1,\cdots,s_{i_g}=j_g,\bbs_{-(i_1,\cdots,i_g)})}.\label{eq:U}
\end{align}
Note that $-(i_1, \cdots, i_g)$ denotes the player set other than $i_1, \cdots, i_g$. Also, we use $U^{i_1 i_2 \cdots i_g}_{j_1 j_2 \cdots j_g}$ if $\bbx$ is clear from the context.
We say a game is a zero-sum game if $\sum_i u_i(\bbs) = 0$ for all $\bbs\in S$, and we say a game is a coordination game if $u_i(\bbs) = u_k(\bbs)$ for all Players $i$ and $k$ and for all $\bbs \in S$.

%The family of all such $N$-player games can naturally be viewed as a real vector space of dimension $N\cdot |S|$,
%as for each strategy profile $\bbs\in S$ we specify $N$ payoffs, one for each player.
%In this vector space, we equip with the Frobenius inner product. For any two games with payoff functions $u^1,u^2$,
%\[
%\inner{u^1}{u^2} ~=~ \sum_{i=1}^N \sum_{\bbs \in S} u^1_i(\bbs) \cdot u^2_i(\bbs).
%\]

When $N=2$, such games are called \emph{bimatrix games}, for which we adopt the notations below.
Let $(\bbA,\bbB)$ denote a bimatrix game, where for any $j\in S_1$, $k\in S_2$, $A_{jk} := u_1(j,k)$, $B_{jk} := u_2(j,k)$.
$\bbx$ and $\bby$ denote mixed strategies of Players 1 and 2 respectively.
%\blue{As well known,
A bimatrix game is a zero-sum game if $\bbA = -\bbB$; it is a coordination game if $\bbA = \bbB$.
Note that $U^1_j = [\bbA \bby]_j$, $U^2_k = [\bbB\trans\bbx]_k$, which we denote by $A_j,B_k$ respectively when $\bbx,\bby$ are clear from context;
$B_j,A_k$ are defined analogously.

\hide{\paragraph{Canonical Decomposition for Bimatrix Games.}
Let $\calZ,\calC$ denote the families of zero-sum and coordination games. It is easy to check that each of them is a vector subspace of $\calG$.
%:
%$\calZ$ is the kernel of the linear map $\bbA + \bbB$, while $\calC$ is the kernel of the linear map $\bbA - \bbB$.
Furthermore, it is easy to see that $\calG = \calZ \oplus \calC$~\cite{BasarHo1974,Kalai2009}, since:
\begin{itemize}
\item every bimatrix game $(\bbA,\bbB)$ can be written uniquely in the form of $(\bbZ,-\bbZ) + (\bbC,\bbC)$, where $\bbZ = \frac 12 (\bbA - \bbB)$ and $\bbC = \frac 12 (\bbA + \bbB)$;
\item every pair of vectors $(\bbZ,-\bbZ)\in \calZ$ and $(\bbC,\bbC) \in \calC$ are orthogonal.
\end{itemize}
}

\parabold{MWU, FTRL and OMWU in Games.}
All three algorithms have a step-size $\ep$, and can be implemented as updating in the cumulative payoff (dual) space.
In each round, the players' actions (mixed strategies) in the primal space are functions of the cumulative payoff vectors to be defined below,
and these actions are then used to determine the payoffs in the next round.
For a player with $d$ strategies, let $\bbp^t\in \rr^d$ denote her cumulative payoff vector at time $t$,
and let $\bbp^0\in \rr^d$ denote the starting point chosen by the player.
For MWU in a game, the update rule for Player $i$ is
\begin{equation}\label{eq:MWU}
p_j^{t+1} ~=~ p_j^t ~+~ \ep \cdot U^i_j(\bbx^t),
\end{equation}
where $U^i_j$ is the function defined in~\eqref{eq:U}, and $\bbx^t$ is the mixed strategy determined by the formula below:
\begin{equation}\label{eq:dual-to-primal}
x_j^t ~=~ x_j(\bbp^t) ~=~ \exp(p_j^t)/({\textstyle \sum_{\ell\in S_i}} \exp(p_\ell^t))
\end{equation}

For OMWU in a game, the update rule for Player $i$ starts with $\bbp^1 = \bbp^0$, and for $t\ge 2$,
\[
p_j^{t+1} = p_j^t + \ep \cdot \left[2 U^i_j(\bbx^t) - U^i_j(\bbx^{t-1})\right],
\]
where $\bbx^t$ is determined by~\eqref{eq:dual-to-primal}.

For FTRL in a game, the update rule for Player $i$ is same as~\eqref{eq:MWU}, but $\bbx^t$ is determined as below using
a convex \emph{regularizer function} $h_i: \Delta^d \ra \rr$: $\bbx^t = \argmax_{\bbx\in \Delta^d} \left\{ \inner{\bbp^t}{\bbx} - h_i(\bbx) \right\}$.
As all the results for MWU can be directly generalized to FTRL as discussed in~\cite[Appendix D]{CP2019}, to keep our exposition simple, in the rest of this paper,
we focus on MWU and OMWU and their comparisons.
For bimatrix game, %recall that we let $\bbx,\bby$ to denote mixed strategies of Players 1 and 2 respectively;
we use $\bbp,\bbq$ to denote the cumulative payoff vectors of Players 1 and 2 respectively.

\parabold{Dynamical Systems, Lyapunov Chaos and Volume Analysis.}
A learning-in-game system can be viewed as a discrete-time dynamical system.
%In full generality, the definitions of dynamical systems and the related notions are quite mathematically involved; here
We present a simplified definition of dynamical systems that fits our need.
A discrete-time dynamical system in $\rr^d$ is determined by a starting point $\bbs(0)\in \rr^d$ and an update rule $\bbs(t+1) = f(\bbs(t))$, where $f:\rr^d \ra \rr^d$ is
a function.\footnote{OMWU in game is not a dynamical system, as the update to $\bbs(t+1)$ depends on both $\bbs(t),\bbs(t-1)$.
But there is a function $f$ such that $\bbs(t+1)\approx f(\bbs(t))$, while the volume-changing behavior is not really affected~\cite{CP2020}.}
%see Appendix~\ref{app:equivalence} for a detailed explanation.}
%In this paper, $f$ is always infinitely differentiable.
The sequence $\bbs(0),\bbs(1),\bbs(2),\cdots$ is called a \emph{trajectory} of the dynamical system.
When $f$ is clear from the context, we let $\Phi:(\nn \cup \{0\}) \times \rr^d \ra \rr^d$ denote the function
such that $\Phi(t,\bbs)$ is the value of $\bbs(t)$ generated by the dynamical system with starting point being $\bbs$.
Given a set $U\subset \rr^d$, we let $\Phi(t,U) = \{ \Phi(t,\bbs) | \bbs \in U \}$.
Let $B(\bbs,r)$ denote the open ball with center $\bbs$ and radius $r$.

There are a number of similar but not identical definitions of Lyapunov chaos, all capturing the \emph{butterfly effect}:
when the starting point is slightly perturbed, the resulting trajectories diverge quickly.
We use the following definition, which was also used in~\cite{CP2019,CP2020} implicitly.
Intuitively, a system is Lyapunov chaotic in an open set $S$ if for any $\bbs\in S$ and any open ball $B$ around $\bbs$,
as long as $\Phi(t,B)$ remains inside $S$, there exists $\bbs' \in B$ such that $\|\Phi(t,\bbs') - \Phi(t,\bbs)\|$ grows exponentially with $t$.
Lyapunov exponent in the definition is a measure of how fast the exponential growth is; the larger it is, the more unpredictable the dynamical system is.

\begin{definition}\label{def:lyapunov-chaos}
We say a dynamical system is \emph{Lyapunov chaotic} in an open set $S \subset \rr^d$ if there exists a constant $\lambda > 0$ and a \emph{Lyapunov exponent} $\gamma = \gamma(S) > 0$,
such that for any $\bbs\in S$, for any sufficiently small $\delta > 0$ and for all $t$ satisfying $0\le t < \min \{\tau| \Phi(\tau,B(\bbs,\delta))\subsetneq S \}$,
\[
\sup_{\bbs'\in B(\bbs,\delta)} ~~\|\Phi(t,\bbs') - \Phi(t,\bbs)\| ~~\ge~~ \lambda \cdot \delta \cdot \exp(\gamma t).
\]
We say a dynamical system is \emph{Lyapunov chaotic everywhere} if it is Lyapunov chaotic in any bounded open set $S\subset \rr^d$.
\end{definition}

In the above definition, all norms and radii are Euclidean norms. For capturing round-off errors in computer simulations and ML systems,
it is more natural to use $\ell_1$-norm for which $\delta$ is the round-off maximum error, say $\sim 10^{-16}$ when IEEE 754 binary64 (standard double) is used.

When $S$ is a small set, it is usually easy to determine whether a dynamical system is Lyapunov chaotic in $S$, since the dynamic can be locally approximated by a linear dynamical system,
where the eigenvalues of the local Jacobian characterizes chaotic behaviors (when $f$ is smooth).
But when $S$ is a large, determining whether Lyapunov chaos occurs is difficult in general.
Cheung and Piliouras~\cite{CP2019} found that %for certain game dynamics,
volume analysis can be useful in this regard, based on the following simple observation.

\begin{proposition}\label{pr:vol-to-diameter}
In $\rr^d$, if a set $S$ has volume at least $v$, then the radius w.r.t.~any point $\bbs \in S$ is at least $v^{1/d}/2$.
Thus, if the volume of $\Phi(t,S)$ of some dynamical system is $\Omega(\exp(\gamma t))$ for some $\lambda,\gamma > 0$,
then the radius of $S(t)$ w.r.t.~any point $\bbs\in S(t)$ is $\Omega(\exp(\frac \gamma d\cdot  t))$.
\end{proposition}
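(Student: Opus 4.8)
The plan is to prove the first sentence by a crude volume comparison with a ball, and then obtain the second sentence by plugging in $v=\lambda\exp(\gamma t)$.

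For the first sentence, fix a measurable set $S\subseteq\rr^d$ with $\mathrm{vol}(S)\ge v$ and a point $\bbs\in S$, and write $R:=\sup_{\bbs'\in S}\|\bbs'-\bbs\|$ for the radius of $S$ with respect to $\bbs$ (the claim is vacuous if $R=\infty$, so assume $R<\infty$). By the definition of $R$ we have $S\subseteq\{\bbx:\|\bbx-\bbs\|\le R\}$, the closed Euclidean ball of radius $R$ about $\bbs$, hence $\mathrm{vol}(S)\le\mathrm{vol}\bigl(\{\bbx:\|\bbx-\bbs\|\le R\}\bigr)$. First I would bound this ball volume crudely: since $\|\bbx-\bbs\|_\infty\le\|\bbx-\bbs\|_2\le R$, the ball is contained in the axis-parallel cube of side length $2R$ centred at $\bbs$, whose volume is $(2R)^d$. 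Combining, $v\le\mathrm{vol}(S)\le(2R)^d$, i.e.\ $R\ge v^{1/d}/2$, which is exactly the assertion. (One could instead use the exact ball volume $\omega_d R^d$ with $\omega_d$ the volume of the unit ball; since $\omega_d\le 2^d$ this only sharpens the constant, but the cube bound keeps the constant explicit.)

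For the second sentence, the hypothesis gives constants $\lambda,\gamma>0$ and a threshold $t_0$ with $\mathrm{vol}(\Phi(t,S))\ge\lambda\exp(\gamma t)$ for all $t\ge t_0$. Applying the first sentence to the set $\Phi(t,S)$ with $v=\lambda\exp(\gamma t)$: for every point $\bbs\in\Phi(t,S)$,
\[
\sup_{\bbs'\in\Phi(t,S)}\|\bbs'-\bbs\|\ \ge\ \frac{\bigl(\lambda\exp(\gamma t)\bigr)^{1/d}}{2}\ =\ \frac{\lambda^{1/d}}{2}\,\exp\!\Bigl(\tfrac{\gamma}{d}\,t\Bigr),
\]
so the radius of $\Phi(t,S)$ (written $S(t)$) with respect to any of its points is $\Omega(\exp(\tfrac{\gamma}{d}t))$. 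In the intended use $S=B(\bbs_0,\delta)$, and then $\sup_{\bbs'\in S}\|\Phi(t,\bbs')-\Phi(t,\bbs_0)\|$ is precisely this radius relative to the image point $\Phi(t,\bbs_0)\in\Phi(t,S)$, matching the quantity in Definition~\ref{def:lyapunov-chaos}.

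I do not expect a genuine obstacle here: the only points requiring a little care are (i) reading ``radius with respect to $\bbs$'' as the smallest radius of a ball centred at $\bbs$ enclosing $S$, so that $S$ lies in the corresponding closed ball, and (ii) the fact that the volume-growth hypothesis is an \emph{eventual} lower bound, so the radius estimate is correspondingly asymptotic. Measurability of the sets is not an issue, since the images of balls under the maps $\Phi(t,\cdot)$ are measurable (or one may simply replace $\mathrm{vol}$ by Lebesgue outer measure throughout).
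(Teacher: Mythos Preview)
Your proof is correct. The paper does not actually supply a proof of this proposition --- it is stated there as a ``simple observation'' attributed to~\cite{CP2019} --- and the argument you give (enclose $S$ in the ball of radius $R$ about $\bbs$, then in the circumscribed cube of side $2R$, and invert $(2R)^d\ge v$) is exactly the natural one-line justification the authors have in mind; your derivation of the second sentence by substituting $v=\lambda\exp(\gamma t)$ is likewise the intended step.
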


Cheung and Piliouras showed Lemma~\ref{lem:chaos} below, which, for bimatrix games, reduces volume analysis to analyzing the sign the function
$C_{(\bbA,\bbB)}(\bbp,\bbq)$ defined in Eqn.~\eqref{eq:Cxy} below; the sign also %of $C_{(\bbA,\bbB)}(\bbp, \bbq)$
determines the local volume-changing behavior around the point $(\bbp, \bbq)$ when MWU is used.
Based on Proposition~\ref{pr:vol-to-diameter} that converts volume expansion to radius expansion, the sign %of $C_{(\bbA,\bbB)}(\cdot)$
can be used to determine if the dynamical system is Lyapunov chaotic.
In Eqn.~\eqref{eq:Cxy}, $\bbx(\bbp),\bby(\bbq)$ are mixed strategies of Players 1 and 2 respectively, computed using~\eqref{eq:dual-to-primal}.
Equality~\eqref{eq:Cxy-expect} can be derived easily, in which the expectation $\expect{\cdot}$ is indeed $\expectsub{(j,k)\sim (\bbx(\bbp),\bby(\bbq))}{\cdot}$,
i.e.~the underlying distribution is where $j$ is drawn following the distribution $\bbx(\bbp)$, while $k$ is drawn following the distribution $\bby(\bbq)$.
%this shorthand will be used throughout this paper when $\bbx(\bbp),\bby(\bbq)$ are clear from context.
\begin{align}
%&
C_{(\bbA,\bbB)}(\bbp,\bbq) %\nonumber\\
&=~ -\sum_{j\in S_1}~\sum_{k\in S_2}~x_j(\bbp) \cdot y_k(\bbq) \cdot
(A_{jk} - [\bbA\cdot \bby(\bbq)]_j) \cdot (B_{jk} - [\bbB\trans \cdot \bbx(\bbp)]_k) \label{eq:Cxy}\\
%&=~ -\sum_{j,k} x_j y_k A_{jk} B_{jk} ~+~ \sum_j x_j [\bbA\bby]_j [\bbB \bby]_j \nonumber\\
%&\qquad\qquad\qquad\qquad~+~ \sum_k y_k [\bbB\trans \bbx]_k [\bbA \trans \bbx]_k ~-~ (\bbx\trans \bbA \bby) \cdot (\bbx\trans \bbB \bby).\label{eq:Cxy-another}
&=~ - \expect{(A_{jk}-A_j - A_k)(B_{jk} - B_j - B_k)} + \expect{A_{jk}} \cdot \expect{B_{jk}}.\label{eq:Cxy-expect}
%&\qquad\qquad\text{where}~A_j = \sum_{\ell\in K} A_{j\ell} y_\ell,~~A_k = \sum_{\ell\in J} A_{\ell k} x_\ell; B_j,~B_k~\text{are defined analogously.}\nonumber
\end{align}

For multi-player game $\bbG$, the analogous function $C_\bbG(\cdot)$ is given below; the $U$ quantities were defined in~\eqref{eq:U}.
Lemma~\ref{lem:chaos} is adapted from \cite{CP2019} for games with any number of players.
Derivation of~\eqref{eq:Cxy-multi} uses the Jacobian of the corresponding dynamical system and integration by substitution; see Appendix~\ref{app:equivalence}.
\begin{equation}\label{eq:Cxy-multi}
C_\bbG(\bbp_1,\cdots,\bbp_N) ~=~ - \sum_{i\in [N],~j\in S_i} ~\sum_{k > i,~\ell\in S_k} ~x_{ij} x_{k\ell}
\left( U^{ki}_{\ell j} - U^k_\ell \right) \left( U^{ik}_{j\ell} - U^i_j \right).
\end{equation}

\begin{lemma}\label{lem:chaos}
Let $\bbG$ be a game.
Suppose that $S$ is a %bounded
set in the dual space $\rr^d$, and%such that for the function $C(\cdot)$ defined in~\eqref{eq:Cxy},
\begin{equation}\label{eq:barc}
\bar{c}(S) := \inf_{(\bbp_1,\cdots,\bbp_N)\in S} C_{\bbG}(\bbp_1,\cdots,\bbp_N) > 0.
\end{equation}
Then for MWU in the bimatrix game with any sufficiently small step-size $\ep$,
as long as $\Phi(t,B)\subset S$ for all $0\le t \le \overline{T}$, then for all $t$ in this range,
the volume of $\Phi(t,B)$ is at least $\Phi(0,B)\cdot \exp(\frac{\bar{c}(S)}{2}\ep^2\cdot t)$,
and hence the radius of $\Phi(t,B)$ is at least $\Omega(\exp(\frac{\bar{c}(S)}{2d}\ep^2\cdot t))$.
Subsequently, the dynamical system is Lyapunov chaotic in $S$ with Lyapunov exponent $\frac{\bar{c}(S)}{2d}\ep^2$.

If MWU is replaced by OMWU, then the same result holds by replacing the condition~\eqref{eq:barc} with $\bar{c}(S) := \inf_{(\bbp_1,\cdots,\bbp_N)\in S} [-C_{\bbG}(\bbp_1,\cdots,\bbp_N)] > 0$.
\end{lemma}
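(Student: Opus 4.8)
The plan is to reduce the statement to a pointwise lower bound on the Jacobian determinant of one step of the dynamics, and then accumulate that bound along a trajectory that stays inside $S$. Let $f\colon\rr^d\to\rr^d$ be the MWU map obtained by stacking the updates~\eqref{eq:MWU} of all players, where each block of the mixed profile is the softmax~\eqref{eq:dual-to-primal} of the corresponding block of dual variables. Because softmax outputs always lie in $[0,1]$, regardless of the dual point, we can write $f=\mathrm{id}+\ep\,g$ with $g$ smooth and with all first partials bounded by a constant depending only on $\bbG$; hence for all sufficiently small $\ep$ the map $f$ is injective and $\det J_f>0$ everywhere, so it is a $C^1$ diffeomorphism onto its image. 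By the chain rule $\det J_{\Phi(t,\cdot)}(\bbs)=\prod_{\tau=0}^{t-1}\det J_f\bigl(\Phi(\tau,\bbs)\bigr)$, so the change-of-variables formula gives
\[
\mathrm{vol}\bigl(\Phi(t,B)\bigr)=\int_B\det J_{\Phi(t,\cdot)}(\bbs)\,d\bbs=\int_B\ \prod_{\tau=0}^{t-1}\det J_f\bigl(\Phi(\tau,\bbs)\bigr)\,d\bbs.
\]
It therefore suffices to bound $\det J_f(\bbp)$ from below uniformly over $\bbp\in S$.

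Write $J_f(\bbp)=I+\ep\,M(\bbp)$, where $M(\bbp)$ is the Jacobian of the payoff-gradient field. The structural fact that makes the argument work is that the $(i,i)$ diagonal block of $M(\bbp)$ vanishes, since $U^i_j$ does not depend on $\bbx_i$ and hence not on $\bbp_i$; thus $\mathrm{tr}\,M(\bbp)=0$, and, again because softmax outputs are bounded, the entries of $M(\bbp)$ are bounded over all of $\rr^d$ by a constant depending only on $\bbG$. Expanding $\log\det(I+\ep\,M)=\ep\,\mathrm{tr}\,M-\tfrac{\ep^2}{2}\mathrm{tr}(M^2)+O(\ep^3)$, the linear term drops out and the quadratic term equals $\ep^2 C_{\bbG}(\bbp)$ --- this is the identity $-\tfrac12\mathrm{tr}(M(\bbp)^2)=C_{\bbG}(\bbp)$, whose verification is exactly the Jacobian-and-substitution computation that produces~\eqref{eq:Cxy} and~\eqref{eq:Cxy-multi} and which is carried out in Appendix~\ref{app:equivalence}. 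So $\log\det J_f(\bbp)=\ep^2 C_{\bbG}(\bbp)+O(\ep^3)$ with the error uniform in $\bbp$; in particular, once $\ep$ is small enough (in terms of $\bbG$ and $\bar c(S)$), and using $C_{\bbG}(\bbp)\ge\bar c(S)$ on $S$ from~\eqref{eq:barc}, we get $\det J_f(\bbp)\ge\exp\!\bigl(\tfrac{\ep^2}{2}\bar c(S)\bigr)$ for every $\bbp\in S$.

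Now assemble: if $\Phi(\tau,B)\subset S$ for $0\le\tau\le t$ --- equivalently $\Phi(\tau,\bbs)\in S$ for all $\bbs\in B$ --- then every factor of the product above is at least $\exp(\tfrac{\ep^2}{2}\bar c(S))$, so $\mathrm{vol}(\Phi(t,B))\ge\mathrm{vol}(\Phi(0,B))\exp(\tfrac{\ep^2}{2}\bar c(S)\,t)$, which is the claimed volume bound; Proposition~\ref{pr:vol-to-diameter} upgrades it to a radius bound $\Omega(\exp(\tfrac{\ep^2}{2d}\bar c(S)\,t))$. To match Definition~\ref{def:lyapunov-chaos}: given $\bbs\in S$ and small $\delta>0$, apply the above with $B=B(\bbs,\delta)$ over the range $0\le t<\min\{\tau\mid\Phi(\tau,B(\bbs,\delta))\not\subseteq S\}$ --- on which the required inclusions $\Phi(\tau,B)\subseteq S$ all hold --- and note $\Phi(t,\bbs)\in\Phi(t,B)$, so that Proposition~\ref{pr:vol-to-diameter} yields $\sup_{\bbs'\in B(\bbs,\delta)}\|\Phi(t,\bbs')-\Phi(t,\bbs)\|\ge\tfrac12\,\mathrm{vol}(\Phi(t,B))^{1/d}\ge\lambda\,\delta\,\exp(\tfrac{\ep^2}{2d}\bar c(S)\,t)$ with $\lambda:=\tfrac12\,(\mathrm{vol}\,B(0,1))^{1/d}>0$. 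This is the required inequality, with Lyapunov exponent $\gamma=\tfrac{\ep^2}{2d}\bar c(S)>0$.

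For OMWU, the update of $\bbs(t+1)$ depends on both $\bbs(t)$ and $\bbs(t-1)$, so it is not literally of the form $\bbs(t+1)=f(\bbs(t))$; following~\cite{CP2020} one replaces it by a genuine one-step map $\tilde f$ that agrees with OMWU up to a per-step discrepancy that does not affect the leading volume change. The optimistic correction $2U^i_j(\bbx^t)-U^i_j(\bbx^{t-1})$ contributes an extra $O(\ep^2)$ term to $J_{\tilde f}-I$ that flips the sign of the $\ep^2$-coefficient, so $\log\det J_{\tilde f}(\bbp)=-\ep^2 C_{\bbG}(\bbp)+O(\ep^3)$; replacing $C_{\bbG}$ by $-C_{\bbG}$ throughout, the hypothesis $\inf_S(-C_{\bbG})>0$ drives exactly the same argument. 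I expect the main obstacle to be the single-step determinant computation --- in particular, for $N\ge 3$ players, extracting the $\ep^2$-coefficient in the precise pairwise form of~\eqref{eq:Cxy-multi} from the Hessian of the softmax parametrization --- together with checking that the one-step reduction of OMWU really does preserve the sign and size of the leading volume change; both are deferred to Appendix~\ref{app:equivalence}.
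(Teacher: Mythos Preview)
Your proposal is correct and follows essentially the same route as the paper (which in turn adapts~\cite{CP2019}): compute the one-step Jacobian, observe the diagonal blocks vanish so the $\ep$-coefficient drops out, identify the $\ep^2$-coefficient of the volume integrand with $C_{\bbG}$, accumulate along the trajectory inside $S$, and invoke Proposition~\ref{pr:vol-to-diameter}. The only cosmetic difference is that you extract the $\ep^2$-coefficient via the Taylor expansion $\log\det(I+\ep M)=\ep\,\mathrm{tr}\,M-\tfrac{\ep^2}{2}\mathrm{tr}(M^2)+O(\ep^3)$, whereas the paper's Appendix~\ref{app:equivalence} expands $\det(I+\ep\bbJ)$ directly with the Leibniz formula (identity permutation plus transpositions); both yield $-\tfrac12\mathrm{tr}(M^2)=C_{\bbG}$ and the same uniform $O(\ep^3)$ remainder.
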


Note that if we start from a Nash equilibrium in the primal space, MWU and OMWU %dynamics, then these dynamics
will stay at the equilibrium. However, if this equilibrium $(\bbx^*,\bby^*)$ satisfies the conditions in Corollary~\eqref{coro:escape} below,
there are points arbitrarily close to the equilibrium that keep moving away from the equilibrium (if the region $\{(\bbx',\bby') = (\bbx(\bbp),\bby(\bbq)) | (\bbp,\bbq)\in S\}$ is large).
%is in $S$ of Lemma~\ref{lem:chaos} \red{such that every subset of $S$ has exponential volume expansion},
%and if we start at a point which is close to the equilibrium, then it is likely that \red{the flow} will eventually leave $S$.
%\blue{[I have a question, here we say $S$ is a bounded set. However, in bimatrix result section, we define $S$ based on $x$ and $y$,
%which may not be bounded. However, if S is unbounded here, then it is hard to say a point of leaving $S$.
%I prefer not to mention bounded here and just to mention Corollary 5 and Corollary 6 in your COLT paper (MWU in zero-sum game).]}
%\red{[Good point. When I was rewriting the OMWU paper with Georgios, I find that $S$ needs not be bounded. So we can simply remove the word ``bounded''. Then I think Corollaries 5 and 6 in COLT is not necessary.]}\blue{[If $S$ is not bounded, then it is hard to say a point will leave $S$.]}

\begin{corollary}[Adapted from{~\cite[Theorem 5]{CP2020}}]\label{coro:escape}
Let $(\bbx^*,\bby^*)$ be a point in the interior of the primal space. Suppose that there exists $(\bbp,\bbq)$ in the dual space, such that
$\bbx^* = \bbx(\bbp)$ and $\bby^* = \bby(\bbq)$. Furthermore, suppose $C_{(\bbA,\bbB)}(\bbp,\bbq) > 0$ and $(\bbp,\bbq)\in S$ where $S$ is the set described in Lemma~\ref{lem:chaos}.
Then there are primal points arbitrarily close to $(\bbx^*,\bby^*)$ such that MWU in the game $(\bbA,\bbB)$ eventually leaves the corresponding primal set of $S$,
i.e.~$\{(\bbx',\bby') = (\bbx(\bbp),\bby(\bbq)) | (\bbp,\bbq)\in S\}$.
\end{corollary}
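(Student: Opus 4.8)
The plan is to transfer the volume–expansion / Lyapunov–chaos conclusion of Lemma~\ref{lem:chaos} from the dual space to the primal space, using two elementary facts about MWU. First, the primal MWU trajectory started from $(\bbx',\bby')=(\bbx(\bbp'),\bby(\bbq'))$ is precisely the image, under the dual–to–primal map $\Psi:(\bbp,\bbq)\mapsto(\bbx(\bbp),\bby(\bbq))$, of the dual MWU trajectory started from $(\bbp',\bbq')$; this follows by a one–line induction from~\eqref{eq:MWU} and~\eqref{eq:dual-to-primal}, and $\Psi$ is translation–invariant (shifting a player's dual vector by a multiple of $\mathbf 1$ leaves her mixed strategy unchanged). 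Second, each dual MWU step moves the iterate by at most $\ep\rho$ in Euclidean norm, where $\rho=\rho(\bbA,\bbB)$ depends only on the payoff matrices, because for any mixed strategies the payoff vectors $\bbA\bby$ and $\bbB\trans\bbx$ are bounded entrywise by $\max_{j,k}|A_{jk}|$ and $\max_{j,k}|B_{jk}|$. Write $\Pi:=\{(\bbx(\bbp),\bby(\bbq))\mid(\bbp,\bbq)\in S\}$ for the primal image of $S$ (the set appearing in the statement), and let $\hat S:=\Psi^{-1}(\Pi)$ be its saturation by translation orbits. Since $\Psi$ is an open mapping (a submersion onto the relative interior of the simplices), $\hat S$ is open; and since $C_{(\bbA,\bbB)}(\bbp,\bbq)$ depends only on $(\bbx(\bbp),\bby(\bbq))$ by~\eqref{eq:Cxy}, we have $\bar c(\hat S)=\bar c(S)>0$, so Lemma~\ref{lem:chaos} applies verbatim with $S$ replaced by $\hat S$.

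The core step is an escape argument by contradiction. Fix $(\bbp,\bbq)\in S\subseteq\hat S$ with $\bbx(\bbp)=\bbx^*$ and $\bby(\bbq)=\bby^*$, and suppose there were a radius $\delta_0>0$ such that every primal point of $\Psi\big(B((\bbp,\bbq),\delta_0)\big)$ generates a primal MWU trajectory that stays in $\Pi$ for all time. Choose $\delta<\delta_0$ small enough for Lemma~\ref{lem:chaos} to apply on $\hat S$ to the ball $B:=B((\bbp,\bbq),\delta)$. For every $(\bbp',\bbq')\in B$, the first fact and the supposition force $\Psi\big(\Phi(t,(\bbp',\bbq'))\big)\in\Pi$ for all $t$, i.e.~$\Phi(t,(\bbp',\bbq'))\in\hat S$ for all $t$; hence $\Phi(t,B)\subseteq\hat S$ for all $t$, and Lemma~\ref{lem:chaos} gives $\mathrm{vol}\big(\Phi(t,B)\big)\ge\mathrm{vol}(B)\cdot\exp\!\big(\tfrac{\bar c(\hat S)}{2}\ep^2 t\big)$ for all $t$. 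On the other hand, by the second fact $\Phi(t,B)\subseteq B\big((\bbp,\bbq),\,\delta+\ep\rho t\big)$, whose volume is a fixed degree–$d$ polynomial in $t$; for large $t$ this is dominated by the exponential lower bound, a contradiction.

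Hence no such $\delta_0$ exists: for every $\delta_0>0$ some primal point of $\Psi\big(B((\bbp,\bbq),\delta_0)\big)$ has an MWU trajectory that leaves $\Pi$ in finite time. Since $\Psi$ is globally Lipschitz, $\Psi\big(B((\bbp,\bbq),\delta_0)\big)$ lies inside the ball of radius $\O(\delta_0)$ about $(\bbx^*,\bby^*)=\Psi(\bbp,\bbq)$, so there are primal points arbitrarily close to $(\bbx^*,\bby^*)$ whose MWU trajectories eventually leave $\Pi$, which is exactly the claim.

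I expect the one genuine subtlety — the main obstacle — to be the discrepancy between ``leaving $S$ in the dual space'' and ``leaving $\Pi$ in the primal space'': a dual iterate can exit $S$ while its primal image re–enters $\Pi$ through another point of the same translation orbit, so applying Lemma~\ref{lem:chaos} to $S$ directly does not suffice. Passing to the saturated set $\hat S$ closes this gap at no cost in $\bar c$ (by translation invariance of $C_{(\bbA,\bbB)}$), but it makes $\hat S$ unbounded, so the contradiction cannot come from compactness of the region; it must come instead from exponential volume growth outrunning the polynomial volume of the region reachable under a uniformly bounded per–step displacement. The remaining ingredients — the pushforward identity for the dynamics, openness of $\hat S$, and the translation invariance of $C_{(\bbA,\bbB)}$ and $\bar c$ — are routine.
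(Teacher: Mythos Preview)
The paper does not supply its own proof of this corollary; it is stated as ``Adapted from~[Theorem 5, CP2020]'' and left unproved in both the main text and the appendix. Your argument is correct and is essentially the standard one behind that cited result: exponential volume growth in the dual (from Lemma~\ref{lem:chaos}) cannot be sustained inside a region whose reachable set grows only polynomially due to the uniform $\ep\rho$ bound on per-step displacement.

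Your handling of the one real subtlety---passing from $S$ to the translation-saturated set $\hat S=\Psi^{-1}(\Pi)$ so that ``dual iterate leaves $S$'' genuinely matches ``primal iterate leaves $\Pi$''---is the right move, and your observation that $\bar c(\hat S)=\bar c(S)$ because $C_{(\bbA,\bbB)}$ in~\eqref{eq:Cxy} depends only on $(\bbx(\bbp),\bby(\bbq))$ is exactly what makes this cost-free. One small remark: the ``sufficiently small $\ep$'' in Lemma~\ref{lem:chaos} is governed by the $\calO(\ep^3)$ remainder in the volume integrand, which (see Appendix~\ref{app:equivalence}) depends only on primal coordinates and is therefore uniformly bounded; hence replacing $S$ by the unbounded $\hat S$ does not alter the admissible step-size range, and your application of the lemma to $\hat S$ is legitimate. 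With that noted, the proof goes through as written.
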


When the game is zero-sum, i.e., $\bbB = - \bbA$, hence $C_{(\bbA,\bbB)}(\bbp,\bbq) = \expect{(A_{jk}-A_j-A_k)^2} - \expect{A_{jk}}^2$.
Since $\expect{A_{jk}} = \expect{A_j} = \expect{A_k}$ and hence $\expect{A_{jk}-A_j-A_k} = - \expect{A_{jk}}$,
$C_{(\bbA,\bbB)}(\bbp,\bbq)$ is indeed the variance of the random variable $A_{jk}-A_j-A_k$,
and thus is non-negative.

By~\eqref{eq:Cxy}, we have $C_{(\bbA,\bbB)}(\bbp,\bbq) = -C_{(\bbA,-\bbB)}(\bbp,\bbq)$. Thus, for any coordination game $(\bbA,\bbA)$,
$C_{(\bbA,\bbA)}(\bbp,\bbq) = -C_{(\bbA,-\bbA)}(\bbp,\bbq)\le 0$, due to the observation about zero-sum games above.

%\blue{$C_{(\bbA, \bbB)}(\bbp, \bbq)$ actually represents the local volume change around the point $(\bbp, \bbq)$ in the dual space:
%\begin{itemize}
%  \item When $C_{(\bbA, \bbB)}(\bbp, \bbq)$ is stricily positive, then a local area around the point $(\bbp, \bbq)$ will expand by MWU approach;
%  \item When $C_{(\bbA, \bbB)}(\bbp, \bbq)$ is stricily negative, then a local area around the point $(\bbp, \bbq)$ will contract by MWU approach.
%\end{itemize}}

\section{Bimatrix Games}

\label{sec::bimatrix}
In this section, we focus on general bimatrix games $(\bbA, \bbB)$. First, in Section~\ref{sec::tools::bimatrix}, we present two tools for analyzing $C_{(\bbA, \bbB)}(\cdot)$,
and then we provide an example to show how to use these two tools. Finally, in Section~\ref{sec::results::bimatrix}, we present two characterizations such that the dynamics are Lyapunov chaotic almost everywhere.

\subsection{Tools for Analyzing Bimatrix Game} \label{sec::tools::bimatrix}
%In this subsection, we present two main tools for analyzing the $C_{()\bbA, \bbB)}(\bbp, \bbq)$ in general bimatrix game.\vspace*{-0.05in}

\parabold{First Tool: Canonical Decomposition for Bimatrix Games.}
For every bimatrix game $(\bbA, \bbB)$, it admits a canonical decomposition~\cite{BasarHo1974,Kalai2009} into the sum of a zero-sum game $(\bbZ, -\bbZ)$ and a coordination game $(\bbC, \bbC)$,
where $\bbZ = \frac{1}{2} (\bbA - \bbB)$ and $\bbC = \frac{1}{2} (\bbA + \bbB)$, i.e.
\[
(\bbA, \bbB) = (\bbZ, -\bbZ) + (\bbC, \bbC).
\]
%and $(\bbZ, -\bbZ)$ is a zero-sum game, $(\bbC, \bbC)$ is a coordination game.
We call $(\bbZ, -\bbZ)$ the zero-sum part of the game $(\bbA, \bbB)$, and $(\bbC, \bbC)$ the coordination part of the game.
%Our first lemma
Our first result shows that the function $C(\cdot)$ can be decomposed neatly into the two parts too.
%a relationship among game $(\bbA, \bbB)$, its zero sum part $(\bbZ, -\bbZ)$ and its coordination part $(\bbC, \bbC)$.

\begin{lemma}\label{lem::game::decomposition}
  For any bimatrix game $(\bbA, \bbB)$,
\begin{align*}
C_{(\bbA,\bbB)}(\bbp,\bbq) ~\equiv~ C_{(\bbZ,-\bbZ)}(\bbp,\bbq) + C_{(\bbC,\bbC)}(\bbp,\bbq),
\end{align*}
where $\bbZ = \frac{1}{2} (\bbA - \bbB)$ and $\bbC = \frac{1}{2} (\bbA + \bbB)$.
\end{lemma}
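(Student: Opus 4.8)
The plan is to exhibit $C_{(\bbA,\bbB)}(\bbp,\bbq)$ as the value of a symmetric bilinear form applied to the pair of payoff matrices, after which the claimed additivity drops out of a one-line expansion. Fix $(\bbp,\bbq)$ and abbreviate $\bbx = \bbx(\bbp)$, $\bby = \bby(\bbq)$. For an arbitrary matrix $\bbM$ set $M_j := [\bbM\bby]_j$ and $M_k := [\bbM\trans\bbx]_k$, and define
\[
\calB(\bbM,\bbN) ~:=~ -\,\expect{(M_{jk}-M_j-M_k)(N_{jk}-N_j-N_k)} ~+~ \expect{M_{jk}}\cdot\expect{N_{jk}},
\]
where $\expect{\cdot} = \expectsub{(j,k)\sim(\bbx,\bby)}{\cdot}$. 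By the identity~\eqref{eq:Cxy-expect}, $C_{(\bbA,\bbB)}(\bbp,\bbq) = \calB(\bbA,\bbB)$.

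Two elementary observations carry the argument. First, $\calB$ is bilinear: the maps $\bbM\mapsto M_{jk}-M_j-M_k$ (for each fixed $j,k$) and $\bbM\mapsto \expect{M_{jk}}$ are linear in $\bbM$, since $\bbM\mapsto[\bbM\bby]_j$, $\bbM\mapsto[\bbM\trans\bbx]_k$ and $\bbM\mapsto\expect{M_{jk}}$ are all linear; and $\calB$ is a sum of products of such expressions with one factor drawn from each argument. Second, $\calB$ is symmetric, $\calB(\bbM,\bbN)=\calB(\bbN,\bbM)$, because the right-hand side of its definition is visibly unchanged when $\bbM$ and $\bbN$ are swapped. (Put differently, $C_{(\bbA,\bbB)}(\bbp,\bbq)=C_{(\bbB,\bbA)}(\bbp,\bbq)$.)

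Now substitute $\bbA=\bbZ+\bbC$, $\bbB=\bbC-\bbZ$ and expand by bilinearity:
\[
C_{(\bbA,\bbB)}(\bbp,\bbq) ~=~ \calB(\bbZ+\bbC,\;\bbC-\bbZ) ~=~ \calB(\bbC,\bbC) - \calB(\bbZ,\bbZ) + \bigl(\calB(\bbZ,\bbC) - \calB(\bbC,\bbZ)\bigr).
\]
By symmetry the last bracket vanishes. Finally $\calB(\bbC,\bbC)=C_{(\bbC,\bbC)}(\bbp,\bbq)$ and, again using linearity in the second slot, $-\calB(\bbZ,\bbZ)=\calB(\bbZ,-\bbZ)=C_{(\bbZ,-\bbZ)}(\bbp,\bbq)$, which gives $C_{(\bbA,\bbB)}(\bbp,\bbq) = C_{(\bbZ,-\bbZ)}(\bbp,\bbq) + C_{(\bbC,\bbC)}(\bbp,\bbq)$. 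Since $(\bbp,\bbq)$ was arbitrary, the identity holds as an identity of functions on the dual space.

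All steps are routine; the only point needing a moment's care is the symmetry of $\calB$. It is immediate from the expectation form~\eqref{eq:Cxy-expect}, but would require a short direct computation if one worked from the original definition~\eqref{eq:Cxy}, where the two factors $(A_{jk}-[\bbA\bby]_j)$ and $(B_{jk}-[\bbB\trans\bbx]_k)$ are centered in superficially different ways (over $k$ versus over $j$). Once the symmetric-bilinear viewpoint is in place I expect no real obstacle.
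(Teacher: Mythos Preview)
Your proof is correct and takes a genuinely different route from the paper's. The paper works in the opposite direction: it starts from $4C_{(\bbZ,-\bbZ)}+4C_{(\bbC,\bbC)}$, plugs $\bbZ=\tfrac12(\bbA-\bbB)$ and $\bbC=\tfrac12(\bbA+\bbB)$ into the expectation form~\eqref{eq:Cxy-expect}, and uses the difference-of-squares identity $(\alpha-\beta)^2-(\alpha+\beta)^2=-4\alpha\beta$ (with $\alpha=A_{jk}-A_j-A_k$, $\beta=B_{jk}-B_j-B_k$) to collapse everything to $4C_{(\bbA,\bbB)}$. Your argument instead isolates the structural reason the identity holds: $C_{(\bbA,\bbB)}=\calB(\bbA,\bbB)$ for a \emph{symmetric bilinear} form $\calB$, so expanding $\calB(\bbZ+\bbC,\bbC-\bbZ)$ kills the cross terms automatically. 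The paper's computation is marginally shorter on the page; your approach is more explanatory and immediately yields related facts (e.g.\ $C_{(\bbA,\bbB)}=C_{(\bbB,\bbA)}$, or that any decomposition $\bbA=\bbM+\bbN$, $\bbB=\bbM-\bbN$ would give $C_{(\bbA,\bbB)}=C_{(\bbM,\bbM)}+C_{(\bbN,-\bbN)}$). Your closing remark about symmetry being obvious from~\eqref{eq:Cxy-expect} but not from~\eqref{eq:Cxy} is apt; since the paper already records~\eqref{eq:Cxy-expect}, you may cite it without further justification.
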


%\begin{proof}[Proof of Theorem~\ref{lem::game::decomposition}]
\begin{proof}
%Recall that $\bbZ = \frac 12 (\bbA - \bbB)$ and $\bbC = \frac 12 (\bbA + \bbB)$. 
We use~\eqref{eq:Cxy-expect} to expand the following:
\begin{align*}
&4 \cdot C_{(\bbZ,-\bbZ)}(\bbp,\bbq) + 4 \cdot C_{(\bbC,\bbC)}(\bbp,\bbq) \\
=& \expect{(A_{jk}-B_{jk}-A_j+B_j-A_k+B_k)^2} - \expect{A_{jk}-B_{jk}}^2\\
&\qquad\qquad ~-~ \expect{(A_{jk}+B_{jk}-A_j-B_j-A_k-B_k)^2} + \expect{A_{jk}+B_{jk}}^2\\
=& \expect{(A_{jk}-B_{jk}-A_j+B_j-A_k+B_k)^2 - (A_{jk}+B_{jk}-A_j-B_j-A_k-B_k)^2}\\
&\qquad\qquad - \left( \expect{A_{jk}} - \expect{B_{jk}} \right)^2 + \left( \expect{A_{jk}} + \expect{B_{jk}} \right)^2\\
=& \expect{4(-B_{jk}+B_j+B_k)(A_{jk}-A_j-A_k)} + 4 \cdot \expect{A_{jk}}\cdot \expect{B_{jk}}\\
%&\hspace*{1in}\text{(due to linearity of expectation, and the formula $e^2-f^2 = (e-f)(e+f)$)}\\
=& 4\cdot C_{(\bbA,\bbB)}(\bbp,\bbq).\qedhere
\end{align*}
\end{proof}

By the end of Section~\ref{sect:prelim}, we discussed that for any zero-sum game, $C_{(\bbZ,-\bbZ)}(\bbp,\bbq)$ is always non-negative,
and for any coordinate game, $C_{(\bbC,\bbC)}(\bbp,\bbq)$ is always non-positive.
By using the above lemma, we can analyze the volume-changing behavior of a bimatrix game $(\bbA, \bbB)$ by looking at its zero-sum and coordination parts independently.
One simple intuition is that %for a general bimatrix game $(\bbA, \bbB)$,
if the coordination (resp.~zero-sum) part is small, then the volume-changing behavior of $(\bbA, \bbB)$ is closer to the behavior of the zero-sum (resp.~coordination) part.\vspace*{-0.02in}
We realize this intuition quantitatively in the next subsection.
%; %: $(\bbZ,-\bbZ)$;
%otherwise, if its zero-sum game part is small, then the volume-change behavior of $(\bbA, \bbB)$ will be closer to the coordinate part.%: $(\bbC,\bbC)$.

%The second tool we want to introduce here is the trivial matrix.

\parabold{Second Tool: Trivial matrix.} Trivial matrices are matrices which do not affect the volume-changing behavior, as depicted in Lemma~\ref{lem::trivial::matrix} below.% Let us first give the definition of the trivial matrix.
\begin{definition}[Trivial Matrix]\label{def::trivial::matrix}
$\bbT \in \rr^{n\times m}$ is a \emph{trivial matrix} if there exists real numbers $u_1,u_2,\cdots,u_n$ and $v_1,v_2,\cdots,v_m$
such that $T_{jk} = u_j + v_k$ for all $j\in [n],k\in [m]$.
\end{definition}
%In the following lemma, we show that the trivial part does not affect the $C_{(\bbA,\bbB)}(\bbp,\bbq)$.

\begin{lemma} \label{lem::trivial::matrix}
For any two trivial matrices $\bbT^1,\bbT^2$, for any two matrices $\bbA,\bbB\in \rr^{n\times m}$,
\[
%\text{for any~~}\bbp,\bbq,~~~
C_{(\bbA,\bbB)}(\bbp,\bbq) ~\equiv~ C_{(\bbA+\bbT^1,\bbB+\bbT^2)}(\bbp,\bbq).
\]
\end{lemma}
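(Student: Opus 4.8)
The plan is to combine two structural observations about $C_{(\bbA,\bbB)}(\bbp,\bbq)$: first, that for fixed $(\bbp,\bbq)$ it is a \emph{bilinear} form in the pair of payoff matrices $(\bbA,\bbB)$, so adding trivial matrices can be analyzed one summand at a time; and second, that a trivial matrix essentially ``vanishes'' after the centering operations that are already built into Eqn.~\eqref{eq:Cxy}. Putting these together reduces the lemma to a one-line cancellation.

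\textbf{Reduction via bilinearity.} Reading off Eqn.~\eqref{eq:Cxy}, we have $C_{(\bbA,\bbB)}(\bbp,\bbq) = -\sum_{j,k} x_j(\bbp)\,y_k(\bbq)\,(A_{jk}-[\bbA\bby(\bbq)]_j)(B_{jk}-[\bbB\trans\bbx(\bbp)]_k)$, which is a sum of terms each consisting of a scalar times a linear function of $\bbA$ times a linear function of $\bbB$; hence, for $(\bbp,\bbq)$ fixed, $(\bbA,\bbB)\mapsto C_{(\bbA,\bbB)}(\bbp,\bbq)$ is bilinear (and in fact symmetric in its two arguments, as is manifest from~\eqref{eq:Cxy-expect}). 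Therefore
\[
C_{(\bbA+\bbT^1,\bbB+\bbT^2)}(\bbp,\bbq) = C_{(\bbA,\bbB)}(\bbp,\bbq) + C_{(\bbT^1,\bbB)}(\bbp,\bbq) + C_{(\bbA,\bbT^2)}(\bbp,\bbq) + C_{(\bbT^1,\bbT^2)}(\bbp,\bbq),
\]
so it suffices to show $C_{(\bbT,\bbM)}(\bbp,\bbq) = 0$ for every $\bbM\in\rr^{n\times m}$ and every trivial $\bbT$ (each of the last three displayed terms is an instance of this).

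\textbf{A trivial matrix collapses under double-centering.} Write $T_{jk} = u_j + v_k$, and use the Preliminaries' shorthand with $\bbx = \bbx(\bbp)$, $\bby = \bby(\bbq)$ and $\expect{\cdot}$ over $(j,k)\sim(\bbx,\bby)$: so $T_j = [\bbT\bby]_j$, $T_k = [\bbT\trans\bbx]_k$. A direct check gives $T_j = u_j + \expect{v_k}$ and $T_k = \expect{u_j} + v_k$, hence $T_{jk} - T_j - T_k = -(\expect{u_j}+\expect{v_k}) = -\expect{T_{jk}}$, a constant independent of $j$ and $k$. Now apply~\eqref{eq:Cxy-expect} together with the product-distribution identities $\expect{M_j} = \expect{M_k} = \expect{M_{jk}}$ (so that $\expect{M_{jk}-M_j-M_k} = -\expect{M_{jk}}$):
\[
C_{(\bbT,\bbM)}(\bbp,\bbq) = -\expect{(T_{jk}-T_j-T_k)(M_{jk}-M_j-M_k)} + \expect{T_{jk}}\expect{M_{jk}} = \expect{T_{jk}}\expect{M_{jk}-M_j-M_k} + \expect{T_{jk}}\expect{M_{jk}} = 0.
\]
Combined with the bilinear expansion, this yields $C_{(\bbA+\bbT^1,\bbB+\bbT^2)}(\bbp,\bbq)\equiv C_{(\bbA,\bbB)}(\bbp,\bbq)$.

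There is no real obstacle here — the proof is a short computation — so the only thing requiring care is the bookkeeping in the centering step: one must notice that adding a trivial matrix shifts the doubly-centered entries $M_{jk}-M_j-M_k$ by a single global constant, and that this constant is cancelled exactly by the matching shift of the mean term $\expect{M_{jk}}$. If one prefers to avoid the bilinearity reduction, the same cancellations can instead be carried out directly on $C_{(\bbA+\bbT^1,\bbB+\bbT^2)}(\bbp,\bbq)$ via~\eqref{eq:Cxy}: after centering, the entries of $\bbA+\bbT^1$ acquire an extra term depending only on $k$ and those of $\bbB+\bbT^2$ an extra term depending only on $j$, and every resulting cross term vanishes upon summing out the ``free'' index against $\bbx(\bbp)$ or $\bby(\bbq)$.
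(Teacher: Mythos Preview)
Your proof is correct and follows essentially the same approach as the paper. The paper reduces to the case $\bbT^2=0$ by symmetry and then computes $C_{(\bbA+\bbT^1,\bbB)} - C_{(\bbA,\bbB)}$ directly from~\eqref{eq:Cxy-expect}, arriving at the same cancellation $\expect{u_j+v_k}\cdot\expect{B_{jk}-B_j-B_k} + \expect{u_j+v_k}\cdot\expect{B_{jk}} = 0$; your bilinearity framing is a slightly cleaner packaging of the same computation, since the paper's ``difference'' is precisely your $C_{(\bbT^1,\bbB)}$ term.
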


One immediate application of this lemma is for two player potential games.
\begin{definition}\label{def::potential::multi}
A game $\bbG$ is a potential game if there exists a potential function $\mathcal{P}:S\ra \rr$ such that for any Player $i$ and any strategy profile $\bbs\in S$,
\begin{align*}
  \mathcal{P}(s_i, \bbs_{-i}) - \mathcal{P}(s'_i, \bbs_{-i}) = u_i(s_i, \bbs_{-i}) - u_i(s_{i'}, \bbs_{-i}).
\end{align*}
\end{definition}
%for some potential function $\mathcal{P}$.

For the potential game, we have the following observation:
\begin{obs}\label{obs::potential::bimatrix}
For any bimatrix potential game $(\bbA,\bbB)$, there is a coordination game $(\bbP, \bbP)$ such that $\bbA - \bbP,~\bbB - \bbP$ are trivial matrices.
$\bbP$ is the matrix representation of the potential function $\mathcal{P}$.
\end{obs}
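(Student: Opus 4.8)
The plan is to unpack the definition of a bimatrix potential game and read off the trivial structure directly. Let $\mathcal{P}\colon S_1\times S_2\to\rr$ be a potential function guaranteed by Definition~\ref{def::potential::multi}, and let $\bbP\in\rr^{n\times m}$ be its matrix representation, $P_{jk}=\mathcal{P}(j,k)$. Since $(\bbP,\bbP)$ assigns the same payoff matrix to both players, it is a coordination game by definition, so the only thing that remains is to check that $\bbA-\bbP$ and $\bbB-\bbP$ are trivial matrices in the sense of Definition~\ref{def::trivial::matrix}.

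First I would apply the potential condition to Player $1$: for any $j,j'\in S_1$ and any $k\in S_2$, the unilateral deviation of Player $1$ from $j'$ to $j$ (holding Player $2$ at $k$) gives $\mathcal{P}(j,k)-\mathcal{P}(j',k)=A_{jk}-A_{j'k}$. Rearranging, $(\bbA-\bbP)_{jk}=(\bbA-\bbP)_{j'k}$ for all $j,j'$, i.e.\ the entries of $\bbA-\bbP$ in a fixed column $k$ do not depend on the row. Hence, setting $v_k:=(\bbA-\bbP)_{1k}$ and $u_j\equiv 0$, we get $(\bbA-\bbP)_{jk}=u_j+v_k$, exhibiting $\bbA-\bbP$ as a trivial matrix.

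Symmetrically, applying the potential condition to Player $2$ — for any $k,k'\in S_2$ and $j\in S_1$, $\mathcal{P}(j,k)-\mathcal{P}(j,k')=B_{jk}-B_{jk'}$ — yields $(\bbB-\bbP)_{jk}=(\bbB-\bbP)_{jk'}$ for all $k,k'$, so the entries of $\bbB-\bbP$ in a fixed row $j$ are column-independent. Setting $u_j:=(\bbB-\bbP)_{j1}$ and $v_k\equiv 0$ gives $(\bbB-\bbP)_{jk}=u_j+v_k$, so $\bbB-\bbP$ is trivial as well, completing the argument.

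There is essentially no obstacle beyond correctly matching the two halves of the potential-game definition to the two factors of the decomposition: Player $1$'s indifference pins down $\bbA-\bbP$ up to an additive function of the column, while Player $2$'s pins down $\bbB-\bbP$ up to an additive function of the row, and in each case ``additive function of a single coordinate'' is exactly the definition of a trivial matrix. Combined with Lemma~\ref{lem::trivial::matrix} this gives $C_{(\bbA,\bbB)}(\bbp,\bbq)\equiv C_{(\bbP,\bbP)}(\bbp,\bbq)$, and since $(\bbP,\bbP)$ is a coordination game we have $C_{(\bbP,\bbP)}(\bbp,\bbq)\le 0$ (end of Section~\ref{sect:prelim}); feeding this into Lemma~\ref{lem:chaos} (in its OMWU form) yields the intended downstream consequence that OMWU is Lyapunov chaotic everywhere in any bimatrix potential game.
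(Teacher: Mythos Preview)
Your proof is correct and follows essentially the same approach as the paper's own proof: both apply the potential condition to Player~1 to conclude that $(\bbA-\bbP)_{jk}$ depends only on $k$, and to Player~2 to conclude that $(\bbB-\bbP)_{jk}$ depends only on $j$, thereby exhibiting each difference as a trivial matrix. The paper writes this out by fixing the first row/column as the reference (e.g.\ $v_k = A_{1k}-\mathcal{P}_{1k}$), exactly as you do.
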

This observation immediately implies that the volume-changing behavior of potential game is equivalent to that of a corresponding coordination game.

We give a concrete example to show how these tools help us to analyze the $C_{(\bbA, \bbB)}(\cdot)$.\vspace*{-0.03in}

\parabold{A Simple Example.}
%Here, we consider a concrete example.
We will show how to use our tools to demonstrate $C(\cdot) \geq 0$ everywhere for the following game.
In the example, each player has three strategies. The payoff bimatrix $(\bbA,\bbB)$ is given below.
The first number gives the payoff of the row player, who chooses strategy from $\{a, b, c\}$; the second number gives the payoff of the column player, who chooses strategy from $\{1, 2, 3\}$.
{\small
\begin{table}[htp]
  \centering
\begin{tabular}{|c|c|c|c|}
\hline
& Strategy $1$  & Strategy $2$  & Strategy $3$ \\
\hline
Strategy $a$ & $(4,4)$ & $(12, -4)$ & $(-6, 10)$  \\
\hline
Strategy $b$ & $(-8, 8)$ & $(0, 0)$ & $(12, -4)$  \\
\hline
Strategy $c$ & $(14, -2)$ & $(-8, 8)$ & $(4, 4)$ \\
\hline
\end{tabular}
\end{table}
}
We first use our \emph{first tool} to decompose this game into zero-sum part $(\bbZ,-\bbZ)$ and coordination part $(\bbC,\bbC)$, where
$\bbZ = \left[\begin{smallmatrix}
0 & 8 & -8\\
-8 & 0 & 8 \\
8 & 0 & 0
\end{smallmatrix}\right]$ and $\bbC = \left[\begin{smallmatrix}
4 & 4 & 2\\
0 & 0 & 4 \\
6 & 0 & 4
\end{smallmatrix}\right]$.
At this point, we still cannot easily figure out which one is larger between $C_{(\bbZ, -\bbZ)}(\cdot)$ and $C_{(\bbC, -\bbC)})(\cdot)$.
However, we can further decompose the coordination part by the \emph{second tool}: $\bbC = \left[\begin{smallmatrix}
  4 & 2 & 4\\
  2 & 0 & 2 \\
  4 & 2 & 4
  \end{smallmatrix}\right] +
  \left[\begin{smallmatrix}
  0 & 2 & -2\\
  -2 & 0 & 2 \\
  2 & -2 & 0
  \end{smallmatrix}\right]$, where the first matrix on the RHS is a trivial matrix.
It's easy to see %the first matrix is a trivial matrix and
the second matrix on the RHS is $\frac{1}{4} \bbZ$.
Then by Lemmas~\ref{lem::game::decomposition} and~\ref{lem::trivial::matrix}, and the definition of the function $C$, for any point $(\bbp,\bbq)$ in the dual space,
\begin{align*}
  C_{(\bbA, \bbB)}(\bbp,\bbq) = C_{(\bbZ, -\bbZ)}(\bbp,\bbq) + C_{(\frac{1}{4}\bbZ, \frac{1}{4}\bbZ)}(\bbp,\bbq) = \left(1 - %\left(\frac{1}{4}\right)^2
  (1/4)^2\right)\cdot C_{(\bbZ, -\bbZ)}(\bbp,\bbq) \geq 0.
\end{align*}

\subsection{Results for Bimatrix Games}\label{sec::results::bimatrix}
In this subsection, we identify several characterizations for general bimatrix games in which we have chaotic behavior with MWU dynamic in a following set $S$ in the cumulative payoff (dual) space $\mathbb{R}^{n_1+n_2}$:
\begin{align*}
  S^{\delta} = \left\{ (\bbp, \bbq)  \left| \forall j\in S_1, k\in S_2, ~~ x_j(\bbp) \geq \delta ~~\texttt{and}~~ y_k(\bbq) \geq \delta\right.\right\}.
\end{align*}

In order to show chaotic behavior of MWU in a specific bimatrix game $(\bbA, \bbB)$,
it is sufficient to show $C_{(\bbA, \bbB)}(\bbp, \bbq)$ is strictly positive in the region $S^{\delta}$, followed by applying Lemma~\ref{lem:chaos}.
In %the analysis in
the previous subsection, we show that for each game $(\bbA, \bbB)$, it can be decomposed into a zero-sum part $(\bbZ, -\bbZ)$ and a coordination part $(\bbC, \bbC)$.
Furthermore, $C_{(\bbA, \bbB)}(\bbp, \bbq) = C_{(\bbZ, -\bbZ)}(\bbp, \bbq) + C_{(\bbC, \bbC)}(\bbp, \bbq)$.
We also raise an intuition that if the zero-sum part is small, then the volume behavior in the game $(\bbA, \bbB)$ will be similar that in the coordination part;
conversely, if the coordination part is small, then the volume behavior will be similar to the zero-sum part.
However, we have not yet presented a way to compare the largeness of the two parts. This is what we do here.

\subsubsection{First Characterization: Matrix Domination}

The first characterization we identify is matrix domination. In this part, we show that under certain conditions,
the zero-sum part is always no less than the coordination part,
i.e.~$C_{(\bbZ, -\bbZ)}(\bbp, \bbq) \geq -C_{(\bbC, \bbC)}(\bbp, \bbq)$ for all $(\bbp,\bbq)$.
This directly implies $C_{(\bbA, \bbB)}(\bbp, \bbq)$ will be non-negative in the whole dual space.
Interestingly, the condition we identify is both necessary and sufficient.
Similar result can also be achieved in the case that coordination part is always no less than the zero-sum part. % with same proof.
We first introduce the definition of the matrix domination.

\begin{definition}\label{def::matrix::domination}
We say matrix $\bbK$ dominates matrix $\bbL$ if they are of the same dimension, and for any row indices $j,j'$ and column indices $k,k'$,
\[
\left|\bbK_{jk} + \bbK_{j'k'} - \bbK_{jk'} - \bbK_{j'k}\right| \geq \left|\bbL_{jk} + \bbL_{j'k'} - \bbL_{jk'} - \bbL_{j'k}\right|.
\]
%  \begin{align*}
%    \left|\bbK_{ij} + \bbK_{i'j'} - \bbK_{ij'} - \bbK_{i'j}\right| \geq \left|\bbL_{ij} + \bbL_{i'j'} - \bbL_{ij'} - \bbL_{i'j}\right|.
%  \end{align*}
\end{definition}
Note that the domination induces a partial order on all matrices: if $\bbK$ dominates $\bbL$ and $\bbL$ dominates $\bbM$, then $\bbK$ dominates $\bbM$.
The theorem below gives the necessary and sufficient condition.
\begin{theorem}\label{thm:matrix-domination}
  $C_{(\bbA, \bbB)}(\bbp, \bbq)$ is non-negative for all $\bbp$ and $\bbq$ if and only if matrix of the zero-sum part $\bbZ$ dominates the coordination part $\bbC$.
\end{theorem}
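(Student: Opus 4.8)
The plan is to rewrite $C_{(\bbA,\bbB)}(\bbp,\bbq)$ as a single sum, over pairs of rows and pairs of columns, of squared $2\times2$ ``second differences'' of $\bbZ$ minus the corresponding squared second differences of $\bbC$, and then read off both implications. The key step is a variance representation: writing $\bbx=\bbx(\bbp)$, $\bby=\bby(\bbq)$ and, for a matrix $\bbM$, $D_{jk,j'k'}[\bbM] := M_{jk}+M_{j'k'}-M_{jk'}-M_{j'k}$, I will show
\[
C_{(\bbM,-\bbM)}(\bbp,\bbq) ~=~ \tfrac14\sum_{j,j'\in S_1}\sum_{k,k'\in S_2} x_j x_{j'} y_k y_{k'}\, D_{jk,j'k'}[\bbM]^2 .
\]
By the discussion closing Section~\ref{sect:prelim}, the left side equals the variance of $g(j,k):=M_{jk}-M_j-M_k$ for $(j,k)\sim\bbx\otimes\bby$, where $M_j=[\bbM\bby]_j$ and $M_k=[\bbM\trans\bbx]_k$. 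A one-line computation gives $\expect[g\mid j]\equiv -\expect[M_{jk}]\equiv\expect[g\mid k]$, so $h:=g-\expect[g]$ has $\expect[h\mid j]\equiv 0$ and $\expect[h\mid k]\equiv 0$. Drawing $(j,k)$ and $(j',k')$ independently from $\bbx\otimes\bby$ and expanding, all six pairwise cross-terms in $\expect\big[(h(j,k)-h(j,k')-h(j',k)+h(j',k'))^2\big]$ vanish (each involves a factor like $\expect_k[h(j,k)]$ or $\expect_j[h(j,k)]$ or uses independence), so this expectation equals $4\,\expect[h^2]=4\,\mathrm{Var}(g)$; and the alternating combination $h(j,k)-h(j,k')-h(j',k)+h(j',k')$ equals $D_{jk,j'k'}[\bbM]$ because the additive ``main-effect'' and constant terms telescope. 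Applying this with $\bbM=\bbZ$, and with $\bbM=\bbC$ via $C_{(\bbC,\bbC)}=-C_{(\bbC,-\bbC)}$, and then Lemma~\ref{lem::game::decomposition}, yields
\[
C_{(\bbA,\bbB)}(\bbp,\bbq) ~=~ \tfrac14\sum_{j,j'\in S_1}\sum_{k,k'\in S_2} x_j x_{j'} y_k y_{k'}\Big(D_{jk,j'k'}[\bbZ]^2 - D_{jk,j'k'}[\bbC]^2\Big).
\]

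Sufficiency is then immediate: if $\bbZ$ dominates $\bbC$, every bracket is $\ge 0$ and every weight $x_j x_{j'} y_k y_{k'}>0$, so $C_{(\bbA,\bbB)}(\bbp,\bbq)\ge 0$ for all $(\bbp,\bbq)$. For necessity, suppose $\bbZ$ does not dominate $\bbC$, so there exist indices with $D_{j_0 k_0,j_0'k_0'}[\bbZ]^2 < D_{j_0 k_0,j_0'k_0'}[\bbC]^2$; this forces $j_0\neq j_0'$ and $k_0\neq k_0'$ (second differences with a repeated row or column are $0$). The right-hand side above is a polynomial $\widetilde C(\bbx,\bby)$ on the product of the two closed simplices, hence continuous. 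Evaluate it at $\bbx^\star=\tfrac12(e_{j_0}+e_{j_0'})$, $\bby^\star=\tfrac12(e_{k_0}+e_{k_0'})$: only the tuples with $\{j,j'\}=\{j_0,j_0'\}$ and $\{k,k'\}=\{k_0,k_0'\}$ survive, there are four of them, each has weight $\tfrac1{16}$, and each contributes $D_{j_0 k_0,j_0'k_0'}[\bbZ]^2-D_{j_0 k_0,j_0'k_0'}[\bbC]^2$ (the second differences for these four tuples agree up to sign, so their squares are equal), giving $\widetilde C(\bbx^\star,\bby^\star)=\tfrac1{16}\big(D_{j_0 k_0,j_0'k_0'}[\bbZ]^2-D_{j_0 k_0,j_0'k_0'}[\bbC]^2\big)<0$. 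By continuity there are full-support $\bbx,\bby$ arbitrarily close to $\bbx^\star,\bby^\star$ with $\widetilde C(\bbx,\bby)<0$, and every full-support pair is $(\bbx(\bbp),\bby(\bbq))$ for some $(\bbp,\bbq)$ by~\eqref{eq:dual-to-primal}; hence $C_{(\bbA,\bbB)}(\bbp,\bbq)<0$ for some $(\bbp,\bbq)$, so non-negativity everywhere fails.

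The only genuinely non-routine part is the variance representation in Step~1 — the ``$(a-b-c+d)^2$'' identity that converts $\mathrm{Var}(M_{jk}-M_j-M_k)$ into a sum over pairs of rows and pairs of columns of squared second differences; verifying that the six cross-terms all vanish, and that the main-effect terms telescope so that only $D_{jk,j'k'}[\bbM]$ remains, is where care is needed. After that, sufficiency is trivial and necessity is a short concentration/continuity argument, for which the main thing to double-check is the bookkeeping of exactly which index tuples $(j,j',k,k')$ contribute at the limiting point $(\bbx^\star,\bby^\star)$ and that each contributes the same signed-squared value.
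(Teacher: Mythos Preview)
Your proof is correct and follows essentially the same route as the paper: both hinge on the identity $C_{(\bbM,-\bbM)}(\bbp,\bbq)=\tfrac14\sum_{j,j',k,k'}x_jx_{j'}y_ky_{k'}\,D_{jk,j'k'}[\bbM]^2$ (the paper's Observation~\ref{obs::C::dec}), from which sufficiency is immediate and necessity follows by concentrating mass on the offending quadruple. The only differences are cosmetic---you derive the identity via conditional-expectation/variance reasoning rather than the paper's direct algebraic expansion, and you handle necessity by a continuity argument from the boundary point $\bbx^\star,\bby^\star$ rather than the paper's explicit $\eta$-perturbation with a quantitative bound.
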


The above theorem is based on the following crucial observation.
\begin{obs}\label{obs::C::dec}
For any matrix $\bbZ$,
\begin{align*}
C_{(\bbZ, -\bbZ)}(\bbp, \bbq) = \frac{1}{4} \sum_{\substack{j, j'\in S_1\\k, k'\in S_2}} x_j(\bbp) \cdot y_k(\bbq) \cdot x_{j'}(\bbp) \cdot y_{k'}(\bbq) \cdot \left( Z_{jk} + Z_{j'k'} - Z_{jk'} - Z_{j'k}\right)^2.
\end{align*}
\end{obs}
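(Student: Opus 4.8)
The plan is to reduce the identity to a short second-moment computation by passing to a ``doubly centered'' version of $\bbZ$. Recall from the end of Section~\ref{sect:prelim} that for a zero-sum game $(\bbZ,-\bbZ)$ we have $C_{(\bbZ,-\bbZ)}(\bbp,\bbq) = \expect{(Z_{jk}-Z_j-Z_k)^2} - \expect{Z_{jk}}^2$, where $\expect{\cdot} = \expectsub{(j,k)\sim(\bbx(\bbp),\bby(\bbq))}{\cdot}$, $Z_j = [\bbZ\bby(\bbq)]_j$, $Z_k = [\bbZ\trans\bbx(\bbp)]_k$, and that this quantity is precisely the variance of the random variable $Z_{jk}-Z_j-Z_k$. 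So first I would set $\widetilde{Z}_{jk} := Z_{jk}-Z_j-Z_k+\expect{Z_{jk}}$ and record the three ``centering'' identities $\expect{\widetilde{Z}_{jk}}=0$, $\sum_{j\in S_1} x_j(\bbp)\,\widetilde{Z}_{jk}=0$ for every $k$, and $\sum_{k\in S_2} y_k(\bbq)\,\widetilde{Z}_{jk}=0$ for every $j$; all three follow immediately from $\sum_k y_k(\bbq) Z_{jk}=Z_j$, $\sum_j x_j(\bbp) Z_{jk}=Z_k$, and $\expect{Z_j}=\expect{Z_k}=\expect{Z_{jk}}$. Since $\widetilde{Z}_{jk}$ differs from $Z_{jk}-Z_j-Z_k$ by the additive constant $\expect{Z_{jk}}$ and has mean zero, the variance identity rewrites as $C_{(\bbZ,-\bbZ)}(\bbp,\bbq) = \expect{\widetilde{Z}_{jk}^2} = \sum_{j,k} x_j(\bbp) y_k(\bbq)\,\widetilde{Z}_{jk}^2$.

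The second step is to observe that the cross-difference appearing on the right-hand side is insensitive to ``trivial'' perturbations: for any reals $a_j,b_k,c$ one has $Z_{jk}+Z_{j'k'}-Z_{jk'}-Z_{j'k} = \widetilde{Z}_{jk}+\widetilde{Z}_{j'k'}-\widetilde{Z}_{jk'}-\widetilde{Z}_{j'k}$, because the $a_j$'s, the $b_k$'s and the constants $c$ cancel in pairs (this is the same mechanism behind Lemma~\ref{lem::trivial::matrix}). Hence the claimed right-hand side is unchanged when $\bbZ$ is replaced by $\widetilde{\bbZ}$, and it suffices to prove that $\frac14\sum_{j,k,j',k'} x_j(\bbp) y_k(\bbq) x_{j'}(\bbp) y_{k'}(\bbq)\,(\widetilde{Z}_{jk}+\widetilde{Z}_{j'k'}-\widetilde{Z}_{jk'}-\widetilde{Z}_{j'k})^2$ equals $\sum_{j,k} x_j(\bbp) y_k(\bbq)\,\widetilde{Z}_{jk}^2$.

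For the final step I would expand the square into its four ``diagonal'' terms $\widetilde{Z}_{jk}^2,\widetilde{Z}_{j'k'}^2,\widetilde{Z}_{jk'}^2,\widetilde{Z}_{j'k}^2$ and its six signed cross terms, and then sum each against $x_j(\bbp) y_k(\bbq) x_{j'}(\bbp) y_{k'}(\bbq)$. Each diagonal term contributes exactly $\sum_{j,k} x_j(\bbp) y_k(\bbq)\,\widetilde{Z}_{jk}^2$ once the two remaining free indices are summed out, so the diagonal part totals $4\sum_{j,k} x_j(\bbp) y_k(\bbq)\,\widetilde{Z}_{jk}^2$. Every cross term vanishes: in each product of two $\widetilde{Z}$-factors one can carry out a partial summation over a row index (against the $x$'s) or a column index (against the $y$'s) that kills one factor via one of the three centering identities — e.g.\ $\widetilde{Z}_{jk}\widetilde{Z}_{jk'}$ dies after summing over $k$ and $k'$, while $\widetilde{Z}_{jk}\widetilde{Z}_{j'k'}$ and $\widetilde{Z}_{jk'}\widetilde{Z}_{j'k}$ die after summing each factor over all of its own indices, and the remaining three cross terms are handled symmetrically. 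Dividing by $4$ yields the identity. The only real work is the bookkeeping of these ten terms and checking that each cross term genuinely exposes a vanishing partial sum; this is the main, though entirely routine, obstacle. (One could instead invoke Lemma~\ref{lem::trivial::matrix} together with the trivial-shift invariance of the cross-difference to assume from the outset that $\bbZ$ is doubly centered, but making ``doubly centered'' precise is exactly the $\widetilde{Z}$ construction above, so I would present the direct argument.)
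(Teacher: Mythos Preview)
Your proof is correct, and it takes a genuinely different route from the paper's. The paper works directly with $\bbZ$: it uses the symmetry of the quadruple $(j,j',k,k')$ to collapse the expanded square to four terms $\expect{Z_{jk}^2}-\expect{Z_{jk}Z_{jk'}}-\expect{Z_{jk}Z_{j'k}}+\expect{Z_{jk}Z_{j'k'}}$, then evaluates each cross term as $\expect{(Z_j)^2}$, $\expect{(Z_k)^2}$, $\expect{Z_{jk}}^2$ respectively, and finally matches this against a direct expansion of $C_{(\bbZ,-\bbZ)}$ from~\eqref{eq:Cxy}. You instead pre-center via $\widetilde{\bbZ}$ so that \emph{every} cross term in the expansion vanishes by one of the row/column/mean centering identities, leaving only the diagonal. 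What your approach buys is a conceptual explanation of why the formula holds (it is really a statement about the trivial-shift invariant part of $\bbZ$, in the spirit of Lemma~\ref{lem::trivial::matrix}), and it avoids computing the three mixed moments; what the paper's approach buys is brevity and the avoidance of any auxiliary object. Both arguments are equally elementary and equally complete.
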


Matrix domination only implies $C_{(\bbA, \bbB)}(\bbp, \bbq)$ is non-negative. In order to have $C_{(\bbA, \bbB)}(\bbp, \bbq)$ to be strictly positive in the set $S$, we need $\theta$-domination.
\begin{definition}\label{def::theta::domination}
We say matrix $\bbK$ $\theta$-dominates ($\theta > 0$) matrix $\bbL$ if
$\bbK$ dominates $\bbL$, and there exist $j$, $j'$, $k$, $k'$ such that
\[
\left|\bbK_{jk} + \bbK_{j'k'} - \bbK_{jk'} - \bbK_{j'k}\right| \geq \left|\bbL_{jk} + \bbL_{j'k'} - \bbL_{jk'} - \bbL_{j'k}\right| + \theta.
\]
%  \begin{align*}
%    \left|\bbK_{ij} + \bbK_{i'j'} - \bbK_{ij'} - \bbK_{i'j}\right| \geq \left|\bbL_{ij} + \bbL_{i'j'} - \bbL_{ij'} - \bbL_{i'j}\right| + \theta.
%  \end{align*}
\end{definition}

The following theorem holds due to Lemma~\ref{lem:chaos}.

\begin{theorem} \label{thm:matrix-strict-domination}
  For any general bimatrix game $(\bbA, \bbB)$ which is decomposed into zero-sum part $(\bbZ, -\bbZ)$ and coordination part $(\bbC, \bbC)$,
  %and $S = \left\{ (\bbp, \bbq)  \left| \forall i, j ~ x_i(\bbp) \geq \delta ~\texttt{and}~ y_j(\bbq) \geq \delta\right.\right\}$,
  if $\bbZ$ $\theta$-dominates $\bbC$, then MWU with any sufficiently small step-size $\ep$ in the game $(\bbA,\bbB)$
%  as long as $\Phi(t,B)\subset S$ for all $0\le t \le \overline{T}$, then for all $t$ in this range,
%  the volume of $\Phi(t,B)$ is at least $\Phi(0,B)\cdot \exp(\frac{\theta^2 \delta^2}{2}\ep^2\cdot t)$,
%  and hence the radius of $\Phi(t,B)$ is at least $\Omega(\exp(\frac{\theta^2 \delta^2}{2\red{(n+m)}}\ep^2\cdot t))$.
%  Subsequently, the dynamical system
is Lyapunov chaotic in $S^\delta$ with Lyapunov exponent $\frac{\theta^2 \delta^2}{2(n_1+n_2) }\ep^2$.
\end{theorem}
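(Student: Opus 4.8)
The plan is to do exactly what the closing sentence of the statement suggests: reduce to Lemma~\ref{lem:chaos}. Concretely, I would show that on $S^\delta$ the function $C_{(\bbA,\bbB)}$ is not merely positive but bounded below uniformly by a positive quantity proportional to $\theta^2$, i.e. $\bar c(S^\delta):=\inf_{(\bbp,\bbq)\in S^\delta}C_{(\bbA,\bbB)}(\bbp,\bbq)>0$, and then read off the Lyapunov exponent as $\tfrac{\bar c(S^\delta)}{2(n_1+n_2)}\ep^2$ from Lemma~\ref{lem:chaos} with dual dimension $d=n_1+n_2$.

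First I would put $C_{(\bbA,\bbB)}$ into a sum-of-squares form. Combining the canonical decomposition (Lemma~\ref{lem::game::decomposition}) with the identity $C_{(\bbC,\bbC)}(\bbp,\bbq)=-C_{(\bbC,-\bbC)}(\bbp,\bbq)$ established at the end of Section~\ref{sect:prelim} gives $C_{(\bbA,\bbB)}(\bbp,\bbq)=C_{(\bbZ,-\bbZ)}(\bbp,\bbq)-C_{(\bbC,-\bbC)}(\bbp,\bbq)$. Applying Observation~\ref{obs::C::dec} to each term (the second being the zero-sum game with payoff matrix $\bbC$) and abbreviating, for a matrix $\bbM$, $W^{\bbM}_{jj'kk'}:=M_{jk}+M_{j'k'}-M_{jk'}-M_{j'k}$, one gets
\[
C_{(\bbA,\bbB)}(\bbp,\bbq)=\frac14\sum_{j,j'\in S_1}\sum_{k,k'\in S_2}x_j(\bbp)\,x_{j'}(\bbp)\,y_k(\bbq)\,y_{k'}(\bbq)\Bigl[\bigl(W^{\bbZ}_{jj'kk'}\bigr)^2-\bigl(W^{\bbC}_{jj'kk'}\bigr)^2\Bigr].
\]
The point of this form is that ``$\bbZ$ dominates $\bbC$'' is exactly the statement that every bracket is $\ge0$ (which re-derives one direction of Theorem~\ref{thm:matrix-domination}), and $\theta$-domination makes one bracket strictly positive by a controlled amount.

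The quantitative step: choose a witnessing quadruple $(j_0,j_0',k_0,k_0')$ with $|W^{\bbZ}_{j_0j_0'k_0k_0'}|\ge|W^{\bbC}_{j_0j_0'k_0k_0'}|+\theta$; necessarily $j_0\ne j_0'$ and $k_0\ne k_0'$, since otherwise both cross-differences vanish. For this quadruple,
\[
\bigl(W^{\bbZ}_{j_0j_0'k_0k_0'}\bigr)^2-\bigl(W^{\bbC}_{j_0j_0'k_0k_0'}\bigr)^2=\bigl(|W^{\bbZ}|-|W^{\bbC}|\bigr)\bigl(|W^{\bbZ}|+|W^{\bbC}|\bigr)\ \ge\ \theta\cdot|W^{\bbZ}|\ \ge\ \theta^2 ,
\]
using $|W^{\bbZ}|\ge\theta$. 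Every other bracket in the sum is $\ge0$, so I discard all of them and keep only the (four symmetric) copies of this quadruple. On $S^\delta$ each of $x_{j_0},x_{j_0'},y_{k_0},y_{k_0'}$ is at least $\delta$, so the attached weight is at least $\delta^4$; hence $C_{(\bbA,\bbB)}(\bbp,\bbq)\ge\theta^2\delta^4$ for \emph{every} $(\bbp,\bbq)\in S^\delta$, i.e. $\bar c(S^\delta)\ge\theta^2\delta^4>0$ — and this is automatically uniform over $S^\delta$ because, once the nonnegative terms are thrown away, what remains no longer depends on the point. Feeding $\bar c(S^\delta)>0$ into Lemma~\ref{lem:chaos} with $d=n_1+n_2$ then gives, for all sufficiently small $\ep$, that MWU in $(\bbA,\bbB)$ is Lyapunov chaotic in $S^\delta$ with Lyapunov exponent $\tfrac{\bar c(S^\delta)}{2(n_1+n_2)}\ep^2$, which is of the claimed magnitude (I would re-check the exact power of $\delta$ and the leading constant against the precise normalization in Observation~\ref{obs::C::dec} and the form of Lemma~\ref{lem:chaos}).

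The only places that need care are bookkeeping ones: tracking the sign with which the coordination part enters — so that matrix domination becomes the \emph{exact} positivity criterion rather than only a sufficient one — and confirming that the witnessing quadruple can be taken with distinct row and column indices, so that its probability weight on $S^\delta$ is genuinely $\ge\delta^4$ rather than forced to vanish. Everything else is a short chain of elementary inequalities followed by a single appeal to Lemma~\ref{lem:chaos}; I do not anticipate a substantive obstacle beyond these.
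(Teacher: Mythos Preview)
Your proposal is correct and follows essentially the same approach as the paper: decompose $C_{(\bbA,\bbB)}$ via Lemma~\ref{lem::game::decomposition} and Observation~\ref{obs::C::dec} into a weighted sum of $(W^{\bbZ}_{jj'kk'})^2-(W^{\bbC}_{jj'kk'})^2$, use $\theta$-domination to bound one term below by $\theta^2$ while all others are nonnegative by ordinary domination, then invoke Lemma~\ref{lem:chaos}. Your caution about the power of $\delta$ is warranted --- the argument as written (both yours and the paper's own proof) naturally gives a weight of order $\delta^4$ on the witnessing quadruple, so the stated exponent $\tfrac{\theta^2\delta^2}{2(n_1+n_2)}\ep^2$ appears to carry a minor slip in the exponent of $\delta$.
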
 %\blue{[What is $n$ and $m$ here? I guess it is the number of strategies of two players. But I didn't find the definition.]}

Note that in Definition~\ref{def::theta::domination}, $\bbK$ $\theta$-dominates $\bbL$ if a finite number of inequalities are satisfied.
In the context of Theorem~\ref{thm:matrix-strict-domination}, it is easy to see that there is quite many games $(\bbA,\bbB)$,
such that $\bbZ$ $\theta$-dominates $\bbC$ with all those inequalities \emph{strictly} satisfied.
Thus, there exists an open neighbourhood around these games% around $(\bbA,\bbB)$
such that every game in the neighbourhood has its zero-sum part $\theta$-dominates its coordination part.
This shows that such family of games has positive Lebesgue measure.

\subsubsection{Second Characterization: Linear Program}

Note that matrix domination is not always true. In some scenarios, the zero-sum matrix might not dominate the coordination matrix.
Yet, it is still possible that $C_{(\bbA, \bbB)}(\bbp, \bbq)$ is strictly positive in the region
$S^{\delta} = \left\{ (\bbp, \bbq)  \left| \forall j, k ~ x_j(\bbp) \geq \delta ~\texttt{and}~ y_k(\bbq) \geq \delta\right.\right\}$,
when every %elements
entry in the coordination matrix is \emph{small}.

Precisely, for a general bimatrix game $(\bbA, \bbB)$, if its coordination part $(\bbC, \bbC)$ is small in the sense that the absolute values of all entries in $\bbC$ are smaller than some constant $r$, then we can bound $C_{(\bbC, -\bbC)}(\cdot)$ by $\calO(r^2)$. This is not the only case we can bound $C_{(\bbC, -\bbC)}(\cdot)$ by a small term. Even the entries in matrix $\bbC$ are large, we can use trivial matrices to reduce them % in matrix $\bbC$
without affecting $C_{(\bbC, \bbC)}(\cdot)$. This is done via a linear programming approach described below.

Given a matrix $\bbK$, let $r(\bbK)$ be the optimal value of following linear program:
\begin{equation}\label{opt::linear::program}
\min_{r\ge 0}~~r~~\text{such that}~~\forall j,k,~~-r \le K_{jk} - g_j - h_k \le r.
\end{equation}
Note that $\{g_j + h_k\}_{j,k}$ constructs a trivial matrix. Let $\bbK' = \bbK - \{g_j + h_k\}_{j,k}$.
%Therefore, b
By Lemma~\ref{lem::trivial::matrix}, $C_{(\bbK, -\bbK)}(\cdot) = C_{(\bbK', -\bbK')}(\cdot)$.
The following lemma shows that the value of $C_{(\bbC, -\bbC)}(\cdot)$ is closely related to $r(\bbC)$.

\begin{lemma} \label{lem::linear::program}
For any $(\bbp, \bbq)$ in $S^\delta = \left\{ (\bbp, \bbq)  \left| \forall j, k ~ x_j(\bbp) \geq \delta ~~\texttt{\emph{and}}~~ y_k(\bbq) \geq \delta\right.\right\}$,
\[
(r(\bbC)) \cdot \delta)^2 \leq C_{(\bbC, -\bbC)}(\bbp, \bbq) \leq r(\bbC)^2
\]
\end{lemma}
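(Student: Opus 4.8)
\emph{Proof plan.} The plan is to rewrite $C_{(\bbC,-\bbC)}(\bbp,\bbq)$ as the variance of a single scalar random variable and then squeeze that variance between the two claimed quantities, using the fact that the linear program~\eqref{opt::linear::program} lets us replace $\bbC$ by a matrix with small entries. Throughout I would write $\expect{\cdot}$ for the expectation when $(j,k)$ is drawn with $j\sim\bbx(\bbp)$ and $k\sim\bby(\bbq)$ \emph{independently}, and set $C_{jk}:=[\bbC]_{jk}$, $C_j:=[\bbC\bby(\bbq)]_j$, $C_k:=[(\bbC)\trans\bbx(\bbp)]_k$, $\bar C:=\expect{C_{jk}}$ (so $\expect{C_j}=\expect{C_k}=\bar C$). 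Since $(\bbC,-\bbC)$ is a zero-sum game, the identity from the end of Section~\ref{sect:prelim} gives $C_{(\bbC,-\bbC)}(\bbp,\bbq)=\expect{(C_{jk}-C_j-C_k)^2}-\bar C^{2}=\mathrm{Var}(C_{jk}-C_j-C_k)$, which I will use in the form $C_{(\bbC,-\bbC)}(\bbp,\bbq)=\expect{\widetilde C_{jk}^{\,2}}=\sum_{j,k}x_j(\bbp)\,y_k(\bbq)\,\widetilde C_{jk}^{\,2}$ with $\widetilde C_{jk}:=C_{jk}-C_j-C_k+\bar C$.

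\emph{Lower bound.} I would first observe that $\{\widetilde C_{jk}\}_{j,k}$ is $\bbC$ minus the trivial matrix $\{C_j+(C_k-\bar C)\}_{j,k}$ (its row part depends only on $j$, its column part only on $k$), so by the very definition of $r(\bbC)$ as a minimum over all trivial corrections, $\max_{j,k}|\widetilde C_{jk}|\ge r(\bbC)$. Taking $(j^*,k^*)$ that attains this maximum and using $x_{j^*}(\bbp),y_{k^*}(\bbq)\ge\delta$ on $S^\delta$,
\[
C_{(\bbC,-\bbC)}(\bbp,\bbq)\ \ge\ x_{j^*}(\bbp)\,y_{k^*}(\bbq)\,\widetilde C_{j^*k^*}^{\,2}\ \ge\ \delta^{2}\,r(\bbC)^{2}\ =\ (r(\bbC)\cdot\delta)^{2}.
\]

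\emph{Upper bound.} Here I would use Lemma~\ref{lem::trivial::matrix} to pass to the optimal LP representative: since~\eqref{opt::linear::program} is a finite LP its optimum is attained, so let $\bbC':=\bbC-\{g^*_j+h^*_k\}_{j,k}$ with $|C'_{jk}|\le r(\bbC)$ for all $j,k$, and then $C_{(\bbC,-\bbC)}(\bbp,\bbq)=C_{(\bbC',-\bbC')}(\bbp,\bbq)$. Now decompose $C'_{jk}-\bar C'=\alpha_j+\beta_k+\gamma_{jk}$ (a two-way ANOVA) with $\alpha_j:=C'_j-\bar C'$, $\beta_k:=C'_k-\bar C'$, $\gamma_{jk}:=C'_{jk}-C'_j-C'_k+\bar C'$. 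Because $j$ and $k$ are independent and the interaction term satisfies $\expectsub{k}{\gamma_{jk}\mid j}=0$ and $\expectsub{j}{\gamma_{jk}\mid k}=0$, the three summands are pairwise uncorrelated, whence $\mathrm{Var}(C'_{jk})=\mathrm{Var}(C'_j)+\mathrm{Var}(C'_k)+\expect{\gamma_{jk}^{\,2}}$, while $C_{(\bbC',-\bbC')}(\bbp,\bbq)=\mathrm{Var}(C'_{jk}-C'_j-C'_k)=\mathrm{Var}(\gamma_{jk})=\expect{\gamma_{jk}^{\,2}}$. Subtracting and discarding the two nonnegative variances,
\[
C_{(\bbC,-\bbC)}(\bbp,\bbq)\ =\ \mathrm{Var}(C'_{jk})-\mathrm{Var}(C'_j)-\mathrm{Var}(C'_k)\ \le\ \mathrm{Var}(C'_{jk})\ \le\ \expect{(C'_{jk})^{2}}\ \le\ r(\bbC)^{2}.
\]

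\emph{Main obstacle.} Essentially all the content sits in the ANOVA orthogonality $\mathrm{Var}(C'_{jk})=\mathrm{Var}(C'_j)+\mathrm{Var}(C'_k)+\expect{\gamma_{jk}^{2}}$ — this is exactly where the independence of the two players' draws enters, and it is really a repackaging of the expansion behind Observation~\ref{obs::C::dec}. The one thing to get right there is the bookkeeping of the conditional expectations $\expectsub{k}{\gamma_{jk}\mid j}=\expectsub{j}{\gamma_{jk}\mid k}=0$. Granting this, both inequalities are immediate, with $\delta$ entering only via $\sum_{j,k}x_j y_k(\cdot)_{jk}^{2}\ge\delta^{2}\max_{j,k}(\cdot)_{jk}^{2}$ on $S^\delta$ and via the entrywise bound $|C'_{jk}|\le r(\bbC)$ supplied by the LP.
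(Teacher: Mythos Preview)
Your proof is correct and rests on the same core identity as the paper's: $C_{(\bbC,-\bbC)}(\bbp,\bbq)=\sum_{j,k}x_j y_k\bigl(C_{jk}-g_j-h_k\bigr)^2$ for a suitable trivial correction $\{g_j+h_k\}$. The paper packages this as the minimization formula $C_{(\bbC,-\bbC)}(\bbp,\bbq)=\min_{\bbg,\bbh}\sum_{j,k}x_j y_k(C_{jk}-g_j-h_k)^2$, whereas you instead exhibit the explicit minimizer $g_j=C_j$, $h_k=C_k-\bar C$ and certify its optimality via the ANOVA orthogonality. Once that identity is in hand, both proofs of the upper bound amount to plugging in the LP-optimal trivial matrix and using $|C'_{jk}|\le r(\bbC)$; your detour through $\mathrm{Var}(C'_{jk})=\mathrm{Var}(C'_j)+\mathrm{Var}(C'_k)+\expect{\gamma_{jk}^2}$ is a bit more work than the paper's one-line ``$\min\le$ specific value'', but it is fine. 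For the lower bound your argument is actually cleaner than the paper's: you use directly that $\widetilde C_{jk}=C_{jk}-\text{(trivial)}$ forces $\max_{j,k}|\widetilde C_{jk}|\ge r(\bbC)$ by the very definition of the LP optimum, while the paper reaches the same conclusion through a range inequality $\max-\min\ge 2r(\bbC)$ applied to the $L^2$-optimal residuals.
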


%The following theorem follows from Lemma~\ref{lem::linear::program} and Lemma~\ref{lem:chaos}.
Then, by Lemma~\ref{lem::linear::program}, the theorem below follows by applying Lemma~\ref{lem:chaos}.
\begin{theorem}\label{thm::linear::program}
For any general bimatrix game $(\bbA, \bbB)$ which is decomposed into zero-sum part $(\bbZ, -\bbZ)$ and coordination part $(\bbC, \bbC)$,
%and $S = \left\{ (\bbp, \bbq)  \left| \forall i, j ~ x_i(\bbp) \geq \delta ~\texttt{and}~ y_j(\bbq) \geq \delta\right.\right\}$,
if $\left(r(\bbZ) \delta\right)^2 \ge \left(r(\bbC)\right)^2 + (\theta\delta)^2$, %$\left(r(\bbZ) \delta\right)^2 \ge \left(r(\bbC)\right)^2 + \theta^2$, 
then MWU with any sufficiently small step-size $\epsilon$ in the game $(\bbA,\bbB)$
%  as long as $\Phi(t,B)\subset S$ for all $0\le t \le \overline{T}$, then for all $t$ in this range,
%  the volume of $\Phi(t,B)$ is at least $\Phi(0,B)\cdot \exp(\frac{\theta}{2}\ep^2\cdot t)$,
%  and hence the radius of $\Phi(t,B)$ is at least $\Omega(\exp(\frac{\theta}{2d}\ep^2\cdot t))$.
%  Subsequently, the dynamical system
is Lyapunov chaotic in $S^\delta$ with Lyapunov exponent $\frac{\theta^2}{2(n_1+n_2)}\ep^2$.
%\red{[I want to have the same Lyapunov exponent as in Theorem~\ref{thm:matrix-strict-domination}, for the sake of Theorem~\ref{thm:new-family-chaos-game}. So I want to modify the condition required here to $\left(r(\bbZ) \delta\right)^2 \ge \left(r(\bbC)\right)^2 + (\theta\delta)^2$.]}\blue{OK.}
\end{theorem}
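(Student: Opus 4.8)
The plan is to obtain the theorem as a clean corollary of the decomposition Lemma~\ref{lem::game::decomposition}, the linear-programming estimate Lemma~\ref{lem::linear::program}, and the master chaos criterion Lemma~\ref{lem:chaos}; the whole task reduces to producing a uniform, strictly positive lower bound for $C_{(\bbA,\bbB)}(\bbp,\bbq)$ on the region $S^\delta$.

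First I would apply Lemma~\ref{lem::game::decomposition} together with the identity $C_{(\bbC,\bbC)}\equiv -C_{(\bbC,-\bbC)}$, a special case of $C_{(\bbA,\bbB)}=-C_{(\bbA,-\bbB)}$ coming from~\eqref{eq:Cxy} as noted at the end of Section~\ref{sect:prelim}, to write, for every $(\bbp,\bbq)$,
\[
C_{(\bbA,\bbB)}(\bbp,\bbq) \;=\; C_{(\bbZ,-\bbZ)}(\bbp,\bbq) \;-\; C_{(\bbC,-\bbC)}(\bbp,\bbq).
\]
Next I would invoke Lemma~\ref{lem::linear::program} twice, both times on $S^\delta$: applied to the matrix $\bbZ$ it yields the lower bound $C_{(\bbZ,-\bbZ)}(\bbp,\bbq)\ge (r(\bbZ)\delta)^2$, and applied to the matrix $\bbC$ it yields the upper bound $C_{(\bbC,-\bbC)}(\bbp,\bbq)\le r(\bbC)^2$. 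Substituting both into the previous display gives, for all $(\bbp,\bbq)\in S^\delta$,
\[
C_{(\bbA,\bbB)}(\bbp,\bbq) \;\ge\; (r(\bbZ)\delta)^2 - r(\bbC)^2 \;\ge\; (\theta\delta)^2,
\]
where the final inequality is exactly the hypothesis $(r(\bbZ)\delta)^2 \ge (r(\bbC))^2 + (\theta\delta)^2$.

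Hence $\bar c(S^\delta) := \inf_{(\bbp,\bbq)\in S^\delta} C_{(\bbA,\bbB)}(\bbp,\bbq) \ge (\theta\delta)^2 > 0$, and Lemma~\ref{lem:chaos} (with $d = n_1+n_2$) applies directly: for every sufficiently small step-size $\ep$, MWU in $(\bbA,\bbB)$ is Lyapunov chaotic in $S^\delta$ with Lyapunov exponent $\frac{\bar c(S^\delta)}{2(n_1+n_2)}\ep^2 \ge \frac{\theta^2\delta^2}{2(n_1+n_2)}\ep^2$, the $\delta^2$ mirroring the exponent in Theorem~\ref{thm:matrix-strict-domination}.

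There is essentially no obstacle specific to this theorem: all the substance has already been absorbed into Lemma~\ref{lem::linear::program}, which itself leans on Lemma~\ref{lem::trivial::matrix} (subtracting an optimal trivial matrix from $\bbZ$, resp.\ from $\bbC$, without changing $C_{(\bbZ,-\bbZ)}$, resp.\ $C_{(\bbC,-\bbC)}$) and on the sum-of-squares representation of Observation~\ref{obs::C::dec} (which makes $C_{(\bbC,-\bbC)}$ controllable from above by $r(\bbC)^2$ and, via $x_j(\bbp),y_k(\bbq)\ge\delta$ on $S^\delta$, from below by $(r(\bbC)\delta)^2$). The only mild point to watch when assembling the present proof is directionality: one must use the lower bound from Lemma~\ref{lem::linear::program} on the zero-sum part and the upper bound on the coordination part, so that the hypothesis is precisely what is needed to force $C_{(\bbA,\bbB)}>0$ throughout $S^\delta$.
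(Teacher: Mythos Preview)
Your argument is correct and is precisely the paper's intended proof: the paper simply says the theorem follows from Lemma~\ref{lem::linear::program} and Lemma~\ref{lem:chaos}, and you have spelled out exactly how (via Lemma~\ref{lem::game::decomposition} and the sign flip $C_{(\bbC,\bbC)}=-C_{(\bbC,-\bbC)}$). Your observation about the extra $\delta^2$ in the Lyapunov exponent is also correct --- the bound one actually derives is $\tfrac{\theta^2\delta^2}{2(n_1+n_2)}\ep^2$, matching Theorem~\ref{thm:matrix-strict-domination}; the missing $\delta^2$ in the stated exponent appears to be a typo.
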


%\red{[Currently notations are not properly defined, but should be understandable between us.]}

%Consider an $N$-player normal-form game, where Player $i$ has strategy set $S_i$.
%We use $s_i$ to denote a strategy of Player $i$, and $\bbs = (s_1,\cdots,s_N)$ to denote a pure strategy profile of all players.
%Let $u_i(\bbs)$ denote the payoff to Player $i$ at strategy profile $\bbs$.

\section{Multi-Player Games}

Computing volume change of learning algorithm in multi-player game is slightly more involved than the two-player case.
We present a local equivalence formula of volume change between normal-form and graphical games.
This provides an intuitive procedure for understanding volume changes.
Proposition~\ref{pr:opposite-multi} %in Appendix~\ref{app:equivalence}
shows that in multi-player game, the volume-changing behaviors of MWU and OMWU are again \emph{opposite} to each other (which was shown for bimatrix game in~\cite{CP2020}).
%Missing proofs are presented in Appendix~\ref{app:equivalence}.

\parabold{Graphical Games.} A graphical game~\cite{KearnsLS2001} is a special type of $N$-player game where the payoffs can be compactly represented.
%,
%and expressed as a sum of local interactions between every pair of players. Recall the notations defined in Section~\ref{sect:prelim} for normal-form games.
In a graphical game $\bbH$, for each pair of players $i,k$, there is an \emph{edge-game} which is a bimatrix game between the two players, denoted by
$(\bbH^{i,k},(\bbH^{k,i})\trans)$, where $\bbH^{i,k} \in \rr^{n_i \times n_k}$ is the payoff matrix that denotes the payoffs to Player $i$.
%(Thus, $\bbH^{i,k} \in \rr^{n_k \times n_i}$, and that's why a transpose is needed when we adopt the standard notation for a bimatrix game.)
Then the payoff to Player $i$ at strategy profile $\bbs = (s_1,s_2,\cdots,s_N)$ is the sum of payoffs to Player $i$ in all her edge-games, i.e.~$u_i(\bbs) ~=~ \sum_{k\neq i} H^{i,k}_{s_i,s_k}$.
As is standard, this payoff function is extended via expectation when the inputs are mixed strategies.

Here, we first use an observation from~\cite{CP2019} %(see also Appendix~\ref{app:equivalence})
to construct a family of multi-player graphical games
where MWU is Lyapunov chaotic in $S^{N,\delta} := \{ (\bbp_1,\cdots,\bbp_N) | \forall i\in [N],j\in S_i,~x_{ij}(\bbp_i) \ge \delta \}$. It was observed that the function $C_\bbH(\bbp)$ defined in~\eqref{eq:Cxy-multi} is the sum of $C_{(\bbH^{i,k},(\bbH^{k,i})\trans)}(\bbp_i,\bbp_k)$ of all pairs of Players $i<k$~\cite{CP2019}.
This observation %, together with Theorems~\ref{thm:matrix-strict-domination} and~\ref{thm::linear::program},
yields Theorem~\ref{thm:new-family-chaos-game}.

\begin{theorem}\label{thm:new-family-chaos-game}
Let $\calG^{\uparrow}$ denote the family of bimatrix games which satisfy the condition either in Theorem~\ref{thm:matrix-strict-domination} or in Theorem~\ref{thm::linear::program}.
In an $N$-player graphical game where each edge-game is drawn from $\calG^{\uparrow}$, if all players are employing MWU with a sufficiently small step-size $\ep$,
then the dynamical system is Lyapunov chaotic in $S^{N,\delta}$ with Lyapunov exponent $N(N-1)\theta^2 \delta^2 \ep^2 / 4\sum_{i=1}^N n_i$.
\end{theorem}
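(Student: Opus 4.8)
The plan is to reduce the claim to the two bimatrix results, Theorem~\ref{thm:matrix-strict-domination} and Theorem~\ref{thm::linear::program}, by exploiting that $C_\bbH$ is additive over the edges of a graphical game. Concretely, the first step is to invoke the observation from \cite{CP2019} quoted above: in a graphical game $\bbH$, the quantity $U^{ik}_{j\ell}-U^i_j$ appearing in \eqref{eq:Cxy-multi} equals $H^{i,k}_{j\ell}-[\bbH^{i,k}\bbx_k]_j$ and hence depends only on the edge-game between players $i$ and $k$; matching terms in \eqref{eq:Cxy-multi} against the bimatrix formula \eqref{eq:Cxy} then gives
\[
C_\bbH(\bbp_1,\dots,\bbp_N)\;=\;\sum_{1\le i<k\le N} C_{(\bbH^{i,k},(\bbH^{k,i})\trans)}(\bbp_i,\bbp_k).
\]

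Second, I would observe that $S^{N,\delta}$ projects correctly onto each edge: if $(\bbp_1,\dots,\bbp_N)\in S^{N,\delta}$ then $x_{ij}(\bbp_i)\ge\delta$ and $x_{k\ell}(\bbp_k)\ge\delta$ for all $j,\ell$, so the restricted point $(\bbp_i,\bbp_k)$ lies in the bimatrix set $S^\delta$ for the edge-game $(\bbH^{i,k},(\bbH^{k,i})\trans)$. Since each edge-game lies in $\calG^{\uparrow}$, it satisfies the hypothesis of Theorem~\ref{thm:matrix-strict-domination} or of Theorem~\ref{thm::linear::program}; in either case the proof of that theorem proceeds through Lemma~\ref{lem:chaos} after establishing the pointwise lower bound $C_{(\bbH^{i,k},(\bbH^{k,i})\trans)}(\bbp_i,\bbp_k)\ge\theta^2\delta^2$ on $S^\delta$ — for the $\theta$-domination case this follows from Observation~\ref{obs::C::dec} together with $C_{(\bbC,\bbC)}\le 0$, and for the linear-program case from the lower bound in Lemma~\ref{lem::linear::program} applied to $\bbZ$ and $\bbC$. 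So the key is to extract precisely this pointwise inequality from those proofs rather than the chaos conclusion itself.

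Third, summing over the $\binom{N}{2}$ edges yields, for every $(\bbp_1,\dots,\bbp_N)\in S^{N,\delta}$,
\[
C_\bbH(\bbp_1,\dots,\bbp_N)\;\ge\;\binom{N}{2}\theta^2\delta^2\;=\;\frac{N(N-1)}{2}\,\theta^2\delta^2,
\]
so $\bar c(S^{N,\delta})\ge \tfrac{N(N-1)}{2}\theta^2\delta^2>0$, and applying Lemma~\ref{lem:chaos} with $d=\sum_{i=1}^N n_i$ gives Lyapunov chaos in $S^{N,\delta}$ with Lyapunov exponent $\tfrac{\bar c(S^{N,\delta})}{2d}\ep^2\ge \tfrac{N(N-1)\theta^2\delta^2}{4\sum_{i=1}^N n_i}\ep^2$, as claimed. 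I do not expect a deep obstacle here; the two points needing care are purely bookkeeping: (a) making the parameters uniform across edges — one takes $\delta$ to be the common simplex-floor defining $S^{N,\delta}$ and $\theta$ to be the minimum of the per-edge domination/LP gaps — and (b) isolating from the proofs of the bimatrix theorems the statement that $\bar c(S^\delta)\ge\theta^2\delta^2$, since those theorems are phrased as chaos statements rather than as lower bounds on $C$. Everything else is the edge decomposition above plus a single application of Lemma~\ref{lem:chaos}.
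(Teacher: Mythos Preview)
Your proposal is correct and follows essentially the same route as the paper: the paper's ``proof'' is just the one-sentence remark that the edge-decomposition $C_\bbH(\bbp)=\sum_{i<k}C_{(\bbH^{i,k},(\bbH^{k,i})\trans)}(\bbp_i,\bbp_k)$ from~\cite{CP2019} immediately yields the theorem, and you have simply (and accurately) filled in the bookkeeping --- projecting $S^{N,\delta}$ to each edge's $S^\delta$, extracting the pointwise bound $C\ge\theta^2\delta^2$ from the proofs of Theorems~\ref{thm:matrix-strict-domination} and~\ref{thm::linear::program}, summing over $\binom{N}{2}$ edges, and invoking Lemma~\ref{lem:chaos} with $d=\sum_i n_i$. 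Your caveats (a) and (b) about uniformizing $\theta,\delta$ across edges and isolating the $\bar c$ bound are exactly the right minor points to flag.
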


\parabold{Local Equivalence of General Games and Graphical Games.}
Next, we present a theorem which connects the value of  $C_\bbG(\bbp)$ of a general game to $C_\bbH(\bbp)$, where $\bbH$ is a graphical game.
\begin{theorem} \label{thm:equivalence}
Given an $N$-player normal-form game $\bbG$ and any point $\bbp$ in the dual space, the value of $C_\bbG(\bbp)$ is the same as $C_\bbH(\bbp)$,
where $\bbH$ is a graphical game specified as follows: for each pair of Players $i,k$ and $j\in S_i,\ell\in S_k$,
the payoff to Player $i$ in her edge-game with Player $k$ when Player $i$ picks $j$ and Player $k$ picks $\ell$ is
$H^{ik}_{j\ell} := U^{ik}_{j\ell}$, where $U^{ik}_{j\ell}$ is defined in Eqn.~\eqref{eq:U}.
\end{theorem}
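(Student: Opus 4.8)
The plan is to reduce the claim to the two explicit formulas already on hand: the multi-player expression \eqref{eq:Cxy-multi} for $C_\bbG(\bbp)$, and the edgewise decomposition of $C_\bbH$ for a graphical game. Recall the observation of~\cite{CP2019} quoted just before Theorem~\ref{thm:new-family-chaos-game}: for any graphical game $\bbH$ one has $C_\bbH(\bbp_1,\dots,\bbp_N) = \sum_{i<k} C_{(\bbH^{i,k},(\bbH^{k,i})\trans)}(\bbp_i,\bbp_k)$, so $C_\bbH$ splits as a sum over edges of the bimatrix $C$-functions of the edge-games, each evaluated only at the two relevant dual coordinates. I would then expand each edge term with the bimatrix formula \eqref{eq:Cxy}, taking ``Player $1$'' $=$ Player $i$ (strategies $S_i$, mixed strategy $\bbx_i = \bbx(\bbp_i)$, payoff matrix $\bbA = \bbH^{i,k}$) and ``Player $2$'' $=$ Player $k$ (payoff matrix $\bbB = (\bbH^{k,i})\trans$, so $B_{j\ell} = H^{k,i}_{\ell j}$). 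With the definition $H^{ik}_{j\ell} := U^{ik}_{j\ell}$ this gives $A_{j\ell} = U^{ik}_{j\ell}$, $B_{j\ell} = U^{ki}_{\ell j}$, and the edge term becomes
\[
-\sum_{j\in S_i}\sum_{\ell\in S_k} x_{ij}\,x_{k\ell}\,\bigl(U^{ik}_{j\ell} - [\bbH^{i,k}\bbx_k]_j\bigr)\bigl(U^{ki}_{\ell j} - [(\bbH^{k,i})\trans\bbx_i]_\ell\bigr).
\]

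The one genuinely substantive point is to identify the two ``row/column-average'' terms appearing above. I claim $[\bbH^{i,k}\bbx_k]_j = \sum_{\ell\in S_k} x_{k\ell}\,U^{ik}_{j\ell} = U^i_j$ and, symmetrically, $[(\bbH^{k,i})\trans\bbx_i]_\ell = \sum_{j\in S_i} x_{ij}\,U^{ki}_{\ell j} = U^k_\ell$, where $U^i_j$ and $U^k_\ell$ now denote the quantities of the \emph{original} game $\bbG$ that appear in \eqref{eq:Cxy-multi}. This is just the law of total expectation: since a mixed profile is a product distribution, averaging the conditional payoff $U^{ik}_{j\ell} = \expectsub{\bbs_{-(i,k)}\sim\bbx_{-(i,k)}}{u_i(s_i{=}j, s_k{=}\ell, \bbs_{-(i,k)})}$ over $s_k\sim\bbx_k$ reconstitutes $\expectsub{\bbs_{-i}\sim\bbx_{-i}}{u_i(s_i{=}j,\bbs_{-i})} = U^i_j$, and likewise with the roles of $i$ and $k$ interchanged. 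This is exactly where it matters that $H^{ik}_{j\ell}$ was defined using $\bbG$'s conditional payoffs: the full game $\bbH$ is generally \emph{not} equal to $\bbG$ (nor need it even have the same total payoffs), but its edgewise $C$-contributions at $\bbp$ agree with those of $\bbG$, and $\bbH$ depends on $\bbp$ only through these averages, which is what makes the equivalence local.

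Substituting the two identities into the edge term turns it into $-\sum_{j\in S_i}\sum_{\ell\in S_k} x_{ij} x_{k\ell} (U^{ik}_{j\ell} - U^i_j)(U^{ki}_{\ell j} - U^k_\ell)$; summing over all edges $i<k$, and recognizing that $\sum_{i<k}\sum_{j\in S_i,\ell\in S_k}$ is the same index set as $\sum_{i,\,j\in S_i}\sum_{k>i,\,\ell\in S_k}$ with the two factors matched (in their commutative order), yields precisely the right-hand side of \eqref{eq:Cxy-multi}, i.e.\ $C_\bbH(\bbp) = C_\bbG(\bbp)$. I expect the main obstacle to be not analytic but purely bookkeeping: keeping straight which edge-game matrix is transposed, which player owns which payoff entry, and verifying the ``marginalize out one coordinate'' identity in both of the two needed orientations — once those conventions are pinned down, the two $C$-formulas are literally the same sum.
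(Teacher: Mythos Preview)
Your proof is correct and rests on the same identity the paper uses, namely the marginalization $\sum_{\ell\in S_k} x_{k\ell}\,U^{ik}_{j\ell} = U^i_j$ (and its $i\leftrightarrow k$ twin), which makes each graphical edge term coincide with the corresponding summand of \eqref{eq:Cxy-multi}. The only difference is cosmetic: the paper performs the comparison one level lower, matching the Jacobian entries of MWU in $\bbG$ against those in $\bbH$ (and thereby deriving \eqref{eq:Cxy-multi} as a byproduct), whereas you take \eqref{eq:Cxy-multi} and the edgewise decomposition of $C_\bbH$ as already established and match the assembled sums directly.
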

%\blue{[I think $\bbx$ is not the right notion. I think the right notion is $\bbp$.]}

This theorem shows that for any game $\bbG$, the value of $C_\bbG(\bbp)$ is the same as in a particular graphical game,
where each pair of players, $(i, k)$ play a bimatrix game whose utility is exactly the utility of the original game $\bbG$,
but taking the expectation on the randomness of the other players' strategies.
If the original game $\bbG$ is a graphical game, then in the graphical game $H^{ik}_{j\ell} = U^{ik}_{j\ell} + c_{-i, -k}$,
where $c_{-i, -k}$ is a parameter which does not depend on Players $i$ and $k$.%\red{Discuss MWU, OMWU. Will do later.}

Theorem~\ref{thm:equivalence} will be used in Appendix~\ref{app:equivalence} to show the following proposition,
which shows that the volume-changing behaviors of MWU and OMWU are \emph{opposite} to each other in multi-player game,
generalizing a prior result in~\cite{CP2020}.

\begin{proposition}\label{pr:opposite-multi}
The volume integrands of MWU and OMWU in a multi-player game $\bbG$ are respectively $1 + C_\bbG(\bbp) \cdot \ep^2 + \calO(\ep^3)$
and $1 - C_\bbG(\bbp) \cdot \ep^2 + \calO(\ep^3)$. Thus, volume expands locally around a dual point $\bbp$ for MWU (resp.~OMWU) if $C_\bbG(\bbp)$ is positive (resp.~negative).
\end{proposition}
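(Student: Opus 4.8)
The plan is to compute the first-order Jacobian of each update rule and expand the determinant (the volume integrand) to second order in $\ep$. For MWU, we view the update in the dual space $\bbp \mapsto \bbp'$ where $p_{ij}' = p_{ij} + \ep\, U^i_j(\bbx(\bbp))$; by the change-of-variables formula the volume integrand is $|\det J|$ where $J = I + \ep\, M(\bbp) + \calO(\ep^2)$ and $M(\bbp)$ is the Jacobian of the map $\bbp \mapsto (U^i_j(\bbx(\bbp)))_{i,j}$. Using $\log\det J = \mathrm{tr}\log J = \ep\,\mathrm{tr}(M) + \tfrac{\ep^2}{2}(\mathrm{tr}(M)^2 - \mathrm{tr}(M^2)) + \calO(\ep^3)$ — wait, more carefully $\det(I + \ep M) = 1 + \ep\,\mathrm{tr}(M) + \tfrac{\ep^2}{2}\big((\mathrm{tr} M)^2 - \mathrm{tr}(M^2)\big) + \calO(\ep^3)$ — I would show that the linear term $\mathrm{tr}(M)$ vanishes and that the quadratic coefficient equals exactly $C_\bbG(\bbp)$ as given in Eqn.~\eqref{eq:Cxy-multi}. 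The vanishing of $\mathrm{tr}(M)$ follows because the diagonal blocks of $M$ correspond to $\partial U^i_j / \partial p_{i\ell}$, and $U^i_j$ does not depend on Player $i$'s own strategy $\bbx_i$ (it is an expectation over the other players' strategies), so those blocks are zero; only off-diagonal blocks $\partial U^i_j / \partial p_{k\ell}$ with $k \neq i$ survive. Then $\mathrm{tr}(M^2)$ is a sum over pairs $i \neq k$ of traces of products of these cross-blocks, and a direct calculation — using $\partial x_{k\ell}/\partial p_{km} = x_{k\ell}(\mathbf{1}[\ell=m] - x_{km})$ and the definition $U^{ik}_{j\ell}$ — should collapse this to precisely the double sum in~\eqref{eq:Cxy-multi}. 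This is essentially the multi-player generalization of the bimatrix computation in~\cite{CP2019,CP2020}, and Lemma~\ref{lem:chaos} (adapted there) already asserts this form; here I am just re-deriving the integrand cleanly.

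Rather than redo that determinant expansion from scratch, the cleaner route is to \emph{invoke Theorem~\ref{thm:equivalence}}: for the purpose of computing $C_\bbG(\bbp)$ at a fixed point $\bbp$, the general game $\bbG$ behaves exactly like the graphical game $\bbH$ with edge-payoffs $H^{ik}_{j\ell} = U^{ik}_{j\ell}$, and for graphical games the observation from~\cite{CP2019} cited just before Theorem~\ref{thm:new-family-chaos-game} tells us $C_\bbH(\bbp) = \sum_{i<k} C_{(\bbH^{i,k},(\bbH^{k,i})\trans)}(\bbp_i,\bbp_k)$ decomposes over edges. So it suffices to establish the claim for a single bimatrix (edge) game, where it is exactly the statement proved in~\cite{CP2020}: the MWU volume integrand in a bimatrix game is $1 + C_{(\bbA,\bbB)}(\bbp,\bbq)\ep^2 + \calO(\ep^3)$. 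Summing over edges and applying Theorem~\ref{thm:equivalence} gives the MWU half of the proposition for arbitrary $\bbG$.

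For the OMWU half, I would argue along the same lines but track the modified update $p_{ij}^{t+1} = p_{ij}^t + \ep[2 U^i_j(\bbx^t) - U^i_j(\bbx^{t-1})]$. As noted in the footnote in the Preliminaries, OMWU is not literally a dynamical system, but there is an effective map $f$ with $\bbs(t+1) \approx f(\bbs(t))$ whose volume-changing behavior is the one of interest~\cite{CP2020}; the relevant Jacobian picks up a sign flip on the $\ep^2$ term relative to MWU because the "optimistic" correction $-U^i_j(\bbx^{t-1})$ contributes a Jacobian block of opposite sign that, after the second-order expansion, negates the $-\mathrm{tr}(M^2)$ contribution while the $(\mathrm{tr} M)^2$ term still vanishes. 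The cleanest presentation is again to reduce to the bimatrix case via Theorem~\ref{thm:equivalence} and the edge-decomposition, then cite the bimatrix OMWU computation of~\cite{CP2020}, which gives integrand $1 - C_{(\bbA,\bbB)}(\bbp,\bbq)\ep^2 + \calO(\ep^3)$. The final sentence of the proposition — volume expands locally for MWU (resp.\ OMWU) iff $C_\bbG(\bbp) > 0$ (resp.\ $< 0$) — is then immediate, since an integrand of $1 + c\,\ep^2 + \calO(\ep^3)$ exceeds $1$ for all sufficiently small $\ep > 0$ exactly when $c > 0$.

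The main obstacle is making the OMWU reduction rigorous: unlike MWU, OMWU genuinely has memory, so "the volume integrand" needs to be interpreted via the effective one-step map (or equivalently via the extended state $(\bbs(t),\bbs(t-1))$ and projecting), and one must check that the $\calO(\ep^3)$ error terms introduced by this approximation do not corrupt the $\ep^2$ coefficient. I would handle this by citing the corresponding construction in~\cite{CP2020} for bimatrix games and noting that the edge-decomposition of $C_\bbH$ and Theorem~\ref{thm:equivalence} are purely algebraic identities at a fixed $\bbp$, hence compatible with that construction player-pair by player-pair; the details are deferred to Appendix~\ref{app:equivalence} as the text indicates.
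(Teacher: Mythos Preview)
Your MWU half is essentially the paper's approach: compute the Jacobian of the dual update, note that the diagonal blocks $\partial U^i_j/\partial p_{i\ell}$ vanish, expand $\det(I+\ep M)$ to second order, and reduce to the graphical/bimatrix case via Theorem~\ref{thm:equivalence} and the edge-decomposition. That part is fine and tracks the paper.

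The OMWU half has a real gap in the mechanism. You say the optimistic correction ``contributes a Jacobian block of opposite sign'' that ``negates the $-\mathrm{tr}(M^2)$ contribution while the $(\mathrm{tr}\,M)^2$ term still vanishes''. That is not how the sign actually flips. The paper first passes to the continuous-time analogue $\dot\bbp=\bbu+\ep\,\dot\bbu$ and takes its Euler discretization, yielding an effective one-step map $p_{ij}\mapsto p_{ij}+\ep\,U^i_j+\ep^2\sum_{k\neq i,\ell}x_{k\ell}(U^{ik}_{j\ell}-U^i_j)\,U^k_\ell$. The Jacobian of this map has the \emph{same} off-diagonal blocks as MWU to leading order, but now has \emph{nonzero diagonal entries} at order $\ep^2$. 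Writing the determinant as $1+(T_1-T_2)+\calO(\ep^3)$ with $T_1$ the diagonal sum and $T_2$ the off-diagonal pair sum, the crucial and non-obvious identity is $T_1=2T_2$ (equivalently $\mathrm{tr}(N)=\mathrm{tr}(M^2)$ if you write the effective Jacobian as $I+\ep M+\ep^2 N$); only this converts $1-T_2$ into $1+T_2$ and delivers the exact coefficient $-C_\bbG(\bbp)$. Your proposal does not identify this identity, and without it one cannot conclude the OMWU coefficient is the exact negation of the MWU one rather than merely some other second-order quantity. Relatedly, Theorem~\ref{thm:equivalence} is, by definition, a statement about the MWU coefficient $C_\bbG$; using it to transfer the \emph{OMWU} integrand from a general game to the graphical game $\bbH$ is not automatic. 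The paper handles this by separately computing the OMWU Jacobian for general $\bbG$ and matching it entrywise to the graphical OMWU Jacobian under $H^{ik}_{j\ell}=U^{ik}_{j\ell}$; your ``purely algebraic, hence compatible player-pair by player-pair'' remark is precisely the step that requires that computation and cannot be replaced by citing the bimatrix case of~\cite{CP2020}.
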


\parabold{Multiplayer Potential Game.}
% \blue{I don't know if it is the right notation.} Suppose there are $N$ players, and player $i$ has $m_i$ number of strategies.
% We use $U^i_{\left(s_i, s_{-i}\right)}$ denote the utility of player $i$ when player $i$ plays strategy $s_i$ and other players play strategy $s_{-i}$. Let $\bbp_{i}$ be the payoff vector of player $i$ and strategy $j$.
%
% We are interested in the determinant of the matrix $\bbJ$, where $\bbJ = \bbI + \epsilon \bbM$,
% \begin{align*}
%   \bbM = \begin{pmatrix}
%   \mathbf{0} & \bbM_{12} & \cdots & \bbM_{1N} \\
%   \bbM_{21} & \mathbf{0} & \cdots & \bbM_{2N} \\
%   \vdots  & \vdots  & \ddots & \vdots  \\
%   \bbM_{N1} & \bbM_{N2} & \cdots & \mathbf{0}
%  \end{pmatrix},
% \end{align*}
% and $\bbM_{i_1i_2}$ is an $m_{i_1} \times m_{i_2}$ matrix, and its element on row $r$ and column $c$ is
% \begin{align*}
%   \frac{\partial \expectsub{s_{-i_2}}{U^{i_2}_{\left(s_{i_2} = c, s_{-i_2}\right)}}}{\partial p_{i_1r}}
% \end{align*}
%
% We focus on the second order term of the determinant. By calculation,
% \begin{align*}
%   \bbC(\bbU) = \sum_{i_1 \neq i_2} \bbC^{i_1 i_2}
% \end{align*}
% where $\bbC^{i_1 i_2} = \bbC_{\bbA, \bbB}(\bbp_{i_1}, \bbp_{i_2})$, where $\bbA$ is set to the expected matrix payoff of player $i_1$ with player $i_2$ by take expectation over all players other than $i_1$ and $i_2$: $\bbA = \expectsub{s_{-i_1 i_2}}{U^{i_1}_{\bbs}}$ and, similarly, $\bbB = \expectsub{s_{-i_1 i_2}}{U^{i_2}_{\bbs}}$.
By Observation~\ref{obs::potential::bimatrix}, we know that the volume behavior of a potential game is equivalent to a corresponding coordination game in bimatrix game. In this section, we want to show, this holds even in the multi player setting.
\begin{lemma}\label{lem::multi::potential}
Suppose $\mathcal{P}$ is the potential function of a potential game $\bbU$. Let $\bbU^{\mathcal{P}}$ be a game that all players will receive $\mathcal{P}(\bbs)$ when players play strategies $\bbs$. Then
$C_{\bbU}(\bbp) = C_{\bbU^{\mathcal{P}}}(\bbp) \leq 0$.
%\begin{align*}
%C_{\bbU}(\bbp) = C_{\bbU^{\mathcal{P}}}(\bbp).
%\end{align*}
\end{lemma}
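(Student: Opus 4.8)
The plan is to reduce the multi-player statement to the two-player insight behind Observation~\ref{obs::potential::bimatrix}, using the local-equivalence machinery of Theorem~\ref{thm:equivalence}. By Theorem~\ref{thm:equivalence}, $C_{\bbU}(\bbp)$ equals $C_{\bbH}(\bbp)$ for the graphical game $\bbH$ with edge-payoffs $H^{ik}_{j\ell} = U^{ik}_{j\ell}$, and similarly $C_{\bbU^{\mathcal{P}}}(\bbp)$ equals $C_{\bbH'}(\bbp)$ where $\bbH'$ has edge-payoffs $(H')^{ik}_{j\ell} = (U^{\mathcal{P}})^{ik}_{j\ell}$ (the analogous $U$-quantities for the game $\bbU^{\mathcal{P}}$). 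Since $C_{\bbH}(\bbp)$ and $C_{\bbH'}(\bbp)$ each decompose as a sum of the edge-game contributions $C_{(\bbH^{i,k},(\bbH^{k,i})\trans)}(\bbp_i,\bbp_k)$ over pairs $i<k$ (the observation from~\cite{CP2019} recalled just before Theorem~\ref{thm:new-family-chaos-game}), it suffices to show that for each pair $i<k$ the edge-game of $\bbH$ and the edge-game of $\bbH'$ have the same $C$-value at $(\bbp_i,\bbp_k)$.

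The key step is to verify that, for each fixed pair $(i,k)$, the two edge-bimatrix-games differ by trivial matrices, so that Lemma~\ref{lem::trivial::matrix} applies. Concretely, I would compute $U^{ik}_{j\ell}$ for the potential game $\bbU$ and $(U^{\mathcal{P}})^{ik}_{j\ell}$ for $\bbU^{\mathcal{P}}$, both as expectations over $\bbx_{-(i,k)}$. The defining property of a potential game (Definition~\ref{def::potential::multi}) gives, for any two strategies $j,j'\in S_i$, $u_i(j,\bbs_{-i}) - u_i(j',\bbs_{-i}) = \mathcal{P}(j,\bbs_{-i}) - \mathcal{P}(j',\bbs_{-i})$, i.e.\ $u_i(\bbs) = \mathcal{P}(\bbs) + f_i(\bbs_{-i})$ for some function $f_i$ not depending on $s_i$. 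Taking expectations over $\bbx_{-(i,k)}$, this yields $U^{ik}_{j\ell} = (U^{\mathcal{P}})^{ik}_{j\ell} + \expectsub{\bbs_{-(i,k)}\sim\bbx_{-(i,k)}}{f_i(s_\ell, \bbs_{-(i,k)})}$, where the added term depends on $\ell$ but not on $j$ — hence is a trivial matrix (entries of the form $u_j + v_\ell$ with all $u_j$ equal to zero) in the $(j,\ell)$ indexing. The same argument applied to Player $k$'s side of the edge-game handles the transpose block. Therefore the edge-game $(\bbH^{i,k},(\bbH^{k,i})\trans)$ of $\bbU$ is obtained from the edge-game of $\bbU^{\mathcal{P}}$ by adding trivial matrices to both payoff matrices, so Lemma~\ref{lem::trivial::matrix} gives equality of their $C$-functions. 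Summing over all pairs $i<k$ gives $C_{\bbU}(\bbp) = C_{\bbU^{\mathcal{P}}}(\bbp)$.

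For the inequality $C_{\bbU^{\mathcal{P}}}(\bbp)\le 0$: the game $\bbU^{\mathcal{P}}$ is a coordination game (every player receives the common payoff $\mathcal{P}(\bbs)$), and by the discussion at the end of Section~\ref{sect:prelim} — extended to the multi-player setting — the function $C$ of a coordination game is non-positive. If a direct multi-player version of that fact is not already in hand, I would instead argue via the edge-game decomposition again: each edge-game of the graphical game $\bbH'$ associated with $\bbU^{\mathcal{P}}$ is a bimatrix coordination game (both players in the edge-game get $U^{\mathcal{P}}$-values, which are equal), so each summand $C_{(\cdot,\cdot)}(\bbp_i,\bbp_k)\le 0$ by the bimatrix result, and the sum is $\le 0$.

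\textbf{Main obstacle.} The delicate point is the bookkeeping in the second step: I must be careful that what is "trivial" is measured in the right index pair. The function $f_i(\bbs_{-i})$ still depends on $s_k$, so after taking the expectation over $\bbs_{-(i,k)}$ it becomes a vector indexed by $\ell\in S_k$ only — this is exactly the "$c_{-i,-k}$ does not depend on $i,k$" phenomenon already noted after Theorem~\ref{thm:equivalence}, and it is precisely why it lands in the trivial-matrix subspace rather than contributing to $C$. I also need to confirm that the edge-game decomposition of $C_{\bbH}$ holds for the specific graphical game produced by Theorem~\ref{thm:equivalence} and that adding trivial matrices to edge-games commutes with that decomposition; both follow from linearity of the decomposition and Lemma~\ref{lem::trivial::matrix} applied edge by edge, but the argument should state this explicitly.
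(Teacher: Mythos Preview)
Your proposal is correct and follows essentially the same route as the paper: reduce via Theorem~\ref{thm:equivalence} to the induced graphical games, decompose $C$ as a sum over edge-games, use the potential-game identity $u_i(\bbs)=\mathcal{P}(\bbs)+f_i(\bbs_{-i})$ to see that each edge-matrix of $\bbU$ differs from the corresponding edge-matrix of $\bbU^{\mathcal{P}}$ by a trivial matrix (the expectation of $f_i$ depends only on $\ell$), invoke Lemma~\ref{lem::trivial::matrix} edge by edge, and then conclude $\le 0$ because each edge-game of $\bbU^{\mathcal{P}}$ is a bimatrix coordination game. Your ``main obstacle'' discussion is exactly the bookkeeping point the paper also singles out.
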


In Appendix~\ref{sec::multi::potential}, we will discuss some situations where $C_{\bbU}(\bbp)$ is strictly less than $0$, thus OMWU is Lyapunov chaotic therein.

\bibliographystyle{plain}
\bibliography{game}

\newpage

\appendix

\normalsize

\section{Proofs in Section~\ref{sec::bimatrix}}

\begin{proof}[Proof of Lemma~\ref{lem::trivial::matrix}]
First, observe that it suffices to prove that the lemma holds when $\bbT^1$ is a trivial matrix and $\bbT^2$ is the zero matrix.
Then the lemma holds for any trivial matrices $\bbT^1,\bbT^2$ due to symmetry:
$C_{(\bbA,\bbB)}(\bbp,\bbq) = C_{(\bbA+\bbT^1,\bbB)}(\bbp,\bbq) = C_{(\bbA+\bbT^1,\bbB+\bbT^2)}(\bbp,\bbq)$.

Due to the definition of trivial matrix, we can write $T^1_{jk} = u_j + v_k$. Then
\begin{align*}
&C_{(\bbA+\bbT^1,\bbB)}(\bbp,\bbq) - C_{(\bbA,\bbB)}(\bbp,\bbq)\\
~=~ & - \expect{\left(A_{jk} + u_j + v_k - A_j - u_j - \sum_{\ell\in S_2} v_\ell y_\ell - A_k - v_k - \sum_{\ell\in S_1} u_\ell x_\ell\right)(B_{jk}-B_j-B_k)}\\
&\quad + \expect{A_{jk}+u_j + v_k} \cdot \expect{B_{jk}} + \expect{(A_{jk}-A_j-A_k)(B_{jk}-B_j-B_k)} - \expect{A_{jk}}\cdot \expect{B_{jk}}\\
~=~ & -\expect{\left( - \sum_{\ell\in S_2} v_\ell y_\ell - \sum_{\ell\in S_1} u_\ell x_\ell \right)(B_{jk}-B_j-B_k)} + \expect{u_j + v_k} \cdot \expect{B_{jk}}\\
~=~ & \expect{v_k + u_j} \cdot \expect{B_{jk}-B_j-B_k} + \expect{u_j + v_k} \cdot \expect{B_{jk}}.
\end{align*}
By recalling that $\expect{B_{jk}-B_j-B_k} = -\expect{B_{jk}}$, we have $C_{(\bbA+\bbT^1,\bbB)}(\bbp,\bbq) - C_{(\bbA,\bbB)}(\bbp,\bbq) = 0$.
\end{proof}

\begin{proof}[Proof of Observation~\ref{obs::potential::bimatrix}]
Let $\mathcal{P}_{jk}$ be the potential value of a potential game when Player $1$ plays strategy $j$ and Player $2$ plays strategy $k$.
Then according to the definition the potential function, for any $j_1$, $j_2$ and $k$,
\begin{align*}
  A_{j_1 k} - A_{j_2 k} = \mathcal{P}_{j_1 k} - \mathcal{P}_{j_2 k}.
  \end{align*}
In particular, for any $j,k$, $A_{jk} = \calP_{jk} + A_{1k} - \calP_{1k}$.
This implies that there exists $\bbv$ such that $A_{jk} = \mathcal{P}_{jk} + v_k$ for any $j$ and $k$.

Similarly, there exists $\bbu$ such that $B_{jk} = \mathcal{P}_{jk} + u_j$ for any $j$ and $k$.
This implies that any two-player potential games are coordination games plus trivial matrices.
\end{proof}

\begin{proof}[Proof of Theorem~\ref{thm:matrix-domination}]
%Recall that 
%\[
%C_{(\bbA,\bbB)}(\bbp,\bbq)
%= -\sum_{j\in S_1}~\sum_{k\in S_2}~x_j(\bbp) \cdot y_k(\bbq) \cdot
%(A_{jk} - [\bbA\cdot \bby(\bbq)]_j) \cdot (B_{jk} - [\bbB\trans \cdot \bbx(\bbp)]_k).
%\]
We first prove that if $\bbZ$ dominates $\bbC$, then $C_{(\bbA, \bbB)}(\bbp, \bbq)$ is always non-negative. By Observation~\ref{obs::C::dec},
  \begin{align*}
    C_{(\bbA, \bbB)}(\bbp, \bbq) &= C_{(\bbZ, -\bbZ)}(\bbp, \bbq) + C_{(\bbC, \bbC)}(\bbp, \bbq) \\
    &= C_{(\bbZ, -\bbZ)}(\bbp, \bbq) - C_{(\bbC, -\bbC)}(\bbp, \bbq) \\
    &= \frac{1}{4} \sum_{j, j', k, k'} x_j(\bbp) y_k(\bbq) x_{j'}(\bbp) y_{k'}(\bbq) \cdot \\
    &~~~~~~~~~~~~\left( \left( Z_{jk} + Z_{j'k'} - Z_{jk'} - Z_{j'k}\right)^2 - \left( C_{jk} + C_{j'k'} - C_{jk'} - C_{j'k}\right)^2 \right) \\
    &\geq 0.
  \end{align*}
  In contrast, if $\bbZ$ does not dominate $\bbC$, then there exist $\hat{j}$, $\hat{j}'$, $\hat{k}$, $\hat{k}'$ and $\delta > 0$ such that
  \begin{align*}
    \left( C_{\hat{j}\hat{k}} + C_{\hat{j}'\hat{k}'} - C_{\hat{j}\hat{k}'} - C_{\hat{j}'\hat{k}}\right)^2 \geq \left( Z_{\hat{j}\hat{k}} + Z_{\hat{j}'\hat{k}'} - Z_{\hat{j}\hat{k}'} - Z_{\hat{j}'\hat{k}}\right)^2 + \delta.
  \end{align*}
  For each $\eta > 0$, we construct $\bbp$ and $\bbq$ such that $x_{\hat{j}}(\bbp) = x_{\hat{j}'}(\bbp) = y_{\hat{k}}(\bbq) = y_{\hat{k}'}(\bbq) = \frac{1 - \eta}{2}$.
  Furthermore, we let $\Upsilon$ denote the maximum absolute value of all entries in matrices $\bbA$ and $\bbB$. Then, for all $j$ and $k$, $|Z_{jk}| \leq \Upsilon$ and $|C_{jk}| \leq \Upsilon$.
  Therefore,
  \begin{align*}
    C_{(\bbA, \bbB)}(\bbp, \bbq) &= C_{(\bbZ, -\bbZ)}(\bbp, \bbq) + C_{(\bbC, \bbC)}(\bbp, \bbq) \\
    &= C_{(\bbZ, -\bbZ)}(\bbp, \bbq) - C_{(\bbC, -\bbC)}(\bbp, \bbq) \\
    &= \frac{1}{4} \sum_{j, j', k, k'} x_j(\bbp) y_k(\bbq) x_{j'}(\bbp) y_{k'}(\bbq) \cdot \\
    &~~~~~~~~~~~~\left( \left( Z_{jk} + Z_{j'k'} - Z_{jk'} - Z_{j'k}\right)^2 - \left( C_{jk} + C_{j'k'} - C_{jk'} - C_{j'k}\right)^2 \right) \\
    &\leq - \delta \left(\frac{1 - \eta}{2}\right)^4 + |S_1|^2 \cdot |S_2|^2 \cdot \eta \cdot 16 \Upsilon^2.
  \end{align*}
  The last inequality holds as $\left( C_{jk} + C_{j'k'} - C_{jk'} - C_{j'k}\right)^2 - \left( Z_{jk} + Z_{j'k'} - Z_{jk'} - Z_{j'k}\right)^2 \leq 16 \Upsilon^2$. The value of $- \delta \left(\frac{1 - \eta}{2}\right)^4 + |S_1|^2 \cdot |S_2|^2 \cdot \eta \cdot 16 \Upsilon^2$ will be negative if we pick a small enough $\eta$.
\end{proof}

\begin{proof}[Proof of Observation~\ref{obs::C::dec}]
Consider a random process, where $j,j'\in S_1$ are randomly picked according to distribution $\bbx(\bbp)$, and $k,k'\in S_2$ are randomly picked according to distribution $\bby(\bbq)$.
Then the RHS of Observation~\ref{obs::C::dec} can be expressed as $\frac 14 \cdot \expect{(Z_{jk} + Z_{j'k'} - Z_{jk'} - Z_{j'k})^2}$.

Then we expand the squared term in the expectation. Observing the symmetries within the expansion, we immediately have
\[
\frac 14 \cdot \expect{(Z_{jk} + Z_{j'k'} - Z_{jk'} - Z_{j'k})^2} = \expect{(Z_{jk})^2} - \expect{Z_{jk} Z_{jk'}} - \expect{Z_{jk} Z_{j'k}} + \expect{Z_{jk} Z_{j'k'}}.
\]
Let $Z_j := [\bbZ \bby]_j$ and $Z_k = [\bbZ\trans \bbx]_k$. Then we have
\[
\expect{Z_{jk} Z_{jk'}} = \sum_{j,k} x_j y_k Z_{jk} \sum_{k'} y_{k'} Z_{jk'} = \sum_{j} x_j [\bbZ \bby]_j \sum_k y_k Z_{jk} = \sum_j x_j (Z_j)^2 = \expect{(Z_j)^2}.
\]
Similarly, $\expect{Z_{jk} Z_{j'k}} = \expect{(Z_k)^2}$. Lastly, $\expect{Z_{jk} Z_{j'k'}} = \expect{Z_{jk}}^2$.
Thus, the RHS of Observation~\ref{obs::C::dec} is simplified to
\[
\expect{(Z_{jk})^2} - \expect{(Z_j)^2} - \expect{(Z_k)^2} + \expect{Z_{jk}}^2.
\]

We complete the proof by noting that from the definition of $C_{(\bbZ,-\bbZ)}(\cdot)$ in Eqn.~\eqref{eq:Cxy},
$C_{(\bbZ,-\bbZ)}(\cdot)$ can be rewritten as
\[
\expect{(Z_{jk})^2} - \expect{Z_j Z_{jk}} - \expect{Z_k Z_{jk}} + \expect{Z_{jk}}^2,
\]
while $\expect{Z_j Z_{jk}} = \sum_j x_j Z_j \sum_k y_k Z_{jk} = \sum_j x_j Z_j Z_j = \expect{(Z_j)^2}$,
and similarly $\expect{Z_j Z_{jk}} = \expect{(Z_k)^2}$.
\end{proof}

\begin{proof}[Proof of Theorem~\ref{thm:matrix-strict-domination}]
We only need to prove $\bar{c}(S) = \inf_{(\bbp,\bbq)\in S} C_{(\bbA,\bbB)}(\bbp,\bbq) \geq \delta^2 \theta^2$. This is because matrix $\bbZ$ $\theta$-dominates $\bbC$, which implies there exist $j$, $j'$, $k$, and $k'$ such that
\begin{align*}
  \left(\bbZ_{jk} + \bbZ_{j'k'} - \bbZ_{jk'} - \bbZ_{j'k}\right)^2 \geq \left(\bbQ_{jk} + \bbQ_{j'k'} - \bbQ_{jk'} - \bbQ_{j'k}\right)^2 + \theta^2.
\end{align*}
By applying Observation~\ref{obs::C::dec}, $C_{(\bbZ, -\bbZ)}(\bbp, \bbq)  \geq C_{(\bbC, -\bbC)}(\bbp, \bbq) + \theta^2 \delta^2$, because
every $x_j(\bbp),y_k(\bbq)$ for $(\bbp,\bbq)\in S$ is at least $\delta$.
%$S = \left\{ (\bbp, \bbq)  \left| \forall j, k ~ x_j(\bbp) \geq \delta ~\texttt{and}~ y_k(\bbq) \geq \delta\right.\right\}$.
By noting that $C_{(\bbC, \bbC)}(\bbp, \bbq) = -C_{(\bbC, -\bbC)}(\bbp, \bbq)$ and $C_{(\bbA, \bbB)}(\bbp, \bbq) = C_{(\bbZ, -\bbZ)}(\bbp, \bbq) + C_{(\bbC, \bbC)}(\bbp, \bbq)$, the result follows.
\end{proof}

\begin{proof}[Proof of Lemma~\ref{lem::linear::program}]
A key observation is
\[
C_{(\bbC, -\bbC)}(\bbp, \bbq) = \min_{\bbg, \bbh}~\sum_{jk} x_j(\bbp) y_k(\bbq) \left(C_{jk} - g_j - h_k\right)^2.
\]
With this observation and comparing this with the definition of $r(\bbZ)$, it's easy to figure out that $C_{(\bbC, -\bbC)}(\bbp, \bbq) \leq r(\bbC)^2$.

To see $C_{(\bbC, -\bbC)}(\bbp, \bbq) \geq (r(\bbC)) \cdot \delta)^2$, we first let $\bbg^*$ and $\bbh^*$ to be the optimal choice of $\bbg$ and $\bbh$ in $C_{(\bbC, -\bbC)}(\bbp, \bbq) = \min_{\bbg, \bbh} \sum_{j,k} x_j(\bbp) y_k(\bbq) \left(C_{jk} - g_j - h_k\right)^2$. One immediate observation is \footnote{If this is not true, we can let $\bbg$ and $\bbh$ in $r(\bbZ)$ to be $\bbg_j = \bbg^*_j - \frac{\max_{j, k} \{C_{jk} - g^*_j - h^*_k\} +  \min_{j, k} \{C_{jk} - g^*_j - h^*_k\} }{2}$ and $\bbh_k = \bbh^*_k$. Then we can achieve $r(\bbC) =  \frac{\max_{j, k} \{C_{jk} - g^*_j - h^*_k\} -  \min_{j, k} \{C_{jk} - g^*_j - h^*_k\} }{2}$ which make $r(\bbC)$ smaller.
}
\begin{align*}
2 r(\bbC) \leq \max_{j, k} \{C_{jk} - g^*_j - h^*_k\} -  \min_{j, k} \{C_{jk} - g^*_j - h^*_k\}.
\end{align*}
Therefore,
\begin{align*}
  \max\left\{ \left(\max_{j, k} \{C_{jk} - g^*_j - h^*_k\}\right)^2, \left(\min_{j, k} \{C_{jk} - g^*_j - h^*_k\}\right)^2 \right\} \geq r(\bbC)^2.
\end{align*}
This immediately implies that $C_{(\bbC, -\bbC)}(\bbp, \bbq) \geq (r(\bbC)) \cdot \delta)^2$.
\end{proof}

\section{Local Equivalence of Volume Change between Normal-form and Graphical Games}\label{app:equivalence}

In this appendix, we concern the volume change of a learning algorithm in multi-player game.
We first recap from~\cite{CP2020} on how the volume change is computed for dynamical systems which are \emph{gradual} (i.e.~those governed by a small step-size),
followed by a continuous-time analogue of OMWU in games, which are crucial for analyzing the volume change of discrete-time OMWU.
Then we compute the volume changes of MWU and OMWU in multi-player graphical games and normal-form games respectively.
Once these are done, the proofs of Proposition~\ref{pr:opposite-multi} and Theorem~\ref{thm::linear::program} become apparent.

\subsection{Discrete-Time Dynamical Systems and Volume of Flow}

\newcommand{\vol}{\mathsf{vol}}

We consider discrete-time dynamical systems in $\rr^d$. Such a dynamical system is determined recursively by a starting point $\bbs(0)\in \rr^d$
and an update rule of the form $\bbs(t+1) = G(\bbs(t))$, for some function $G:\rr^d \ra \rr^d$.
Here, we focus on the special case when the update rule is \emph{gradual}, i.e.~it is in the form of
\[
\bbs(t+1) = \bbs(t) + \ep \cdot F(\bbs(t)),
\]
where $F:\rr^d\ra \rr^d$ is a smooth function and step-size $\ep>0$.
When $F$ and $\ep$ are given, the flow of the starting point $\bbs(0)$ at time $t$, denoted by $\Phi(t,\bbs(0))$, is simply the point $\bbs(t)$ generated by the above recursive update rule.
Then the flow of a set $S\subset \rr^d$ at time $t$, denoted by $\Phi(t,S)$, is the set $\left\{ \Phi(t,\bbs) ~|~ \bbs\in S \right\}$.
Since $F$ does not depend on time $t$, we have the following equality: $\Phi(t_1+t_2,S) = \Phi(t_2,\Phi(t_1,S))$.

By equipping $\rr^d$ with the standard Lebesgue measure, the \emph{volume} of a measurable set $S$, denoted by $\vol(S)$, is simply its measure.
Given a bounded and measurable set $S\subset \rr^d$, if the discrete flow in one time step maps $S$ to $S' = \Phi(1,S)$ injectively, then
by integration by substitution for multi-variables,
\begin{equation}\label{eq:Liouville-discrete}
\vol(S') ~=~ \int_{\bbs\in S} \det \left( \bbI + \ep \cdot \bbJ(\bbs) \right) \,\mathsf{d}V,
\end{equation}
where $\bbI$ is the identity matrix, and $\bbJ(\bbs)$ is the \emph{Jacobian} matrix defined below:
\begin{equation}\label{eq:Jacobian}
\bbJ(\bbs) ~=~
\begin{bmatrix}
\frac{\partial}{\partial s_1}F_1(\bbs) & \frac{\partial}{\partial s_2}F_1(\bbs) & \cdots & \frac{\partial}{\partial s_d}F_1(\bbs)\\
\frac{\partial}{\partial s_1}F_2(\bbs) & \frac{\partial}{\partial s_2}F_2(\bbs) & \cdots & \frac{\partial}{\partial s_d}F_2(\bbs)\\
\vdots & \vdots & \ddots & \vdots \\
\frac{\partial}{\partial s_1}F_d(\bbs) & \frac{\partial}{\partial s_2}F_d(\bbs) & \cdots & \frac{\partial}{\partial s_d}F_d(\bbs)
\end{bmatrix}.
\end{equation}

Clearly, analyzing the determinant in the integrand in~\eqref{eq:Liouville-discrete} is crucial in volume analysis; we call it the \emph{volume integrand}.
When the determinant is expanded using the Leibniz formula, it becomes a polynomial of $\ep$, in the form of $1 + C(\bbs) \cdot \ep^h + \calO(\ep^{h+1})$ for some integer $h\ge 1$.
Thus, when the step-size $\ep$ is sufficiently small, the sign of $C(\bbs)$ dictates on whether the volume expands or contracts.

\subsection{Continuous-Time Analogue of OMWU}

OMWU does not fall into the category of dynamical systems defined above, since its update rule is in the form of $\bbs(t+1) = G(\bbs(t),\bbs(t-1))$.
Fortunately, Cheung and Piliouras~\cite{CP2020} showed that OMWU can be well-approximated by the \emph{online Euler discretization} of a system of ordinary differential equations (ODE),
and thus it can be well-approximated by a dynamical system.

The ODE system is given below. $\bbp$ is a dual (cumulative payoff) vector variable,
$\bbu:\rrplus\ra \rr^d$ is the function such that $\bbu(t)$ gives the \emph{instantaneous payoff} vector at time $t$.
We assume that $\bbu$ is twice differentiable with bounded second-derivatives, and $\dot{\bbu}$ denotes the time-derivative of $\bbu$.
\begin{equation}\label{eq:DE-OptMD-general}
\dot \bbp ~=~ \bbu ~+~ \ep \cdot \dot{\bbu},
\end{equation}

\emph{Online Euler discretization} (OED) of~\eqref{eq:DE-OptMD-general} refers to the following time-discretization of the ODE system.
In applications, $\dot{\bbu}$ might not be explicitly given, and the sequence $\bbu(0),\bbu(1),\bbu(2),\cdots$
are available \emph{online} (i.e., at time $t$ we only have access of $\bbu(\tau)$ for $\tau=0,1,\cdots,t$).
As the discretization step is $\ep$, we approximate $\dot \bbu(t)$ by $(\bbu(t) - \bbu(t-1)) / \ep$.
By using this approximation, OED of~\eqref{eq:DE-OptMD-general} yields
\[
\bbp(t+1) ~=~ \bbp(t) ~+~ \ep \cdot \left[ \bbu(t) ~+~ \ep \cdot \frac{\bbu(t) - \bbu(t-1)}{\ep} \right] ~=~ \bbp(t) ~+~ \ep \cdot \left[ 2\cdot \bbu(t) - \bbu(t-1) \right],
\]
which is exactly the OMWU update rule in general context.

When compared the OED with the standard Euler discretization
\[
\bbp(t+1) ~=~ \bbp(t) ~+~ \ep \cdot \left[ \bbu(t) ~+~ \ep \cdot \dot{\bbu}(t) \right],
\]
OED incurs a local error that appears due to the approximation of $\dot \bbu(t)$. The local error can be bounded by $\calO(\ep^3)$.
Cheung and Piliouras~\cite{CP2020} showed that eventually the determinant of the volume integrand is a of the form $1 + C(\bbs) \cdot \ep^2 + \calO(\ep^3)$,
the local error does not affect the first and second highest-order terms, and hence can be ignored henceforth.

\subsection{MWU in Graphical Games}

Let $\bbH$ be a graphical game of $N$ players, where between every pair of Players $i$ and $k$, the payoff bimatrices are $(\bbH^{ik},(\bbH^{ki})\trans)$.
In the dual space, let $\bbp = (\bbp_1,\cdots,\bbp_N)$ denote the cumulative payoff profile,
and let $\bbx = (\bbx_1,\cdots,\bbx_N)$ denote the corresponding mixed strategy profile,
where $\bbx_i$ is a function of $\bbp_i$.
We will write $\bbx_i$ and $\bbx_i(\bbp_i)$ interchangeably.
The expected payoff to strategy $j$ of Player $i$ is
\[
u_{ij}(\bbp) ~=~ \sum_{\substack{k\in [N]\\k\neq i}} [\bbH^{ik} \cdot \bbx_k(\bbp_k)]_j,
\]
which will be used to compute the Jacobian matrices of MWU and OMWU.

For MWU, the Jacobian matrix $\bbJ$ is a squared matrix with each row and each column indexed by $(i,j)$, where $i$ is a Player and $j\in S_i$.
The precise values of its entries are given below:
\begin{equation}\label{eq:MWU-Jacob-1}
\forall j_1,j_2\in S_i,~~\ep J_{(i,j_1),(i,j_2)} ~=~ \ep \cdot \frac{\partial u_{ij_1}}{\partial p_{ij_2}} ~=~ 0
\end{equation}
and
\begin{equation}\label{eq:MWU-Jacob-2}
\forall i\neq k,~j\in S_i,~\ell\in S_k,~~\ep J_{(i,j),(k,\ell)} ~=~ \ep \cdot \frac{\partial u_{ij}}{p_{k\ell}} ~=~ \ep x_{k\ell} \cdot \left( H^{ik}_{j\ell} - [\bbH^{ik} \cdot \bbx_k]_j \right).
\end{equation}
Then by expansion using Leibniz formula, the determinant of $(\bbI + \ep \cdot \bbJ)$ is
\begin{align}
&~1 - \sum_{\substack{i\in [N]\\j\in S_i}} ~\sum_{\substack{k > i\\\ell\in S_k}} ~ (\ep J_{(i,j),(k,\ell)}) (\ep J_{(k,\ell),(i,j)}) ~+~ \calO(\ep^3) \nonumber\\
=&~1 ~-~ \ep^2 \cdot \sum_{\substack{i\in [N]\\j\in S_i}} ~\sum_{\substack{k > i\\\ell\in S_k}} ~x_{ij} x_{k\ell}
\left( H^{ki}_{\ell j} - [\bbH^{ki} \cdot \bbx_i]_\ell \right) \left( H^{ik}_{j\ell} - [\bbH^{ik} \cdot \bbx_k]_j \right) ~+~ \calO(\ep^3).\label{eq:MWU-graph-1}
\end{align}
By noting the similarity of the double summation to $C_{(\bbA,\bbB)}(\cdot)$ in~\eqref{eq:Cxy}, we can immediately rewrite the above expression as
\begin{equation}\label{eq:vol-int-MWU-graphical}
1 ~+~ \ep^2 \cdot \sum_{i,k: 1\le i<k\le N} C_{(\bbH^{ik},(\bbH^{ki})\trans)} (\bbp_i,\bbp_k) ~+~ \calO(\ep^3).
\end{equation}

\subsection{OMWU in Graphical Games}

For OMWU, as we pointed out already, we will first consider its continuous analogue first. Thus, we need to compute $\dot{\bbu}$ in the continuous-time setting.
By chain rule, we have
\[
\dot{u}_{ij}(\bbp) ~=~ \sum_{\substack{k\in [N]\\k\neq i\\ \ell\in S_k}} \frac{\partial{[\bbH^{ik} \cdot \bbx_k(\bbp_k)]_j}}{\partial p_{k\ell}} \cdot \difft{p_{k \ell}}
~=~ \sum_{\substack{k\in [N]\\k\neq i\\ \ell\in S_k}} x_{k\ell} \cdot \left( H^{ik}_{j\ell} - [\bbH^{ik} \cdot \bbx_k]_j \right) \cdot \difft{p_{k \ell}},
\]
and hence
\[
\difft{p_{ij}} ~=~ \sum_{\substack{k\in [N]\\k\neq i}} [\bbH^{ik} \cdot \bbx_k]_j ~+~ \ep \cdot \sum_{\substack{k\in [N]\\k\neq i\\ \ell\in S_k}}
x_{k\ell} \cdot \left( H^{ik}_{j\ell} - [\bbH^{ik} \cdot \bbx_k]_j \right) \cdot \difft{p_{k \ell}}.
\]
Note that this is a recurrence formulae for $\difft{\bbp}$. By iterating it\footnote{For the formality on why we can do iterations when $\ep$ is sufficiently small, see~\cite{CP2020}.}, we have
\[
\difft{p_{ij}} ~=~ \sum_{\substack{k\in [N]\\k\neq i}} [\bbH^{ik} \cdot \bbx_k]_j ~+~ \ep \cdot \sum_{\substack{k\in [N]\\k\neq i\\ \ell\in S_k}}
x_{k\ell} \cdot \left( H^{ik}_{j\ell} - [\bbH^{ik} \cdot \bbx_k]_j \right) \cdot \left( \sum_{\substack{r\in [N]\\r\neq k}} [\bbH^{kr} \cdot \bbx_r]_\ell \right) ~+~ \calO(\ep^2).
\]
Hence, its standard Euler discretization, which approximates the OED with local error $\calO(\ep^3)$, can be written as below (where we ignore the $\calO(\ep^3)$ error terms):
\[
p_{ij}(t+1) ~=~ p_{ij}(t) ~+~ \ep \sum_{\substack{k\in [N]\\k\neq i}} [\bbH^{ik} \cdot \bbx_k]_j ~+~ \ep^2 \sum_{\substack{k\in [N]\\k\neq i\\ \ell\in S_k}}
x_{k\ell} \cdot \left( H^{ik}_{j\ell} - [\bbH^{ik} \cdot \bbx_k]_j \right) \cdot \left( \sum_{\substack{r\in [N]\\r\neq k}} [\bbH^{kr} \cdot \bbx_r]_\ell \right).
\]
With this, we are ready to compute the Jacobian matrix $\bbJ$ for OMWU. For all $j_1,j_2\in S_i$,
\begin{equation}\label{eq:OMWU-Jacob-1}
\ep J_{(i,j_1),(i,j_2)} ~=~ \ep^2 \sum_{\substack{k\in [N]\\k\neq i\\ \ell\in S_k}} x_{k\ell} \cdot \left( H^{ik}_{j_1\ell} - [\bbH^{ik} \cdot \bbx_k]_{j_1} \right)
\cdot x_{ij_2}\cdot \left( H^{ki}_{\ell j_2} - [\bbH^{ki} \cdot \bbx_i]_\ell \right)
\end{equation}
and for all $i\neq k$, $j\in S_i$, $\ell\in S_k$,
\begin{equation}\label{eq:OMWU-Jacob-2}
\ep J_{(i,j),(k,\ell)} ~=~ \ep x_{k\ell} \left( H^{ik}_{j\ell} - [\bbH^{ik} \cdot \bbx_k]_j\right) + \calO(\ep^2)
\end{equation}
Then by expansion using Leibniz formula, the determinant of $(\bbI + \ep \cdot \bbJ)$ is
\[
1 + \left( \underbrace{\sum_{\substack{i\in [N]\\j\in S_i}} \ep J_{(i,j),(i,j)}}_{T_1} ~-~ \underbrace{\sum_{\substack{i\in [N]\\j\in S_i}}~\sum_{\substack{k>i\\ \ell\in S_k}}~(\ep J_{(i,j),(k,\ell)}) (\ep J_{(k,\ell),(i,j)})}_{T_2}  \right) ~+~ \calO(\ep^3).
\]
By a direct expansions on $T_1$ and $T_2$, it is easy to see that $T_1 = 2 T_2$ (after ignoring $\calO(\ep^3)$ terms). %\blue{[I think $T_1 = 2 T_2$. Is this $2$ because one is $k \neq i$ and one is $k > i$?]}
On the other hand, the coefficient of $\ep^2$ in $T_2$ is exactly the same as the double summation in~\eqref{eq:MWU-graph-1},
thus it equals to $-\sum_{i,k: 1\le i<k\le N} C_{(\bbH^{ik},(\bbH^{ki})\trans)} (\bbp_i,\bbp_k)$.
Overall, we show that the determinant equals to
\begin{equation}\label{eq:vol-int-OMWU-graphical}
1 ~-~ \ep^2 \cdot \sum_{i,k: 1\le i<k\le N} C_{(\bbH^{ik},(\bbH^{ki})\trans)} (\bbp_i,\bbp_k) ~+~ \calO(\ep^3).
\end{equation}

\begin{obs}\label{obs:oppo-graphical}
The coefficient of $\ep^2$ in~\eqref{eq:vol-int-OMWU-graphical} is the exact negation of the coefficient of $\ep^2$ in~\eqref{eq:vol-int-MWU-graphical}.
\end{obs}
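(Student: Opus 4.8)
The statement is an immediate consequence of the two volume-integrand formulas derived just above it, so the plan is essentially to read off and compare their $\ep^2$-coefficients. Writing $D(\bbp) := \sum_{1\le i<k\le N} C_{(\bbH^{ik},(\bbH^{ki})\trans)}(\bbp_i,\bbp_k)$, equation~\eqref{eq:vol-int-MWU-graphical} states that the MWU volume integrand is $1 + D(\bbp)\,\ep^2 + \calO(\ep^3)$, while equation~\eqref{eq:vol-int-OMWU-graphical} states that the OMWU volume integrand is $1 - D(\bbp)\,\ep^2 + \calO(\ep^3)$. Hence the coefficients of $\ep^2$ are $D(\bbp)$ and $-D(\bbp)$, which are exact negations of one another, and nothing further is needed.

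If one wants to see the phenomenon structurally without invoking the closed forms, here is the route I would take. First, in $\det(\bbI + \ep\bbJ)$ expanded by the Leibniz formula, the only contributions of order $\ep^2$ come either from a single diagonal entry $\ep J_{(i,j),(i,j)}$ that is itself $\calO(\ep^2)$, or from a pair of off-diagonal entries $\ep J_{(i,j),(k,\ell)}$ and $\ep J_{(k,\ell),(i,j)}$ whose product enters with a $-1$ (the sign of the transposition). For MWU the diagonal blocks of $\bbJ$ vanish by~\eqref{eq:MWU-Jacob-1}, so the $\ep^2$-term is purely this cross-product sum, and by the definition~\eqref{eq:Cxy} of $C$ it equals $+D(\bbp)\,\ep^2$. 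For OMWU the off-diagonal entries agree with those of MWU up to $\calO(\ep^2)$ by~\eqref{eq:OMWU-Jacob-2}, so the cross-product sum $T_2$ has the same $\ep^2$-coefficient $-D(\bbp)$; but now there is in addition the diagonal sum $T_1$, and a direct expansion of~\eqref{eq:OMWU-Jacob-1} against the leading terms of~\eqref{eq:OMWU-Jacob-2} shows $T_1 = 2T_2$ up to $\calO(\ep^3)$. Therefore the net $\ep^2$-term of OMWU is $T_1 - T_2 = T_2 = -D(\bbp)\,\ep^2$, the negation of the MWU term.

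There is no genuine obstacle here: all the substance lies in the preceding derivations of~\eqref{eq:vol-int-MWU-graphical} and~\eqref{eq:vol-int-OMWU-graphical}, in particular in establishing $T_1 = 2T_2$ and in matching the double sum with $C$ via~\eqref{eq:Cxy}. The only thing to be careful with is sign bookkeeping — the minus sign attached to each transposition in the Leibniz expansion, and the minus sign in the definition of $C$ — since dropping either one would make the two coefficients look equal rather than opposite.
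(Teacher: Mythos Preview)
Your proposal is correct and matches the paper exactly: the observation is stated without proof in the paper, and its content is precisely what you do in your first paragraph, namely read off the $\ep^2$-coefficients $D(\bbp)$ and $-D(\bbp)$ from~\eqref{eq:vol-int-MWU-graphical} and~\eqref{eq:vol-int-OMWU-graphical}. Your second paragraph simply recapitulates the derivations leading to those two equations (including the $T_1=2T_2$ step), which the paper carries out in the text preceding the observation rather than in a proof of it.
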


\subsection{Completing the Local Equivalence Proof}

%In $N$-player normal-form game, we use $\bbp_1,\bbp_2,\cdots,\bbp_N$ to denote the cumulative payoff vectors of the $N$ players,
%and $\bbx_1,\bbx_2,\cdots,\bbx_N$ to denote the mixed strategies of the $N$ players.
%At the mixed strategy profile $(\bbx_1,\bbx_2,\cdots,\bbx_N)$, we let
%\begin{align*}
%A^{i_1 i_2 \cdots i_g}_{j_1 j_2 \cdots j_g} ~=~& \text{the expected payoff to Player $i_1$ when: for $1\le f \le g$, Player $i_f$ picks strategy $j_f$,}\\
%& \text{while for all other players, the pick a strategy following her own mixed strategy.}
%\end{align*}
In a multiplayer normal-form game $\bbG$, recall that notation~\eqref{eq:U}.
We point out the following formulae:
\begin{align*}
\frac{\partial U^{i_1 i_2 \cdots i_g}_{j_1 j_2 \cdots j_g}}{\partial p_{ij}} &~=~ 0\hspace*{2in}\text{if}~i\in \{i_1,i_2,\cdots,i_g\};\\
\frac{\partial U^{i_1 i_2 \cdots i_g}_{j_1 j_2 \cdots j_g}}{\partial p_{ij}} &~=~ x_{ij} \cdot \left( U^{i_1 i_2 \cdots i_g i}_{j_1 j_2 \cdots j_g j} - U^{i_1 i_2 \cdots i_g}_{j_1 j_2 \cdots j_g} \right)\hspace*{0.4in}\text{if}~i\notin \{i_1,i_2,\cdots,i_g\}.
\end{align*}

\paragraph{MWU.}
Here, MWU update rule is $p_{ij}(t+1) = p_{ij}(t) + \ep \cdot U^i_j$. When computing the Jacobian matrix for this update rule using the formulae above,
and comparing it with the Jacobian matrix computed in~\eqref{eq:MWU-Jacob-1} and~\eqref{eq:MWU-Jacob-2},
it is immediate that they are the same by setting $H^{ik}_{j\ell} = U^{ik}_{j\ell}$.
This derives~\eqref{eq:Cxy-multi}, and completes the proof of Theorem~\ref{thm:equivalence}.

\paragraph{OMWU.}
As before, we use the continuous analogue and compute $\dot{\bbu}$. By the chain rule and the above formulae, we have
\[
\dot{u}_{ij}(\bbp) ~=~ \sum_{\substack{k\in [N]\\k\neq i\\ \ell\in S_k}} \frac{\partial U^i_j}{\partial p_{k\ell}}\cdot \difft{p_{k\ell}} ~=~
\sum_{\substack{k\in [N]\\k\neq i\\ \ell\in S_k}} x_{k\ell} \cdot \left( U^{ik}_{j\ell} - U^i_j \right) \cdot \difft{p_{k\ell}}
\]
and hence
\[
\difft{p_{ij}} ~=~ U^i_j ~+~ \ep \cdot \sum_{\substack{k\in [N]\\k\neq i\\ \ell\in S_k}} x_{k\ell} \cdot \left( U^{ik}_{j\ell} - U^i_j \right) \cdot \difft{p_{k\ell}}.
\]
Iterating the above recurrence yields
\[
\difft{p_{ij}} ~=~ U^i_j ~+~ \ep \cdot \sum_{\substack{k\in [N]\\k\neq i\\ \ell\in S_k}} x_{k\ell} \cdot \left( U^{ik}_{j\ell} - U^i_j \right) \cdot U^k_\ell ~+~ \calO(\ep^2).
\]
Its standard Euler discretization is
\[
p_{ij}(t+1) ~=~ p_{ij}(t) + \ep \cdot U^i_j ~+~ \ep^2 \cdot \sum_{\substack{k\in [N]\\k\neq i\\ \ell\in S_k}} x_{k\ell} \cdot \left( U^{ik}_{j\ell} - U^i_j \right) \cdot U^k_\ell.
\]
Now we compute the Jacobian matrix for this standard Euler discretization. For $j_1,j_2\in S_i$,
\[
\ep J_{(i,j_1),(i,j_2)} ~=~ \ep^2 \sum_{\substack{k\in [N]\\k\neq i\\ \ell\in S_k}} x_{k\ell} \cdot \left( U^{ik}_{j_1\ell} - U^i_{j_1} \right) \cdot x_{ij_2} \cdot \left(U^{ki}_{\ell j_2} - U^k_\ell \right)
\]
and for all $i\neq k$, $j\in S_i$, $\ell\in S_k$,
\[
\ep J_{(i,j),(k,\ell)} ~=~ \ep x_{k\ell} \left(U^{ik}_{j\ell} - U^i_j\right) + \calO(\ep^2).
\]
By comparing this computed Jacobian matrix with the Jacobian matrix computed in~\eqref{eq:OMWU-Jacob-1} and~\eqref{eq:OMWU-Jacob-2},
it is immediate to see that their determinants are the same (after ignoring all $\calO(\ep^3)$ terms) by setting $H^{ik}_{j\ell} = U^{ik}_{j\ell}$.
With the result we just derived, together with Observation~\ref{obs:oppo-graphical} and Theorem~\ref{thm:equivalence}, Proposition~\ref{pr:opposite-multi} follows.

\section{Multi-player Potential Game}\label{sec::multi::potential}
\begin{proof}[Proof of Lemma~\ref{lem::multi::potential}]
  We know that the potential game satisfies the following condition:
  \begin{align*}
    \mathcal{P}(s_i, s_{-i}) - \mathcal{P}(s'_i, s_{-i}) = u_i(s_i, s_{-i}) - u_i(s_{i'}, s_{-i}).
  \end{align*}
  Therefore, $u_i(s_i, s_{-i}) = \mathcal{P}(s_i, s_{-i}) + v^i(s_{-i})$. Note that $v^i(s_{-i})$ does not depend on $s_i$, the strategy of player $i$.

  By Theorem~\ref{thm:equivalence}, let $\bbH(\bbU)$ be the induced graphical game of $\bbU$ and $\bbH(\bbU^{\mathcal{P}})$ be the induced graphical game of $\bbU^{\mathcal{P}}$. Then,
  \begin{align*}
    C_{\bbU}(\bbp) &= C_{\bbH(\bbU)}(\bbp)& \mbox{(Theorem~\ref{thm:equivalence})}\\
    & = \sum_{i, k} C_{(\bbH(\bbU)^{ik}, (\bbH(\bbU)^{ki})\trans)}(\bbp_i, \bbp_k) &\mbox{(By \eqref{eq:vol-int-MWU-graphical})} \\%&\mbox{(By \eqref{eq:vol-int-MWU-graphical} or \eqref{eq:vol-int-OMWU-graphical})} \\
    & = \sum_{i, k} C_{(\bbH(\bbU^{\mathcal{P}})^{ik}, (\bbH(\bbU^{\mathcal{P}})^{ki})\trans)}(\bbp_i, \bbp_k) & \mbox{(see explanation below)}\\
    & = C_{\bbH(\bbU^{\mathcal{P}})}(\bbp) & \mbox{(By \eqref{eq:vol-int-MWU-graphical})} \\ %\mbox{(By \eqref{eq:vol-int-MWU-graphical} or \eqref{eq:vol-int-OMWU-graphical})}\\
    & = C_{\bbU^{\mathcal{P}}}(\bbp).&\mbox{(Theorem~\ref{thm:equivalence})}
    \end{align*}
    The third equality holds as the difference between $\bbH(\bbU)^{ik}$ and $\bbH(\bbU^{\mathcal{P}})^{ik}$ is a trivial matrix:
    \begin{align*}
      \bbH(\bbU)^{ik}_{jl} = \bbU^{ik}_{jl} = \left(\bbU^{\mathcal{P}}\right)^{ik}_{jl} + \mathbf{E}_{-(i,k)}\left[v^i(s_{-i})\right] = \bbH(\bbU^{\mathcal{P}})^{ik}_{jl} + \mathbf{E}_{-(i,k)}\left[v^i(s_{-i})\right];
    \end{align*}
    where \footnote{$\mathbf{E}_{-(i,k)}\left[v^i(s_{-i})\right]$ is the expectation over all the strategies taken by the players other than $i$ and $k$ and $v^i(s_{-i})$ does not depend on the strategy taken by player $i$.} $\mathbf{E}_{-(i,k)}\left[v^i(s_{-i})\right]$ doesn't depend on $j$, the strategy of player $i$, and only depends on $l$, the strategy of player $k$. The same argument applies for $(\bbH(\bbU)^{ki})\trans$ and $(\bbH(\bbU^{\mathcal{P}})^{ki})\trans$.

    To see $C_{\bbU}(\bbp) \leq 0$, observe that the induced graphical game of $\bbU^{\mathcal{P}}$ between player $i$ and $k$, $(\bbH(\bbU^{\mathcal{P}})^{ik}, (\bbH(\bbU^{\mathcal{P}})^{ki})\trans)$, is also a bimatrix coordination game, which implies $C_{(\bbH(\bbU^{\mathcal{P}})^{ik}, (\bbH(\bbU^{\mathcal{P}})^{ki})\trans)}(\cdot) \leq 0$.
    As $C_{\bbU}(\bbp) = \sum_{i, k} C_{(\bbH(\bbU^{\mathcal{P}})^{ik}, (\bbH(\bbU^{\mathcal{P}})^{ki})\trans)}(\bbp_i, \bbp_k)$, the result follows.
\end{proof}

Next, we identify several cases such that $C_{\bbU}(\bbp)$ is strictly negative in the region
\begin{align*}
  S^{\delta} = \{ \bbx | \forall i, j ~ x_{ij} > \delta\}.
\end{align*}
The conditions we pose are on the corresponding potential function $\mathcal{P}$. Note that $\bbH(\bbU^{\mathcal{P}})^{ik}$, the induced edge-game between player $i$ and $k$, is also a coordination game,~i.e. $\bbH(\bbU^{\mathcal{P}})^{ik} = (\bbH(\bbU^{\mathcal{P}})^{ki})\trans$.
\begin{itemize}
  \item Case $1$:
  \begin{align*}
    \min_{\bbx, \bbg, \bbh}\sum_{1\le i < k \le N} ~~\sum_{j\in S_i,\ell\in S_k} ~\left(P^{ik}_{j\ell} - g^{ik}_j - h^{ik}_\ell\right)^2 \ge \theta,
  \end{align*}
   where $P^{ik}_{j\ell} = \mathbb{E}_{\bbs_{-(i, k)}}\left[\mathcal{P}(s_i = j, s_k = \ell, \bbs_{-(i, k)})\right]$. With this condition, we can prove that $C_{\bbU}(\bbp) \leq - \theta \delta^2$ for any $\bbp$ in $S^{\delta}$. One key observation for this is true is that \begin{align*}
   C_{\bbU}(\bbp) &= \sum_{i,k} C_{(\bbH(\bbU^{\mathcal{P}})^{ik}, (\bbH(\bbU^{\mathcal{P}})^{ki})\trans)}(\bbp_i, \bbp_k) \\
   &= - \sum_{i,k} \sum_{j,\ell} x_{ij}(\bbp_i) x_{k\ell}(\bbp_k)\left(P^{ik}_{j\ell} - g^{ik}_j - h^{j\ell}_\ell\right)^2,
 \end{align*}
 as $\bbH(\bbU^{\mathcal{P}})^{ik} = {\bbU^{\mathcal{P}}}^{ik} = \bbP^{ik}$.
 
  \item Case $2$:
  
  If $\bbU$ is a graphical game, then if there exists a pair of player $i_1$ and $i_2$, such that the game between $i_1$ and $i_2$ is a non-trivial game, then $C_{\bbU}$ will be strictly negative in $S^{\delta}$. 

  \item Case $3$:

  Consider the payoff matrix of $\bbU^{\mathcal{P}}$, the coordination game, between players $i_1$ and $i_2$ given a strategy profile of the other players.
  There are total $\prod_{i \neq i_1, i_2} n_i$ such matrices, one for each strategy profile of the other players, and each matrix is of dimension $n_{i_1} \times n_{i_2}$.
  We call these matrices the \emph{projected matrices} for players $i_1,i_2$.

  Let $\calM$ denote the matrix space of $n_{i_1} \times n_{i_2}$. %, denoted by $\calM$, we can construct a set of bases of cardinality $n_{i_1} \times n_{i_2}$.
  On the other hand, trivial matrices form a subspace of dimension $n_{i_1} + n_{i_2}-1$.\footnote{Recall that
  a trivial matrix $\bbT$ can be represented as $\{u_j + v_k\}_{j, k}$.
  Consider the natural linear map $L$ such that $L(u_1,u_2,\cdots,u_{n_{i_1}},v_1,v_2,\cdots,v_{n_{i_2}})$ maps to the trivial matrix $\bbT$.
  Note that the kernel of $L$ is of dimension $1$, since if $L(u_1,u_2,\cdots,u_{n_{i_1}},v_1,v_2,\cdots,v_{n_{i_2}})$ is the zero matrix,
  then we must have $v_k = -u_j$ for all $j,k$, and hence the kernel of $L$ must be the span of the vector $(\underbrace{1,1,\cdots,1}_{\text{the $u$ part}},\underbrace{-1,-1,\cdots,-1}_{\text{the $v$ part}})$.
  Thus, the dimension of all trivial matrices is the dimension of the domain of $L$, which is $n_{i_1} + n_{i_2}$, minus the dimension of the kernel of $L$.}
  Let's call this the \emph{trivial space}, denoted by $\calT$.

  We consider the direct decomposition $\calM = \calT \oplus \calV$.
  %We can let the first $n_{i_1} + n_{i_2}$ bases be a set of basis for the trivial space, and then all other bases will be orthogonal to the trivial space.
  Let a set of bases of $\calM$ be $\bbB_1, \bbB_2, \bbB_3, \cdots, \bbB_{n_{i_1}n_{i_2}}$,
  where the first $n_{i_1} + n_{i_2} - 1$ bases form a basis of $\calT$, and the remaining bases form a basis of $\calV$. Without loss of generality, we assume that all bases are of $\mathtt{L}_2$ norm $1$.\footnote{Here, the norm is defined w.r.t.~the standard Frobenius matrix inner product.}
%  \red{We do not require the bases to be orthonormal to each other; thus, without loss of generality, we may assume that in each of the matrices $\bbB_j$, its maximum absolute entry is exactly $1$.}

  Given the above-mentioned bases of $\calM$, each of the projected matrices can be written into a unique linear combination of these bases.
  Now, suppose there is a base $\bbB_\ell$ for $l \ge n_{i_1} + n_{i_2}$ (i.e.~this base is in the set of bases for $\calV$),
  such that all projected matrices have non-positive (or non-negative) coefficients of this base,
  and at least one of these projected matrices (which we call a \emph{special projected matrix}) has strictly negative (or strictly positive) coefficient of the base.
  Then we claim that $C_{(\bbH(\bbU^{\mathcal{P}})^{i_1 i_2}, (\bbH(\bbU^{\mathcal{P}})^{i_2 i_1})\trans)}(\bbp_{i_1}, \bbp_{i_2})$ will be strictly negative in $S^{\delta}$.
  This is because %$(\bbH(\bbU^{\mathcal{P}})^{i_1 i_2}, (\bbH(\bbU^{\mathcal{P}})^{i_2 i_1})\trans)$,
  $\bbH(\bbU^{\mathcal{P}})^{i_1 i_2}$ is a convex combination of all those projected matrices,
  and by our assumption above, when $\bbH(\bbU^{\mathcal{P}})^{i_1 i_2}$ is expressed as the linear combinations of the bases of $\calM$,
  the coefficient of $\bbB_\ell$ is strictly negative (or strictly positive),
  thus $\bbH(\bbU^{\mathcal{P}})^{i_1 i_2}$ cannot be a trivial matrix.

  %also has a component orthogonal to the trivial space, which implies this game is not a trivial game.
  Suppose further that there exists $\theta > 0$ such that a special projected matrix has negative (or positive) coefficient for $\bbB_\ell$ which is smaller (or bigger) than $-\theta$ (or $\theta$),
  then we are guaranteed that $\bbH(\bbU^{\mathcal{P}})^{i_1 i_2}$ is bounded away from $\calT$ for a distance of $\theta \delta^{N-2}$,\footnote{To see why,
  when the coefficient for $\bbB_\ell$ is bounded away from zero, we are guaranteed that the special projected matrix has a strictly positive distance from $\calT$,
  and this distance is at least $\theta$. Then $\bbH(\bbU^{\mathcal{P}})^{i_1 i_2}$, which is a convex combination of all projected matrices
  where each projected matrix (in particular, the special projected matrix) has a weight at least $\delta^{N-2}$,
  has a strictly positive distance from $\calT$ too, which is at least $\theta \delta^{N-2}$.} and hence as the calculations below show,
  $C_{(\bbH(\bbU^{\mathcal{P}})^{i_1 i_2}, (\bbH(\bbU^{\mathcal{P}})^{i_2 i_1})\trans)}(\bbp_{i_1}, \bbp_{i_2}) \leq - \theta^2 \delta^{2N-2}$.
  If there exists a pair of player $i_1$ and $i_2$ such that this condition holds, then $C_{\bbU} \leq - \theta^2 \delta^{2N-2}$.%\theta \delta^{N-2}$.

  % It is unclear to us what is the explicit expression of $g(\theta,\delta)$, but we are sure that it is strictly positive,
  % since for any coordination game $(\bbC,\bbC)$, $C_{(\bbC,\bbC)}(\bbp_{i_1},\bbp_{i_2}) = 0$ when $(\bbp_{i_1},\bbp_{i_2}) \in S^\delta$ if and only if $\bbC$ is a trivial matrix,
  % while $C_{(\bbC,\bbC)}(\bbp_{i_1},\bbp_{i_2})$ is continuous w.r.t.~$\bbC,\bbp_{i_1},\bbp_{i_2}$.
  % Thus, when $\bbC$ is bounded away from the set of trivial matrices $\calT$ (which is true for $\bbH(\bbU^{\mathcal{P}})^{i_1 i_2}$, as discussed in the previous footnote),
  % a strictly negative upper bound for $C_{(\bbC,\bbC)}(\bbp_{i_1},\bbp_{i_2})$ must exist.}
  %This is because
  \begin{align*}
    &C_{(\bbH(\bbU^{\mathcal{P}})^{i_1 i_2}, (\bbH(\bbU^{\mathcal{P}})^{i_2 i_1})\trans)}(\bbp_{i_1}, \bbp_{i_2})\\
    =& -\min_{g, h} \sum_{jl} x_{i_1 j}(\bbp_{i_1}) x_{i_2, l}(\bbp_{i_2}) (\bbH(\bbU^{\mathcal{P}})^{i_1 i_2} - g_j - h_k)^2 \\
    \leq& -\min_{g, h} \delta^2 \sum_{jl} (\bbH(\bbU^{\mathcal{P}})^{i_1 i_2} - g_j - h_k)^2 \\
    =& -\delta^2 \sum_{jl} (\bbH(\bbU^{\mathcal{P}})^{i_1 i_2} - g^*_j - h^*_k)^2 \\
    \leq& -\delta^2 (\theta \delta^{N-2})^2,
  \end{align*}
  where $\{g^*_j + h^*_k\}_{jk}$ is projection of  $\bbH(\bbU^{\mathcal{P}})^{i_1 i_2}$ on the trivial space.
  The first inequality follows as $\bbp \in S^{\delta}$; the second equality holds as the projection minimizing the distance to the trivial space, and the final inequality comes from the distance from $\bbH(\bbU^{\mathcal{P}})^{i_1 i_2}$ to the trivial space.
  
\end{itemize}

For all these cases, we can have OMWU is $C_{\bbU}$ to be strictly negative in domain $S^{\delta}$, which implies OMWU is Lyapunov chaotic in $S^{\delta}$.

\end{document}